\g@addto@macro\bfseries{\boldmath}
\g@addto@macro\mdseries{\unboldmath}
\g@addto@macro\normalfont{\unboldmath}
\g@addto@macro\rmfamily{\unboldmath}
\g@addto@macro\upshape{\unboldmath}
\newcommand{\support}{\operatorname{support}}
\newcommand{\cC}{\mathcal{C}\xspace}
\newcommand{\eps}{\epsilon}
\newcommand{\dset}{{\mathcal D}}
\newcommand{\cset}{{\mathcal C}}
\newcommand{\supp}{\textsf{supp}}
\newcommand{\vecw}{\textbf{w}}
\newcommand{\vecv}{\textbf{v}}
\newcommand{\vecf}{\textbf{f}}
\newcommand{\ones}{\mathbf{1}}
\newcommand{\minrow}{\textsc{MinRow}}
\newcommand{\winit}{w^{\operatorname{init}}}
\newcommand{\vecwinit}{\vecw^{\operatorname{init}}}
\newcommand{\textsol}{\operatorname{sol}}
\newcommand{\textbest}{\operatorname{best}}
\newcommand{\norm}[1]{\left\lVert#1\right\rVert}
\newcommand{\what}{\widehat{\textbf{w}}}
\newcommand{\phireal}{\phi^{\operatorname{mwu}}}
\newcommand{\wreal}{\vecw^{\operatorname{mwu}}}
\newcommand{\vreal}{\vecv^{\operatorname{mwu}}}
\newcommand{\wapx}{\vecw}
\newcommand{\wsol}{\vecw^{\operatorname{sol}}}
\newcommand{\wbest}{\vecw^{\operatorname{best}}}
\newcommand{\ot}{\tilde{O}}
\renewcommand{\epsilon}{\varepsilon}
\newcommand{\polylog}{\operatorname{polylog}}
\newcommand{\bset}{{\mathcal B}}
\newcommand{\val}{{\sf val}}
\newcommand{\opt}{{\sf OPT}}
\newcommand{\mincut}{{\sf mincut}}
\declaretheorem[numberlike=theorem]{invariant}
\crefname{algorithm}{Algorithm}{Algorithms}
\Crefname{algorithm}{Algorithm}{Algorithms}
\newcommand{\cg}{{\sf cong}}
\title{Approximating $k$-Edge-Connected Spanning Subgraphs via a
  Near-Linear Time LP Solver}
\titlerunning{Approximating $k$-ECSS via a Near-Linear Time LP Solver}
\author{Parinya	Chalermsook}{Aalto University, Espoo, Finland}{	parinya.chalermsook@aalto.fi}{}{}
\author{Chien-Chung	Huang}{École Normale Supérieure, Paris, France}{
  Chien-Chung.Huang@ens.fr	}{}{}
\author{Danupon	Nanongkai}{University of Copenhagen, Copenhagen, Denmark \and KTH,  Stockholm, Sweden}{danupon@gmail.com}{}{}
\author{Thatchaphol Saranurak}{University of Michigan, Ann Arbor, MI, USA}{thsa@umich.edu}{}{}%	
\author{Pattara Sukprasert}{Northwestern University, Evanston, IL, USA} {pattara.sk127@gmail.com}{}{}
\author{Sorrachai Yingchareonthawornchai}{Aalto University, Espoo, Finland}{sorrachai.yingchareonthawornchai@aalto.fi}{}{} 
\authorrunning{P. Chalermsook et al.} %J. Open Access and J.\,R. Public} %TODO mandatory. First: Use abbreviated first/middle names. Second (only in severe cases): Use first author plus 'et al.'
\keywords{Approximation Algorithms, Data Structures} %TODO mandatory; please add comma-separated list of keywords
\begin{document}

\maketitle

\begin{abstract}
	In the $k$-edge-connected spanning subgraph ($k$ECSS) problem, our goal is to compute a minimum-cost sub-network that is resilient against up to $k$ link failures: Given an $n$-node $m$-edge graph with a cost function on the edges, our goal is to compute a minimum-cost $k$-edge-connected spanning subgraph. This NP-hard problem generalizes the  minimum spanning tree problem and is the ``uniform case'' of a much broader class of survival network design problems (\textsf{SNDP}).
	A factor of two has remained the best approximation ratio for polynomial-time algorithms for the whole class of \textsf{SNDP}, even for a special case of $2$ECSS.
	The fastest $2$-approximation algorithm is however rather slow, taking $O(mn k)$ time [Khuller, Vishkin, STOC'92]. A faster time complexity of $O(n^2)$ can be obtained, but with a higher approximation guarantee of $(2k-1)$  [Gabow, Goemans, Williamson, IPCO'93].

	Our main contribution is an algorithm that  $(1+\epsilon)$-approximates the optimal fractional solution in $\tilde O(m/\epsilon^2)$ time (independent of $k$), which can be turned into a $(2+\epsilon)$ approximation algorithm that runs in time $\tilde O\left(\frac{m}{\epsilon^2} + \frac{k^2n^{1.5}}{\epsilon^2}\right)$ for (integral) $k$ECSS; this improves  the running time of the aforementioned results while keeping the approximation ratio arbitrarily close to a factor of two.

\end{abstract}

\section{Introduction}

In the $k$-Edge-Connected Spanning Subgraph problem ($k$ECSS), we are given an undirected $n$-node $m$-edge graph $G=(V,E)$ together with edge costs, and want to find a minimum-cost $k$-edge connected spanning subgraph.\footnote{\label{foot:intro:sub-multigraph}Note that this problem should not be confused with a variant that allows to pick the same edge multiple time, which is sometimes also called $k$ECSS (e.g.,~\cite{ChekuriQ17}). We follow the convention in \cite{CzumajL07} and call the latter variant minimum-cost $k$-edge connected spanning sub-multigraph ($k$ECSSM) problem. (See also the work by Pritchard~\cite{Pritchard10}.)}
For $k=1$, this is simply the minimum spanning tree problem, and thus can be solved in $O(m)$ time~\cite{karger1995randomized}.
For $k\geq 2$, the problem is a classical NP-hard problem whose first approximation algorithm was given almost four decades ago, where Frederickson and Jaja \cite{FredericksonJ81} gave a $3$-approximation algorithm that runs in $O(n^2)$ time for the case of $k=2$.
The approximation ratio was later improved to $2$ by an
$\tilde O(mnk)$-time
algorithm of Khuller and Vishkin~\cite{KhullerV94}.\footnote{$\tilde O$ hides $\polylog(n)$ factor.}
This approximation factor of $2$  has remained the best for more than 30 years, even for a special case of $2$ECSS called the weighted tree augmentation problem.
When the running time is of the main concern, the fastest known algorithm takes  $O(n^2)$  time at the cost of a significantly higher $(2k-1)$-approximation guarantee, due to Gabow, Goemans, and Williamson~\cite{GabowGW98}.

This above state-of-the-arts leave a big gap between
algorithms achieving the best approximation ratio and the best time complexity. This gap exists even for $k=2$.
In this paper, we improve the running time of both aforementioned algorithms of~\cite{KhullerV94,GabowGW98} while keeping the approximation ratio arbitrarily close to two.
Our main contribution is a near-linear time algorithm that $(1+\epsilon)$-approximates the optimal {\em fractional} solution.

\begin{theorem}\label{thm:intro:fractional $k$ECSS}
For any $\epsilon > 0$, there is a randomized $\widetilde{O}(m/\epsilon^2)$-time algorithm that outputs a $(1+\epsilon)$-approximate {\em fractional} solution for $k$ECSS.
\end{theorem}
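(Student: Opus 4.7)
The plan is to apply the multiplicative-weights-update (MWU) framework to the standard $k$ECSS LP relaxation
\[
\min c^{\top} x \quad \text{s.t.}\ x(\delta(S)) \geq k \ \text{for every cut}\ S,\ x \geq 0,
\]
via its cut-packing dual
\[
\max\ k \sum_S y_S \quad \text{s.t.}\ \sum_{S : e \in \delta(S)} y_S \leq c_e\ \text{for every}\ e,\ y \geq 0.
\]
The primal has $m$ variables and exponentially many constraints while the dual is reversed, so I would run MWU on the dual: maintain a weight $w_e$ per edge (only $m$ weights to store) and use a column-generation oracle on the cut side.

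Given the current $w$, the oracle must return the cut $S^{*}$ maximizing $k/\sum_{e \in \delta(S)} (w_e/c_e)$, which is exactly a minimum cut in $G$ reweighted by $w_e/c_e$. For this I would plug in Karger's randomized $\widetilde O(m)$-time minimum-cut algorithm; the randomization in this oracle accounts for the ``randomized'' qualifier in the theorem. Each iteration increments $y_{S^{*}}$ by a carefully chosen step and multiplicatively boosts $w_e$ for $e \in \delta(S^{*})$ via $w_e \leftarrow w_e \cdot \exp(\epsilon \Delta y_{S^{*}}/c_e)$. A standard Young/Plotkin-Shmoys-Tardos width analysis, applied after rescaling costs against the global minimum cut of $G$ so that the effective width is polylogarithmic, shows convergence in $T = \widetilde O(1/\epsilon^{2})$ iterations to a $(1-\epsilon)$-approximate dual. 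The per-iteration cost of $\widetilde O(m)$ for the oracle plus $O(m)$ for the weight updates gives the claimed $\widetilde O(m/\epsilon^{2})$ total.

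To extract the primal $x$ asked for in the theorem I would invoke the standard MWU primal-dual correspondence: the time-average of the oracle's indicator vectors $\mathbf{1}_{\delta(S_t^{*})}$, rescaled by the per-iteration step sizes, yields a vector satisfying every cut constraint to within a $(1-\epsilon)$ factor and of cost within $(1+\epsilon)$ of the dual objective. A single closing scale-up by $1+O(\epsilon)$ then produces a genuinely feasible fractional $k$ECSS whose cost is at most $(1+\epsilon)$ times the LP optimum.

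The hard part will be width control. A cut can have capacity as small as $k$, while individual edge costs vary widely, so the naive MWU width can be $\Omega(n)$ and inflate the iteration count well beyond the $\widetilde O(1/\epsilon^{2})$ budget. The cleanest remedy I see is a one-time normalization of edge costs relative to the global minimum cut of $G$ (a single $\widetilde O(m)$-time Karger call), combined with a Garg–K\"onemann-style step-size rule that ties each increment of $y_{S^{*}}$ to the slack of the tightest edge constraint. A secondary subtlety is that the MWU output is only approximately feasible, so the final $(1+O(\epsilon))$ rescaling has to be shown compatible with the regret bound — routine, but requires careful bookkeeping so that neither the cost nor the feasibility slack is inflated by more than $(1+\epsilon)$, and so that the entire procedure stays within the $\widetilde O(m/\epsilon^{2})$ time bound.
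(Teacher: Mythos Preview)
Your proposal has two substantial gaps.

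\textbf{The box constraints are missing.} The LP you wrote down is \emph{not} the $k$ECSS relaxation: the correct LP is
\[
\min\Bigl\{\, c^{\top}x \;:\; x(\delta(S))\ge k\ \text{for every cut }S,\ x\in[0,1]^{E}\,\Bigr\},
\]
with upper bounds $x_e\le 1$. Dropping them yields the sub-multigraph variant ($k$ECSSM), which after scaling by $k$ is just the $1$ECSS LP, and for that your min-cut oracle is indeed the right object. The entire point of the paper is that the box constraints make the problem genuinely harder: the LP is no longer a pure covering LP, so one first replaces it by an equivalent covering LP via Knapsack-Covering inequalities, and the MWU oracle then becomes the \emph{minimum normalized free cut}
\[
\min_{S,\ F\subseteq\delta(S),\ |F|\le k-1}\ \frac{w(\delta(S)\setminus F)}{k-|F|},
\]
not a plain minimum cut. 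The paper's key idea (the weight-truncation ``Range Mapping Theorem'') is precisely a reduction of this oracle back to approximate min-cut. Your plan never meets this obstacle because it starts from the wrong LP; the primal you recover at the end need not satisfy $x_e\le 1$ and hence is not a fractional $k$ECSS solution.

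\textbf{The iteration count is off even for the unboxed LP.} Garg--K\"onemann / Young-style MWU on a covering LP with $m$ variables needs $\Theta(m\log m/\epsilon^{2})$ oracle calls, not $\widetilde O(1/\epsilon^{2})$: each iteration raises the congestion of some edge by $1$, and termination requires max congestion $\Theta(\epsilon^{-2}\log m)$. A one-time rescaling of costs by the global min-cut value is a uniform scaling and does not shrink the width. With a fresh $\widetilde O(m)$ Karger call per iteration you land at $\widetilde O(m^{2}/\epsilon^{2})$. Achieving $\widetilde O(m/\epsilon^{2})$ already for $1$ECSS requires the Chekuri--Quanrud \emph{dynamic} range cut-listing data structure, which maintains approximate min-cuts under the monotone weight increases of MWU and amortizes the oracle cost over all iterations; the paper builds on exactly this machinery (and must extend it with a ``truncated lazy weight'' component to handle the free-cut oracle). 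This dynamic layer is absent from your plan and is not recoverable from a static min-cut subroutine plus bookkeeping.
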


Following, in the high-level, the arguments of Chekuri and Quanrud \cite{ChekuriQuanrud18} (i.e. solving the minimum-weight $k$ disjoint arborescences in the style of \cite{KhullerV94} on the support of the sparsified fractional solution), the above fractional solution can be turned into a fast $(2+\epsilon)$-approximation algorithm for the integral version of $k$ECSS.

\begin{corollary}\label{thm:intro:integral $k$ECSS}
For any $\epsilon>0$, there exist
\begin{itemize}
    \item a randomized $\widetilde{O}(m/\epsilon^2)$-time algorithm that estimates the {\em value} of the optimal solution for $k$ECSS to within a factor $(2+\epsilon)$, and

    \item a randomized $\tilde O\left(\frac{m}{\epsilon^2} + \frac{k^2n^{1.5}}{\epsilon^2}\right)$-time algorithm
that produces a feasible $k$ECSS solution of cost at most $(2+\epsilon)$ times the optimal value.
\end{itemize}
\end{corollary}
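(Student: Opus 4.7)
The value-estimation claim follows almost directly from \Cref{thm:intro:fractional $k$ECSS} combined with the classical fact that the natural cut LP for $k$ECSS has integrality gap at most $2$.  Writing $f^\ast$ for the fractional optimum and $\tilde f$ for the value returned by \Cref{thm:intro:fractional $k$ECSS}, the Khuller--Vishkin bidirection argument---doubling an optimal integer $k$ECSS yields a $k$-arc-connected digraph, which by Edmonds' arborescence-packing theorem admits $k$ edge-disjoint out-arborescences of total cost $2\opt$---gives $f^\ast \le \opt \le 2f^\ast$.  Hence $2\tilde f$ satisfies $\opt \le 2\tilde f \le 2(1+\epsilon)\opt$, i.e., a $(2+O(\epsilon))$-estimate in $\tilde O(m/\epsilon^2)$ time; rescaling $\epsilon$ by a constant yields the claimed factor.

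For the integral solution I would follow the recipe of Chekuri--Quanrud~\cite{ChekuriQuanrud18}.  Starting from the fractional $x$ of \Cref{thm:intro:fractional $k$ECSS} (with $\sum_e c_e x_e \le (1+\epsilon)f^\ast$ and $\sum_{e\in\delta(S)} x_e \ge k$ for every cut $S$), the first step is to extract a sparse $k$-edge-connected spanning subgraph $H\subseteq G$ of cost at most $(1+O(\epsilon))f^\ast$ with $|E(H)| = \tilde O(kn/\epsilon^2)$.  This is obtained by a Bencz\'ur--Karger-style cut-sparsification of $x$: edges with $x_e$ bounded away from $0$ are kept deterministically, while small-$x_e$ edges are sampled independently with probability proportional to $x_e$.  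A Chernoff bound combined with Karger's enumeration of near-minimum cuts (there are only $n^{O(1)}$ cuts of value $O(k)$ in the fractional solution) simultaneously guarantees that (i) every cut of $H$ contains at least $k$ edges with high probability and (ii) $\mathrm{cost}(H)$ is within a $(1+O(\epsilon))$ factor of $\sum_e c_e x_e$.

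On $H$ we then apply the algorithmic Khuller--Vishkin step: bidirect $H$ and compute $k$ minimum-cost edge-disjoint out-arborescences rooted at an arbitrary vertex $r$ using Gabow's weighted arborescence-packing algorithm.  The union of the underlying undirected edges is a feasible $k$ECSS whose cost is at most twice the $k$ECSS-optimum of $H$, which in turn is at most $2\,\mathrm{cost}(H) \le 2(1+O(\epsilon))f^\ast \le (2+O(\epsilon))\opt$.  A fast implementation of Gabow's algorithm on a graph with $\tilde O(kn/\epsilon^2)$ edges runs in $\tilde O(k^2 n^{1.5}/\epsilon^2)$ time, which together with the $\tilde O(m/\epsilon^2)$ time of the fractional solver yields the stated complexity.

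The main obstacle in making this rigorous is the sparsification: one must preserve $k$-edge-connectivity at \emph{every} cut while simultaneously keeping the cost within a $(1+O(\epsilon))$ window of $\sum_e c_e x_e$.  Handling this requires restricting the union bound to the polynomially many near-minimum cuts via Karger's cut-counting bound, tuning the sampling rates non-uniformly so that concentration holds at near-minimum and at larger cuts, and pre-scaling $k$ by $1/(1-O(\epsilon))$ in \Cref{thm:intro:fractional $k$ECSS} so that the multiplicative cut preservation still leaves at least $k$ integer edges across every cut of $G$.
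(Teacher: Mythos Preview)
Your argument for the first bullet is correct and matches the paper's: the integrality gap of the $k$ECSS LP is at most $2$ via Khuller--Vishkin plus integrality of the arborescence polytope, so doubling the $(1+\epsilon)$-approximate LP value from \Cref{thm:intro:fractional $k$ECSS} yields a $(2+\epsilon)$-estimate in $\tilde O(m/\epsilon^2)$ time.

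The high-level structure for the second bullet (sparsify, then run Khuller--Vishkin via Gabow on the sparse graph) is also the paper's, but your sparsification step has a genuine gap. As written, it claims to produce a $k$-edge-connected subgraph $H\subseteq G$ with $\sum_{e\in E(H)}c_e\le (1+O(\epsilon))f^\ast$; such an $H$ would itself be a $(1+O(\epsilon))$-approximate integral $k$ECSS solution, contradicting APX-hardness and rendering the subsequent Khuller--Vishkin step pointless. Properties (i) and (ii) cannot both be guaranteed by any polynomial-time procedure unless P$=$NP. The paper's sparsification has a different target: it sparsifies the \emph{fractional} solution, producing another LP-feasible vector $y$ with $|\support(y)|=\tilde O(kn/\epsilon^2)$ and $c^{T}y\le (1+\epsilon)c^{T}x$. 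The support graph $G'$ may have total edge cost far above $f^\ast$, but it carries a cheap fractional witness, so Khuller--Vishkin on $G'$ returns an integral solution of cost at most $2\cdot{\sf LP}_{k\mathrm{ECSS}}(G')\le 2c^{T}y\le (2+O(\epsilon))\opt$. The technical crux is that the LP in play is the KC relaxation, so feasibility of $y$ requires preserving all \emph{constrained cuts} $C\setminus F$ with $|F|\le k-1$, not just ordinary cuts; this is handled by an extension of Bencz\'ur--Karger (\Cref{thm: sparsification}) that uses strength-based sampling together with a union bound over $\binom{m}{k}$ edge-deletion patterns---the near-min-cut enumeration you propose would not suffice even for this corrected goal.
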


We remark that the term $\tilde{O}(k^2 n^{1.5})$ is in fact ``tight'' up to the state-of-the-art algorithm for finding minimum-weight $k$ disjoint arborescences.\footnote{More formally, if a minimum-weight union of $k$ edge-disjoint arborescences can be found in time $T(k,m,n)$, then our algorithm would run in time $T(k, kn, n)$. The term $O(k^2 n^{1.5})$ came from Gabow's algorithm~\cite{gabow1995matroid} that runs in time $O(km \sqrt{n} \log (n c_{\max}))$.}

Prior to our results, a sub-quadratic time algorithm was not known even for special cases of $k$ECSS, called
{\em $k$-Edge-Connected Augmentation} ($k$ECA). In this problem, we are given a $(k-1)$-edge-connected subgraph $H$ of a graph $G$, and we want to minimize the total cost of adding edges in $G$ to $H$ so that $H$ becomes $k$-edge connected. It is not hard to see that if we can $\alpha$-approximates $k$ECSS, then we can $\alpha$-approximates $k$ECA by assigning cost 0 to all edges in $H$. This problem previously admits a $O(kn^2)$-time $2$-approximation algorithm for any even integer $k$~\cite{khuller1993approximation}\footnote{In Khuller and Vishkin~\cite{khuller1993approximation}, the $k$ECA problem aims at augmenting the connectivity from $k$ to $(k+1)$ (but for us it is from $(k-1)$ to $k$.)}. The approximation ratio of 2 remains the best even for 2ECA. Our result in \Cref{thm:intro:integral $k$ECSS} improves the previously best time complexity by a $\tilde{\Theta}(\sqrt{n})$ factor.

\textbf{Perspective.} The gap between algorithms with best approximation ratio and best time complexity in fact reflects a general lack of understanding on {\em fast approximation algorithms}. While polynomial-time algorithms were perceived by many as efficient, it is not a reality in the current era of large data, where it is nearly impossible to take $O(n^3)$ time to process a graph with millions or billions of nodes. Research along this line includes algorithms for sparsest cut \cite{KhandekarRV09,KhandekarKOV2007cut,Sherman09,Madry10-jtree}, multi-commodity flow \cite{GargK07,Fleischer00,Madry10}, and travelling salesman problem \cite{ChekuriQ17,ChekuriQuanrud18}. Some of these algorithms have led to exciting applications such as fast algorithms for max-flow \cite{Sherman13}, dynamic connectivity \cite{NanongkaiSW17,ChuzhoyGLNPS19,SaranurakW19,Wulff-Nilsen17,NanongkaiS17dynamic}, vertex connectivity \cite{LiNPSY21} and maximum matching \cite{BrandLNPSSSW20}.

The $k$ECSS problem belongs to the class of survivable network design problems (SNDPs), where the goal is to find a subgraph ensuring that every pair of nodes $(u,v)$ are $\kappa(u,v)$-edge-connected for a given function $\kappa$. ($k$ECSS is the {\em uniform} version of SNDP where $\kappa(u,v)=k$ for every pair $(u,v)$.)
These problems typically focus on building a network that is resilient against device failures (e.g. links or nodes), and are arguably among the most fundamental problems in combinatorial optimization.
Research in this area has generated a large number of beautiful algorithmic techniques during the 1990s, culminating in the result of Jain~\cite{Jain01} which gives a $2$-approximation algorithm for the whole class of SNDPs. Thus, achieving a {\em fast} $2$-approximation algorithm for SNDPs is a very natural goal.

Towards this goal and towards developing fast approximation algorithms in general, there are two common difficulties:
\begin{enumerate}
    \item Many approximation algorithms inherently rely on solving a {\em linear program (LP)} to find a fractional solution, before performing  rounding steps. However, the state-of-the-art general-purpose linear program solvers are still quite slow, especially for $k$ECSS and SNDP where the corresponding LPs are {\em implicit}.%

    In the context of SNDP, the state-of-the-art (approximate) LP solvers still require at least quadratic time: Fleischer~\cite{fleischer2004fast} designs an $\tilde{O}(m n k)$ for solving $k$ECSS LP, and more generally for SNDP and its generalization~\cite{fleischer2004fast,FeldmannKPS16} with at least $\Theta(m \min\{n,k_{\max}\})$ iterations  of minimum cost flow's computation  are the best known running time where $k_{\max}$ is the maximum connectivity requirements.

    \item Most existing techniques that round fractional solutions to integral ones are not ``friendly'' for the design of fast algorithms. For instance, Jain's celebrated iterative rounding~\cite{Jain01} requires solving the LP  $\Omega(m)$ times.  Moreover, most LP-based network design algorithms are fine-tuned to optimize approximation factors, while designing near-linear time LP rounding algorithms requires limiting ourselves to a relatively small set of tools, about which we currently have very limited understanding.

\end{enumerate}

This paper completely resolves the first challenge for $k$ECSS  and manages to identify a fundamental bottleneck of the second challenge.

\textbf{Challenges for LP Solvers.} Our main challenge is handling the so-called {\em box constraints} in the LPs. To be concrete, below is the LP relaxation of $k$ECSS on graph $G=(V,E)$.
\begin{align}
\min \{\sum_{e \in E} c_e x_e: \sum_{e \in \delta_G(S)} x_e \geq k\ (\forall S \subseteq V), x \in [0,1]^E \} \label{eq:intro:$k$ECSS LP}
\end{align}
where $\delta_G(S)$ is the set of edges between nodes in $S$ and $V\setminus S$.
The box constraints refer to the constraints  $x \in [0,1]^E$.
Without these constraints, we can select the same edge multiple times in the solution; this problem is called $k$ECSSM in \cite{CzumajL07} (see \Cref{foot:intro:sub-multigraph}). Removing the box constraints often make the problem significantly easier. For example, the min-cost $st$-flow problem without the box constraints become computing the shortest $st$-path, which admits a much faster algorithm.

For $k$ECSS, it can be shown that solving \eqref{eq:intro:$k$ECSS LP} without the box constraints can be reduced to solving \eqref{eq:intro:$k$ECSS LP} {\em with $k=1$} and multiplying all $x_e$ with $k$. In other words, without the box constraints, fractional $k$ECSS is equivalent to fractional 1ECSS.%
 This \emph{fractional} 1ECSS can be $(1+\epsilon)$-approximated in near-linear time by plugging in the dynamic minimum cut data structure of Chekuri and Quanrud ~\cite{ChekuriQ17} to the multiplicative weight update framework (MWU).

However, with the presence of box constraints, to use the MWU framework we would need a dynamic data structure for a much more complicated cut problem, that we call, the \textit{minimum normalized free cut} problem (roughly, this is a certain normalization of the minimum cut problem where the costs of up to $k$ heaviest edges in the cut are ignored.)
For our problem, the best algorithm in the static setting we are aware of (prior to this work) is to use Zenklusen's $\tilde O(mn^4)$-time algorithm \cite{Zenklusen14} for the {\em connectivity interdiction} problem.\footnote{In the connectivity interdiction problem, we are given $G=(V,E)$ and $k \in {\mathbb N}$, our goal is to compute $F \subseteq E$ to delete from $G$ in order to minimize the minimum cut in the resulting graph.}
This results in an $\tilde O(kmn^4)$-time static algorithm.  Speeding up and dynamizing this algorithm seems very challenging.
Our main technical contribution is an efficient dynamic data structure (in the MWU framework) for the $(1+\epsilon)$-approximate minimum normalized free cut problem.
We explain the high-level overview of our techniques in Section~\ref{sec:overview}.

\textbf{Further Related Works.} %
The $k$ECSS and its special cases have been studied extensively. For all $k \geq 2$, the $k$ECSS problem is known to be APX-hard~\cite{fernandes1998better} even on bounded-degree graphs~\cite{csaba2002approximability} and when the edge costs are $0$ or $1$~\cite{Pritchard10}.  Although a factor $2$ approximation for $k$ECSS has not been improved for almost $3$ decades, various special cases of $k$ECSS admit better approximation ratios (see for instance~\cite{grandoni2018improved,fiorini2018approximating,adjiashvili2018beating}).
For instance, the unit-cost $k$ECSS ($c_e=1$ for all $e \in E$) behaves very differently, admitting a $(1+ O(1/k))$ approximation algorithm~\cite{gabow2009approximating,laekhanukit2012rounding}.
For the $2$ECA problem, one can get a better than $2$ approximation when the edge costs are bounded~\cite{adjiashvili2018beating,fiorini2018approximating}. Otherwise, for general edge costs, the factor of $2$ has remained the best known approximation ratio even for the $2$ECA problem.

The $k$ECSS problem in special graph classes have also received a lot of attention.
In Euclidean setting, a series of papers by Czumaj and Lingas led to a near-linear time approximation schemes for constant $k$~\cite{czumaj2000fast,czumaj1999approximability}.
The problem is solvable in near-linear time when $k$ and treewidth are constant~\cite{berger2007minimum,chalermsook2018survivable}. In planar graphs, 2ECSS, 2ECSSM and 3ECSSM admit a PTAS~\cite{czumaj2004approximation,borradaile2014polynomial}.

\textbf{Organization.} We provide a high-level overview of our proofs in Section~\ref{sec:overview}.
In Section~\ref{sec:prelim}, we explain the background on Multiplicative Weight Updates (MWU) for completeness (although this paper is written in a way that one can treat MWU as a black box). In~\Cref{sec: range map}, we prove our main technical component. In~\Cref{sec:fast LP solver}, we present our LP solver. In~\Cref{sec: rounding}, we show how to round the fractional solution obtained from the LP solver.
Due to space limitations, many proofs are deferred to Appendix.

\section{Overview of Techniques}
\label{sec:overview}
In this section, we give a high-level overview of our techniques in connection to the known results.
Our work follows the standard Multiplicative Weight Update (MWU) framework together with the Knapsack Covering (KC) inequalities (see Section~\ref{sec:prelim} for more background). Roughly, in this framework, in order to obtain a near-linear time LP solver for $k$ECSS, it suffices to provide a fast dynamic algorithm for a certain optimization problem (often called the oracle problem in the MWU literature):

\begin{definition}[Minimum Normalized Free Cuts]
We are given a graph $G = (V,E)$, weight function $\vecw: E \rightarrow \mathbb{R}_{\geq 0}$, integer $k$, and our goal is to compute a cut $S \subseteq V$ together with edges $F \subseteq \delta_G(S): |F| \leq k-1$ that minimizes the following objective\footnote{For any function $f$, for any subset $S$ of its domain, we define $f(S) = \sum_{s \in S}f(s)$.}:
$$\min_{S \subsetneq V, F \subseteq \delta_G(S): |F| \leq (k-1)} \frac{\vecw(\delta_G(S) \setminus F) }{k - |F|},$$
where $\delta_G(S)$ denotes the set of edges that has exactly one end point in $S$.
We call the minimizer $(S,F)$ the minimum normalized free cut.
\end{definition}
This is similar to the minimum cut problem, except that we are allowed  to ``remove'' up to $(k-1)$ edges (called \emph{free edges}) from each candidate cut $S \subseteq V$, and the cost would be ``normalized'' by a factor of $(k- |F|)$.\footnote{This is in fact a special case of a similar objective considered by Feldmann, K{\"{o}}nemann, Pashkovich and Sanit{\`{a}}~\cite{FeldmannKPS16}, who considered applying the MWU framework for the generalized SNDP}
Notice that there are (apparently) two sources of complexity for this problem. First, we need to find the cut $S$ and second, given $S$, to compute the optimal set $F \subseteq \delta_G(S)$ of free edges. To our best knowledge, a previously fastest algorithm for this problem takes $\tilde{O}(mn^4)$ time by reducing to the connectivity interdiction problem \cite{Zenklusen14}, while we require near-linear time. This is our first technical challenge.

Our second challenge is as follows.
To actually speed up the whole MWU framework, in addition to solving the oracle problem statically efficiently, we further need to implement a dynamic version of the oracle with $\polylog(n)$ update time. In our case,
the goal is to maintain a dynamic data structure on  graph $G=(V,E)$, weight function $\vecw$, cost function $c$, that supports the following operation:

\begin{definition}
The {\sc PunishMin} operation  computes a $(1+O(\epsilon))$-approximate normalized free cut and multiply the weight of each edge $e \in \delta_G(S) \setminus F$ by a factor of at most $e^{\epsilon}$.\footnote{ The actual weight $w(e)$ is updated for all $e \in\delta_G(S) \setminus F$: $w(e) \gets w(e)\cdot \exp(\frac{\epsilon c_{\min}}{c_e})$ where $c_{\min}$ is the minimum edge capacity in $\delta_G(S) \setminus F$.} %
\end{definition}

We remark that invoking the {\sc PunishMin} operation does not return the cut $(S,F)$, and the only change is the weight function $\vecw$ being maintained by the data structure.

\begin{proposition}[Informal]
Assume that we are given a dynamic algorithm that supports {\sc PunishMin} with amortized $\polylog(n)$ cost per operations, then the $k$ECSS LP can be solved in time $\tilde{O}(m)$.
\end{proposition}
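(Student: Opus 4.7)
The plan is to apply the Multiplicative Weight Update (MWU) framework to a Knapsack-Covering (KC) strengthening of the $k$ECSS LP: for every $S \subsetneq V$ and every $F \subseteq \delta_G(S)$ with $|F| \leq k-1$ we impose $\sum_{e \in \delta_G(S) \setminus F} x_e \geq k - |F|$. These KC inequalities are implied by the original system together with the box constraints $x \in [0,1]^E$, and the separation oracle for the most-violated KC inequality (in the ratio sense used by MWU for covering LPs) is exactly the minimum normalized free cut problem. A single call to {\sc PunishMin} therefore simultaneously (i) identifies an approximately most-violated KC constraint and (ii) performs the multiplicative edge-weight update that the MWU analysis prescribes.

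Concretely, I would initialize the edge weights and a primal accumulator $x \equiv 0$, and then enter a loop that, at each step, invokes {\sc PunishMin} and, using a small extension of the data structure so that the touched set $\delta_G(S) \setminus F$ is also reported, augments $x$ on those edges by a step size tied to $\epsilon$ and to $k - |F|$. The loop terminates once a standard MWU stopping criterion (polynomial growth of an edge-weight potential $\Phi$) is met. A standard covering-LP MWU analysis (in the style of Young and of Chekuri--Quanrud, adapted to KC inequalities) then yields two facts: (a) the number of iterations is $T = \tilde O(\log(n)/\epsilon^2)$, and (b) after an appropriate rescaling, the accumulated vector $\bar x$ is a $(1+O(\epsilon))$-approximate feasible fractional solution to the KC-strengthened LP, and hence to the original $k$ECSS LP.

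Summing the budgets yields the claim: $O(m)$ for initializing the data structure and writing the final output, plus $T \cdot \polylog(n) = \tilde O(1/\epsilon^2)$ amortized cost spread over the oracle calls, for a total of $\tilde O(m/\epsilon^2) = \tilde O(m)$. The main obstacle I foresee is the width-style analysis justifying the $\tilde O(1/\epsilon^2)$ iteration count under highly non-uniform edge costs; this is precisely why the multiplicative factor inside {\sc PunishMin} is normalized by $c_{\min}/c_e$ rather than being uniform, and verifying that this non-uniform update preserves both bounded per-iteration potential growth and a correct primal--dual correspondence between the evolving weights $\vecw$ and the recovered primal $\bar x$ is the crux of the argument.
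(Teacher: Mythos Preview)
Your overall plan---apply Garg--K\"onemann style MWU to the KC-strengthened covering LP, with \textsc{PunishMin} playing the role of the approximate oracle---matches the paper's approach. There is, however, a concrete error in your accounting that happens to cancel out in the final tally.

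\textbf{The iteration count is off by a factor of $m$.} You claim $T=\tilde O(\log n/\epsilon^2)$ calls to \textsc{PunishMin}. In the Garg--K\"onemann update $w(e)\leftarrow w(e)\exp(\epsilon\, c_{\min}/c_e)$, only the bottleneck edge (the one attaining $c_{\min}$) has its exponent raised by $\epsilon$; every other edge in the cut moves by strictly less. Hence each call increases \emph{some} edge's ``congestion'' $\phi(e)=\tfrac{1}{\epsilon}\ln(c_e\,w(e))$ by exactly $1$, and the algorithm halts only when $\max_e \phi(e)=\Theta(\epsilon^{-2}\ln m)$. Summing over edges gives $T=O(m\epsilon^{-2}\ln m)$, which is exactly the bound in the paper's Theorem~3.1 (there ``$n$'' is the number of LP columns, i.e.\ $|E|$). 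The $c_{\min}/c_e$ normalization does not buy a width-free $\tilde O(1/\epsilon^2)$ bound; it is what \emph{forces} $\tilde O(m/\epsilon^2)$ iterations rather than something depending on $c_{\max}/c_{\min}$. With the correct $T$, the total is still $T\cdot\polylog(n)=\tilde O(m/\epsilon^2)=\tilde O(m)$, so the proposition survives, but your budget line ``$T\cdot\polylog(n)=\tilde O(1/\epsilon^2)$'' is wrong.

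\textbf{A secondary point on recovering the primal.} You propose to extend the data structure to report the touched set $\delta_G(S)\setminus F$ and accumulate increments into $x$. This is not how the paper does it, and it is risky: a single cut may contain $\Omega(m)$ edges, so explicit reporting cannot be amortized to $\polylog(n)$ per call over $\tilde O(m)$ calls. The paper instead reads the primal directly from the maintained weight vector: at range boundaries it outputs $\bar\vecw=\vecw/\opt_{\vecw}$ (a feasible scaling), and the MWU analysis guarantees one such snapshot is $(1+O(\epsilon))$-optimal. No per-iteration enumeration of cut edges is needed.
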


Let us call such a dynamic algorithm a fast \textbf{dynamic punisher}. The fact that a fast dynamic punisher implies a fast LP solver is an almost direct consequence of  MWU~\cite{GargK07}.

Therefore, we focus on   designing a fast dynamic algorithm for solving (and punishing) the minimum normalized free cut problem.
Our key idea is an efficient and dynamic implementation of the \textbf{weight truncation} idea.%

\vspace{0.1in}

\fbox{
\begin{minipage}{0.9\textwidth}
\textbf{Weight truncation}: Let $G=(V,E)$ and $\rho \in {\mathbb R}_{\geq 0}$ be a threshold. For any weight function $\vecw$ of $G$, denote by $\vecw_{\rho}$ the truncated weight defined by $\vecw_{\rho}(e) = \min\{\vecw(e), \rho\}$ for each $e \in E$. Call an edge $e$ with $\vecw(e) \geq \rho$ a $\rho$-heavy edge.
\end{minipage}
}

\vspace{0.1in}

Our main contribution is to show that, when allowing $(1+\epsilon)$-approximation, we can use the weight truncation to reduce the minimum normalized free cut to minimum cut with $O(\polylog(n))$ extra factors in the running time. Moreover, this reduction can be implemented efficiently in the dynamic setting.
We present the ideas in two steps, addressing our two technical challenges mentioned above respectively. First, we show how to solve the static version of minimum normalized free cut in near-linear time. Second, we sketch the key ideas to implement them efficiently in the dynamic setting, which can be used in the MWU framework.

We remark that weight truncation technique has been used in different context. For instance, Zenklusen \cite{Zenklusen14} used it for reducing the connectivity interdiction problem to $O(|E|)$ instances of the minimum budgeted cut problem.

\subsection{Step 1: Static Algorithm}
We show that the minimum normalized free cut problem can be solved efficiently in the static setting. For  convenience, we often use the term cut to refer to a set of edges instead of a set of vertices.

Define the objective function of our problem as, for any cut $C$,
$$\val_{\vecw}(C) = \min_{F\subseteq C: |F| \leq k-1} \frac{\vecw(C \setminus F)}{k-|F|}.$$
For any weight function $\vecw$, denote by $\opt_{\vecw} = \min_C \val_{\vecw}(C)$. In this paper, the graph $G$ is always fixed, while $\vecw$ is updated dynamically by the algorithm (so we omit the dependence on $G$ from the notation $\val$ and $\opt$).
When $\vecw$ is clear from context, we sometimes omit the subscript $\vecw$.

We show that the truncation technique can be used to establish a connection between our problem and minimum cut.

\begin{lemma}
\label{lem: baby mapping}
We are given a graph $G=(V,E)$, weight function $\vecw$,  integer $k$, and $\epsilon >0$.
For any threshold $\rho \in (\opt_{\vecw}, (1+\epsilon)\opt_{\vecw}]$,
\begin{itemize}
    \item any optimal normalized free cut in $(G,\vecw)$ is a $(1+\epsilon)$-approximate minimum cut in $(G,\vecw_{\rho})$, and

    \item any minimum cut $C^*$ in $(G,\vecw_{\rho})$ is a $(1+\epsilon)$-approximation for the minimum normalized free cut.
\end{itemize}
\end{lemma}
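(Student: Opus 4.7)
The plan is to derive both bullets from a single chain of inequalities that sandwiches the four quantities $k\cdot\opt_{\vecw}$, $k\cdot\val_{\vecw}(C^*)$, $\vecw_\rho(C^*)$, and $\vecw_\rho(C^{opt})$ within a factor of $(1+\epsilon)$ of each other, where $C^{opt}$ denotes an optimal normalized free cut in $(G,\vecw)$ and $C^*$ denotes a minimum cut in $(G,\vecw_\rho)$. First I would fix $F^{opt}\subseteq C^{opt}$ with $|F^{opt}|\le k-1$ attaining $\val_{\vecw}(C^{opt})=\opt_{\vecw}$. Using $\vecw_\rho(e)\le \vecw(e)$ on $C^{opt}\setminus F^{opt}$ and $\vecw_\rho(e)\le \rho$ on $F^{opt}$,
\begin{equation*}
\vecw_\rho(C^{opt}) \;\le\; \vecw(C^{opt}\setminus F^{opt}) + |F^{opt}|\,\rho \;=\; (k-|F^{opt}|)\,\opt_{\vecw} + |F^{opt}|\,\rho \;<\; k\rho \;\le\; k(1+\epsilon)\,\opt_{\vecw},
\end{equation*}
where the strict inequality uses the hypothesis $\opt_{\vecw}<\rho$ and the last step uses $\rho\le(1+\epsilon)\opt_{\vecw}$.

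Next I would establish the key bridge between the two objectives: for every cut $C$ with $\vecw_\rho(C)<k\rho$,
\begin{equation*}
k\cdot\val_{\vecw}(C) \;\le\; \vecw_\rho(C).
\end{equation*}
To prove this, let $F\subseteq C$ be the set of $\rho$-heavy edges (those with $\vecw(e)\ge \rho$). Each such edge contributes exactly $\rho$ to $\vecw_\rho$, so the hypothesis $\vecw_\rho(C)<k\rho$ automatically forces $|F|\le k-1$; moreover $\vecw=\vecw_\rho$ on $C\setminus F$. Hence
\begin{equation*}
\val_{\vecw}(C) \;\le\; \frac{\vecw(C\setminus F)}{k-|F|} \;=\; \frac{\vecw_\rho(C)-|F|\,\rho}{k-|F|} \;\le\; \frac{\vecw_\rho(C)}{k},
\end{equation*}
where the last step is the elementary fact that $(A-f\rho)/(k-f)\le A/k$ whenever $A\le k\rho$ and $0\le f\le k-1$.

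Combining the two ingredients, since $C^*$ minimizes $\vecw_\rho$ among cuts we have $\vecw_\rho(C^*)\le \vecw_\rho(C^{opt})<k\rho$, so the bridge applies to $C^*$ and gives
\begin{equation*}
k\,\opt_{\vecw} \;\le\; k\,\val_{\vecw}(C^*) \;\le\; \vecw_\rho(C^*) \;\le\; \vecw_\rho(C^{opt}) \;\le\; k(1+\epsilon)\,\opt_{\vecw}.
\end{equation*}
The second bullet, $\val_{\vecw}(C^*)\le(1+\epsilon)\opt_{\vecw}$, is read off directly from the first, second, and last terms. For the first bullet, the chain gives $\vecw_\rho(C^{opt})\le k(1+\epsilon)\opt_{\vecw}\le(1+\epsilon)\vecw_\rho(C^*)$, so $C^{opt}$ is indeed a $(1+\epsilon)$-approximate minimum cut in $(G,\vecw_\rho)$.

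The only subtle point I expect is the justification that the natural choice $F=\{\rho\text{-heavy edges of }C\}$ inside the bridge automatically satisfies $|F|\le k-1$, which is what permits it to be used as a valid minimizer candidate for $\val_{\vecw}(C)$. This is precisely what the strict inequality $\vecw_\rho(C^{opt})<k\rho$ guarantees, and is exactly the reason the hypothesis $\rho>\opt_{\vecw}$ is imposed strictly rather than as $\rho\ge\opt_{\vecw}$; the rest of the argument is routine algebra.
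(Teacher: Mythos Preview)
Your proof is correct and follows essentially the same approach as the paper: bound $\vecw_\rho(C^{opt})<k\rho$ by splitting into free and non-free edges, then use the $\rho$-heavy edges of a cut as a candidate free set (whose size is forced to be at most $k-1$ by the strict bound $\vecw_\rho(C)<k\rho$). The one presentational difference is that you package the lower-bound argument as a quantitative ``bridge'' inequality $k\,\val_{\vecw}(C)\le \vecw_\rho(C)$ (valid whenever $\vecw_\rho(C)<k\rho$) and then read both bullets off a single chain, whereas the paper proves the lower bound $\vecw_\rho(C')\ge k\,\opt_{\vecw}$ for every cut $C'$ by contradiction and handles the two bullets separately; your organization is a bit tighter but the underlying ideas are identical.
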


\begin{proof}
First, consider any cut $C$ with $\val(C) = \opt$.
Let $F \subseteq C$ be an optimal set of  free edges for $C$, so we have $\vecw_{\rho}(C \setminus F) \leq \vecw(C \setminus F) = (k-|F|) \opt$. Moreover, $\vecw_{\rho}(F) \leq |F| \rho$. This implies that
\begin{equation}
\vecw_{\rho}(C) = \vecw_{\rho}(C \setminus F) + \vecw_{\rho}(F) < k \rho
\label{eq:exact map upper}
\end{equation}
Next, we prove that any cut in $(G,\vecw_{\rho})$ is of value at least $k\opt$ (so the cut $C$ is a $(1+\epsilon)$ approximate minimum cut).
Assume for contradiction that there is a cut $C'$ such that $\vecw_{\rho}(C') < k\opt$. Let $F' \subseteq C'$ be the set of $\rho$-heavy edges.
Observe that $|F'| \leq k-1$ since otherwise the total weight $\vecw_{\rho}(C')$ would have already exceeded $k\opt$.
This implies that $\vecw(C' \setminus F') = \vecw_{\rho}(C' \setminus F') < (k-|F'|) \opt$ and that
	$$\val(C') \leq \frac{\vecw(C'\setminus F') }{(k-|F'|)} < \opt$$
	which is a contradiction. Altogether, we have proved the first part of the lemma.

To prove the second part of the lemma, consider a minimum cut $C^*$ in $(G,\vecw_{\rho})$, we have that $\vecw_{\rho}(C^*) < \vecw_{\rho}(C) < k \rho$ (from~\Cref{eq:exact map upper}). Again, the set of heavy edges $F^* \subseteq C^*$ can contain at most $k-1$ edges, so we must have $\vecw(C^* \setminus F^*) < (k-|F^*|) \rho \leq (k-|F^*|)(1+\epsilon) \opt$, implying that $\val(C^*) < (1+\epsilon) \opt$.
%
%
%
%
%
%}
\end{proof}

We remark that this reduction from the minimum normalized free cut problem to the minimum cut problem does not give an exact correspondence, in the sense that a minimum cut in $(G,\vecw_{\rho})$ cannot be turned into a minimum normalized free cut in $(G,\vecw)$. In other words, the approximation factor of $(1+\epsilon)$
is unavoidable.

\begin{theorem}
\label{thm: warmup}
Given a graph $G=(V,E)$ with weight function $\vecw$ and integer $k$, the minimum normalized free cut problem can be $(1+\epsilon)$ approximated by using $O( \frac{1}{\epsilon} \cdot \log n)$ calls to the exact minimum cut algorithm.
\end{theorem}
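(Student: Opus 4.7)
The plan is to couple \Cref{lem: baby mapping} with a geometric search over the truncation threshold $\rho$: since the lemma only requires a guess $\rho \in (\opt_{\vecw}, (1+\epsilon) \opt_{\vecw}]$, it suffices to evaluate the min cut oracle on $O(\log n / \epsilon)$ well-chosen truncated instances and then output the best normalized-free-cut discovered among them.

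First I will bracket $\opt_{\vecw}$ in a polynomially-bounded range. One preliminary call to the exact min cut algorithm returns $R := \mincut(G, \vecw)$; taking $F = \emptyset$ on that cut shows $\opt_{\vecw} \leq R/k$. For the lower bound I separate the degenerate case where $G$ is not $k$-edge-connected: the returned cut has $|C| < k$, so choosing $F = C$ certifies $\opt_{\vecw} = 0$ and the algorithm is done. Otherwise every admissible pair $(C, F)$ with $|F| \leq k-1$ has $|C \setminus F| \geq 1$, so $\val_{\vecw}(C) \geq \vecw_{\min}/k$ where $\vecw_{\min} = \min_{e \in E} \vecw(e)$. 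Consequently $\opt_{\vecw} \in [\vecw_{\min}/k,\, R/k]$, a range whose ratio is $\poly(n)$ under the standard polynomially-bounded-weights assumption.

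Next I enumerate the geometric grid $\rho_i := (1+\epsilon)^i \cdot \vecw_{\min}/k$ for $i = 0, 1, \dots, I$ with $I := \lceil \log_{1+\epsilon}(R/\vecw_{\min}) \rceil = O(\log n / \epsilon)$. For each $i$, one call to the exact min cut oracle on $(G, \vecw_{\rho_i})$ returns a cut $C_i$, which I post-process into a normalized free cut as follows: sort the edges of $C_i$ by $\vecw$, sweep over $|F| = 0, 1, \dots, k-1$, letting $F$ be the $|F|$ heaviest edges, and record the resulting $\val_{\vecw}(C_i)$. The output is the cut achieving the smallest value over all $i$.

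For correctness, the geometric grid of ratio $(1+\epsilon)$ spans an interval containing $\opt_{\vecw}$, so some index $i^*$ satisfies $\rho_{i^*} \in (\opt_{\vecw}, (1+\epsilon)\opt_{\vecw}]$; the second bullet of \Cref{lem: baby mapping} applied at $\rho_{i^*}$ then yields $\val_{\vecw}(C_{i^*}) \leq (1+\epsilon)\opt_{\vecw}$, so the returned cut is a $(1+\epsilon)$-approximation. The total oracle usage is $I + 1 = O(\log n / \epsilon)$. The only delicate point, and the one I expect to warrant care, is the $\log n$ bound on $I$: it relies on $R/\vecw_{\min}$ being polynomial, which is standard in the MWU regime invoked downstream but not automatic for arbitrary weights. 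In the worst case the bound degrades gracefully to $O(\log(W_{\max}/W_{\min})/\epsilon)$, which is still benign for the intended LP-solver application.
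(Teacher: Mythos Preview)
Your proposal is correct and follows essentially the same approach as the paper: a geometric grid of truncation thresholds $\rho_i=(1+\epsilon)^i$, one exact min-cut call per threshold, and an appeal to \Cref{lem: baby mapping} at the index $i^*$ with $\rho_{i^*}\in(\opt_{\vecw},(1+\epsilon)\opt_{\vecw}]$. The only notable difference is that you bracket $\opt_{\vecw}$ explicitly via a preliminary min-cut call and $\vecw_{\min}$, whereas the paper simply assumes $\opt_{\vecw}\le M=\poly(n)$ and defers the justification to a preprocessing step (Karger-style skeleton sampling) in the appendix; your acknowledgement that the $O(\log n/\epsilon)$ bound hinges on polynomially bounded weight ratios is exactly this point.
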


\begin{proof}
We assume that the minimum normalized free cut of $G$ is upper bounded by some value $M$ which is polynomial in $n = |V(G)|$ (we show how to remove this assumption in \Cref{sec:polynomially bounded cost}). %
For each $i$ such that $(1+\epsilon)^i \leq M$, we compute the minimum cut $C_i$ in $(G,\vecw_{\rho_i})$ % %
where $\rho_i = (1+\epsilon)^i$ and return one with minimum value $\val(C_i)$. Notice that there must be some $i^*$ such that $\rho_{i^*} \in (\opt_{\vecw}, (1+\epsilon)\opt_{\vecw}]$ and by the lemma, we must have that $C_{i^*}$ is a $(1+\epsilon)$-approximate solution for the normalized free cut problem.
\end{proof}
By using any  near-linear time minimum cut algorithm e.g.,~\cite{Karger00mincut}, the collorary follows.

\begin{corollary} \label{cor:normalized mincut}
There exists a $(1+\epsilon)$ approximation algorithm for the minimum normalized free cut problem that runs in time $\tilde{O}(|E|/\epsilon)$.
\end{corollary}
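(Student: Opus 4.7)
The plan is to obtain Corollary \ref{cor:normalized mincut} as an essentially immediate consequence of Theorem \ref{thm: warmup}, by plugging a near-linear time exact minimum cut algorithm into the black-box reduction. Theorem \ref{thm: warmup} already reduces a $(1+\epsilon)$-approximation of the minimum normalized free cut to $O(\epsilon^{-1} \log n)$ exact minimum cut computations on the truncated graphs $(G, \vecw_{\rho_i})$ with $\rho_i = (1+\epsilon)^i$, so the only thing left is to bound the per-call cost and the postprocessing.

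First, I would instantiate each minimum cut call with Karger's randomized near-linear time algorithm \cite{Karger00mincut}, which computes the exact minimum cut of an edge-weighted graph in $\tilde{O}(|E|)$ time with high probability. Since the reduction requires $O(\epsilon^{-1} \log n)$ such calls (one per threshold $\rho_i$), the total time spent inside minimum cut invocations is $\tilde{O}(|E|/\epsilon)$. A union bound over the $O(\epsilon^{-1} \log n) = \tilde{O}(1/\epsilon)$ calls keeps the overall failure probability polynomially small after a standard $O(\log n)$ boost inside each call.

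Second, for each returned candidate cut $C_i$ (viewed as an edge set), I need to compute $\val_{\vecw}(C_i)$ in order to output the best one. Given $C_i$, the optimal choice of up to $k-1$ free edges is simply the heaviest prefix after sorting $C_i$ in decreasing order of $\vecw$; scanning all $k' \in \{0,1,\dots,\min(k-1,|C_i|-1)\}$ and picking the $k'$ that minimizes $\vecw(C_i \setminus F)/(k-k')$ takes $O(|C_i| \log |C_i|) = \tilde{O}(|E|)$ time per threshold. Summing over thresholds is again $\tilde{O}(|E|/\epsilon)$, which is absorbed into the overall bound. Returning $\arg\min_i \val_{\vecw}(C_i)$ yields the desired $(1+\epsilon)$-approximation by Theorem \ref{thm: warmup}.

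The one subtlety, and the only part that is not completely mechanical, is the standing assumption in Theorem \ref{thm: warmup} that $\opt_{\vecw}$ lies in a polynomially bounded range so that only $O(\epsilon^{-1} \log n)$ values of $i$ need be swept. Handling this rigorously is deferred by the paper to \Cref{sec:polynomially bounded cost}, so for the purposes of this corollary I would simply invoke that reduction as a black box; nothing else in the argument depends on it. Combining these ingredients gives the claimed $\tilde{O}(|E|/\epsilon)$ running time.
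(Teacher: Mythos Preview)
Your proposal is correct and matches the paper's approach exactly: the paper's proof is the single sentence ``By using any near-linear time minimum cut algorithm e.g.,~\cite{Karger00mincut}, the corollary follows,'' and you have simply expanded the bookkeeping (per-cut evaluation of $\val_{\vecw}$, union bound over calls, and the polynomial-range assumption deferred to \Cref{sec:polynomially bounded cost}) that the paper leaves implicit.
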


\subsection{Step 2: Dynamic Algorithm}

The next idea we use is from Chekuri and Quanrud~\cite{ChekuriQ17}. One of the key concepts there is that it is sufficient to solve a ``range punishing'' problem in near-linear time; for completeness we  prove this sufficiency in Appendix.
In particular, the following proposition is a consequence of their work:

\begin{definition}
A \textbf{range punisher}\footnote{Our range punisher corresponds to an algorithm of
Chekuri and Quanrud~\cite{ChekuriQuanrud18} in one epoch.} is an algorithm that, on any input graph  $G$, initial weight function $\vecw = \vecwinit$, real numbers $\epsilon$, and $\lambda \leq \opt_{\vecwinit}$, iteratively applies {\sc PunishMin} on $(G,\vecw)$  until the optimal becomes at least $\opt_{\vecw} \geq (1+\epsilon)\lambda$.
\end{definition}

The following proposition connects a fast range punisher to a fast LP solver.

\begin{proposition}
\label{prop: CQ epoch}
If there exists a range punisher running in time $$\widetilde{O}\left(|E| + K + \sum_{e \in E}\log ( \frac{\vecw(e)}{\vecwinit(e)})\right)$$ where $K$ is the number of cuts punished, then, there exists a fast dynamic punisher, and consequently the $k$ECSS LP can be solved in near-linear time.
\end{proposition}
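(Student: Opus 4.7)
The plan is to realize the fast dynamic punisher by concatenating calls to the range punisher across a geometric sequence of lower bounds on $\opt_\vecw$. First I would compute a cheap initial lower estimate $\lambda_0 \le \opt_{\vecwinit}$ (either the trivial $\min_e \vecwinit(e)$ or the $(1+\epsilon)$-approximation from~\Cref{cor:normalized mincut} at cost $\tilde O(m/\epsilon)$). Then for $j = 0, 1, 2, \dots$ I would invoke the range punisher with parameter $\lambda_j := (1+\epsilon)^j \lambda_0$; its exit guarantee $\opt_\vecw \ge (1+\epsilon)\lambda_j = \lambda_{j+1}$ inductively preserves the precondition $\lambda_{j+1} \le \opt_\vecw$ required for the next call. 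From MWU's point of view, every {\sc PunishMin} query it issues is served from inside the currently-active epoch, and when that epoch terminates we transparently open the next one. The whole process halts when MWU's standard termination rule fires (i.e., when enough weight growth has accumulated to certify $(1+\epsilon)$-approximate dual feasibility).

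For the amortized analysis, let $J$ be the total number of epochs, $\vecw^{(j)}$ the weights at the start of epoch $j$, and $K_j$ the number of {\sc PunishMin} operations executed during that epoch. Summing the per-epoch guarantee gives
\begin{equation*}
\tilde O\!\left( \sum_{j=0}^{J-1} |E| \;+\; \sum_{j=0}^{J-1} K_j \;+\; \sum_{j=0}^{J-1}\sum_{e\in E}\log\frac{\vecw^{(j+1)}(e)}{\vecw^{(j)}(e)} \right).
\end{equation*}
The first term is $|E|\cdot J$: since $\opt_\vecw$ is monotone and its total ratio is $\poly(n)$ after the scaling of~\Cref{sec:polynomially bounded cost}, we have $J = O(\log n/\epsilon)$ and the contribution is $\tilde O(m/\epsilon)$. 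The second term is exactly the total number $K$ of {\sc PunishMin} calls made over the entire MWU execution, which the standard width bound for the $k$ECSS covering LP caps at $\tilde O(m/\epsilon^2)$. The third sum telescopes in $j$ to $\sum_{e\in E}\log(\vecw^{(J)}(e)/\vecwinit(e))$; a uniform $n^{O(1/\epsilon)}$ upper bound on each ratio $\vecw^{(J)}(e)/\vecwinit(e)$ makes every summand $\tilde O(1/\epsilon)$ and the whole sum $\tilde O(m/\epsilon)$. Adding up, serving all {\sc PunishMin} operations via the range punisher costs $\tilde O(m/\epsilon^2)$ in total, which amortizes to $\polylog(n)/\epsilon^{O(1)}$ per call --- a fast dynamic punisher. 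Plugging it into the MWU framework of~\Cref{sec:prelim} then yields the near-linear LP solver as claimed.

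The main obstacle I anticipate is justifying the uniform weight-blow-up bound used in the third term. The clean route is to prove, in the $k$ECSS setting, the analogue of the Garg--K\"onemann~\cite{GargK07} / Chekuri--Quanrud~\cite{ChekuriQ17,ChekuriQuanrud18} invariant: once $\vecw(e)$ has grown past a certain $\poly(n)$-scaled threshold tied to the KC-inequality budget of $e$, any subsequently punished normalized free cut can safely absorb $e$ into its free set $F$, so $\vecw(e)$ freezes thereafter. Two smaller points to verify are (i) that the initial estimate $\lambda_0$ really does satisfy $\lambda_0 \le \opt_{\vecwinit}$ (immediate for the trivial choice, and enforced by taking the $(1+\epsilon)$-approximation from below), and (ii) that any ``wasted'' epochs before the $\lambda_j$'s catch up to the true optimum are still absorbed into the $O(\log n/\epsilon)$ bound on $J$, since the ratio between any valid under-estimate $\lambda_0$ and the final $\opt_\vecw$ remains polynomial in $n$. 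Everything else is standard MWU bookkeeping.
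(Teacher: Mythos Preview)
Your high-level plan is exactly the paper's: run the range punisher over a geometric schedule $\lambda_j=(1+\epsilon)^j\lambda_0$, telescope the sum-of-logs across epochs, and bound the number of epochs, the total number of punished cuts, and the total weight growth separately. Two remarks.

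First, the ``main obstacle'' you anticipate dissolves more simply than your sketch suggests. You do not need an edge-freezing argument via the free set $F$. The MWU termination rule you already invoke is precisely a cap on congestion: you stop once $\max_e \cg(e)\ge \epsilon^{-2}\ln m$, and since by construction $\vecw(e)=\vecwinit(e)\exp(\epsilon\cdot\cg(e))$, this immediately yields $\log(\vecw(e)/\vecwinit(e))\le \epsilon\cdot\cg(G)=O(\epsilon^{-1}\log m)$ for every $e$. This is how the paper bounds the third term; no separate invariant about heavy edges being absorbed into $F$ is needed at this level of the argument.

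Second, your epoch count is off by a factor of $1/\epsilon$. The ratio between the final and initial value of $\opt_\vecw$ is governed by the weight growth, which is $m^{O(1/\epsilon)}$ (not merely $\poly(n)$ independent of $\epsilon$; the scaling of \Cref{sec:polynomially bounded cost} controls the \emph{costs}, not the MWU weight blow-up). Hence $J=\log_{1+\epsilon} m^{O(1/\epsilon)}=O(\epsilon^{-2}\log m)$, matching the paper's bound on the number of ranges. This is harmless for the final $\tilde O(m/\epsilon^2)$ claim, since the first term becomes $\tilde O(m/\epsilon^2)$ rather than $\tilde O(m/\epsilon)$, and the $K$ term already sits at $\tilde O(m/\epsilon^2)$.
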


This proposition applies generally in the MWU framework independent of problems. That is, for our purpose of solving $k$ECSS LP, we need a fast range punisher for the minimum normalized free cut problem. For Chekuri and Quanrud~\cite{ChekuriQ17}, they need such algorithm for the minimum cut problem (therefore a fast LP solver for the Held-Karp bound).

\begin{theorem}[\cite{ChekuriQ17}, informal]
\label{thm: CQ range punisher}
There exists a fast range punisher for the minimum cut problem.
\end{theorem}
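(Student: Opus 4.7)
The plan is to realize the Chekuri--Quanrud range punisher by combining Karger's tree-packing reduction with a top-tree data structure that dynamically maintains minimum 2-respecting cuts under multiplicative weight increases. First, I would sample a family of $O(\log^2 n)$ random spanning trees from a Nash-Williams/Tutte tree packing of $(G,\vecwinit)$; by Karger's tree-packing theorem, with high probability every cut of value within a constant factor of $\lambda$ 2-respects (shares at most two edges with the fundamental-cycle cuts of) at least one sampled tree. Since \textsc{PunishMin} only increases weights and $\lambda$ is held fixed throughout the epoch, the packing sampled at the start continues to witness all cuts of value at most $(1+\epsilon)\lambda$, so no resampling is needed inside one epoch.

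For each sampled tree $T$, the data structure I would use is a top-tree / link-cut-tree hybrid supporting two amortized $\polylog(n)$ operations: a global \textsc{Query} returning the minimum-value cut that $1$- or $2$-respects $T$ under the current weights $\vecw$, and a \textsc{Scale}$(e,\alpha)$ operation that multiplies the weight of an edge $e$ (tree or non-tree) by $\alpha$ and propagates the change to the path aggregates along the fundamental cycle of $e$. The epoch's main loop repeats: query all trees for the global minimum $2$-respecting cut $C$; if $\val_{\vecw}(C) \geq (1+\epsilon)\lambda$, halt; otherwise invoke \textsc{PunishMin} on $C$, which multiplies every $e\in C$ by $\exp(\epsilon\, c_{\min}/c_e) \leq e^{\epsilon}$, and push the resulting \textsc{Scale} calls into each tree's data structure.

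The running-time accounting proceeds as follows. Preprocessing (tree packing plus initial data-structure construction) costs $\widetilde{O}(|E|)$. Each of the $K$ iterations pays $\widetilde{O}(1)$ for a global query and $\widetilde{O}(1)$ amortized per edge rescaled in $C$. Since one \textsc{PunishMin} multiplies the minimum-cost cut edge by the full factor $e^{\epsilon}$, the total number of elementary \textsc{Scale} operations over the epoch is bounded by $O\!\left(\tfrac{1}{\epsilon}\sum_{e\in E}\log(\vecw(e)/\vecwinit(e))\right)$, matching the claimed bound $\widetilde{O}\!\left(|E| + K + \sum_{e\in E}\log(\vecw(e)/\vecwinit(e))\right)$.

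The principal obstacle is the top-tree data structure that maintains minimum $2$-respecting cuts under multiplicative edge updates within $\polylog(n)$ amortized time. This is subtle because a single weight change simultaneously affects the value of every $2$-respecting cut threaded by that edge; an efficient implementation requires bucketing edges by approximate weight class so that only $\polylog(n)$ ``significant'' candidate pairs are revisited per update, together with lazy propagation of multiplicative scalars along heavy paths in the top tree. Fortunately this is precisely the machinery built in Chekuri and Quanrud's work, so I would invoke their construction as a black box rather than redevelop it.
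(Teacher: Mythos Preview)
The paper does not prove this statement; it is cited directly from \cite{ChekuriQ17} and used as a black box (the relevant pieces of the Chekuri--Quanrud machinery are later imported wholesale as Lemma~\ref{lem:canonical cuts} and Theorem~\ref{thm:cut listing}). Your high-level sketch---tree packing plus dynamic $1$-/$2$-respecting minimum cuts---is the right picture, and is in fact more detailed than anything the paper offers for this particular theorem.

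That said, your running-time accounting has a genuine gap. You say each iteration pays ``$\widetilde{O}(1)$ amortized per edge rescaled in $C$,'' and then bound the total number of \textsc{Scale} calls by $O\bigl(\tfrac{1}{\epsilon}\sum_{e}\log(\vecw(e)/\vecwinit(e))\bigr)$, citing the fact that the bottleneck edge is multiplied by the full $e^{\epsilon}$. But that observation only bounds the number of \emph{iterations} $K$, not the number of per-edge \textsc{Scale} operations: a single \textsc{PunishMin} touches \emph{every} edge of $C$, and non-bottleneck edges are multiplied by factors arbitrarily close to $1$, so $\sum_t |C^{(t)}|$ is not controlled by $\sum_e \log(\vecw(e)/\vecwinit(e))$. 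If you literally ``push the resulting \textsc{Scale} calls'' for each $e\in C$ as you wrote, the cost can be $\Theta(Km)$.

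The mechanism that actually makes the accounting work---and which is the heart of \cite{ChekuriQ17}---is that one never touches individual edges of $C$. Each approximate cut is represented as a disjoint union of $\widetilde{O}(1)$ precomputed canonical sets; the increment is recorded once per canonical set (so $\widetilde{O}(1)$ work per \textsc{PunishMin}), and an individual edge is explicitly ``reset'' only when its accumulated lazy increment crosses a $(1+\Theta(\epsilon))$ threshold, which happens at most $O\bigl(\tfrac{1}{\epsilon}\log(\vecw(e)/\vecwinit(e))\bigr)$ times. Your ``lazy propagation of multiplicative scalars along heavy paths in the top tree'' is a different, orthogonal form of laziness and does not by itself yield this bound. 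Since you end by invoking Chekuri--Quanrud as a black box anyway, the conclusion is fine, but the middle paragraph misidentifies where the amortization comes from.
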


Our key technical tool in this paper is a more robust reduction from the range punishing of normalized free cuts to the one for minimum cuts. This reduction works for all edge weights and is suitable for the dynamic setting.
That is, it is a strengthened version of Lemma~\ref{lem: baby mapping} and is summarized below (see its proof in \Cref{sec: range map}).
\begin{theorem}[Range Mapping Theorem]
\label{lem: full mapping}
Let $(G=(V,E),\vecw)$ be a weighted graph.
Let $\lambda > 0$ and $\rho = (1+\gamma)\lambda$.

\begin{enumerate}

    \item If the value of optimal normalized free cut is in $[\lambda, (1+\gamma)\lambda)$, then the value of minimum cut in $(G,\vecw_{\rho})$ lies in $[k\rho/(1+\gamma), k \rho)$.

    \item For any cut $C$ where $\vecw_{\rho}(C) <k \rho$, then
    $\frac{\vecw(C \setminus F)}{k-|F|} < (1+\gamma) \lambda$ where $F$ contains all $\rho$-heavy edges in $C$. In particular, $\val(C) < (1+\gamma) \lambda$.
\end{enumerate}
\end{theorem}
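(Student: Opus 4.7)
The plan is to generalize the two-directional argument already carried out for \Cref{lem: baby mapping}, since both parts of the Range Mapping Theorem are essentially the same statements with $\opt_{\vecw}$ replaced by the external parameter $\lambda$. The key identity driving everything is that $\rho$-heavy edges contribute exactly $\rho$ to $\vecw_\rho$, while non-$\rho$-heavy edges satisfy $\vecw_\rho(e)=\vecw(e)$; combined with the definition $\rho = (1+\gamma)\lambda$, this lets us move between the two weight functions with sharp control.

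For part (1), I would establish the two inequalities separately. For the upper bound $\mincut(G,\vecw_\rho) < k\rho$, take any optimal normalized free cut $C$ with optimal free edge set $F$, so $|F|\le k-1$ and $\vecw(C\setminus F) = (k-|F|)\opt_{\vecw}$. Truncation gives $\vecw_\rho(C\setminus F) \le \vecw(C\setminus F) = (k-|F|)\opt_{\vecw} < (k-|F|)\rho$ (using $\opt_{\vecw} < (1+\gamma)\lambda = \rho$), and $\vecw_\rho(F) \le |F|\rho$, so $\vecw_\rho(C) < k\rho$. For the lower bound $\mincut(G,\vecw_\rho) \ge k\rho/(1+\gamma) = k\lambda$, take any cut $C'$ and let $F'\subseteq C'$ be the $\rho$-heavy edges. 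If $|F'|\ge k$ then $\vecw_\rho(C') \ge k\rho$ and we are done; otherwise $F'$ is a valid free edge set so $\vecw(C'\setminus F')/(k-|F'|) \ge \opt_{\vecw} \ge \lambda$, and combining $\vecw_\rho(C'\setminus F') = \vecw(C'\setminus F') \ge (k-|F'|)\lambda$ with $\vecw_\rho(F') = |F'|\rho = |F'|(1+\gamma)\lambda$ telescopes to $\vecw_\rho(C') \ge k\lambda + |F'|\gamma\lambda \ge k\lambda$.

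For part (2), given a cut $C$ with $\vecw_\rho(C) < k\rho$ and $F$ the set of all $\rho$-heavy edges in $C$, I would first argue $|F| \le k-1$: otherwise $\vecw_\rho(C) \ge |F|\rho \ge k\rho$, a contradiction. Then every edge in $C\setminus F$ is non-heavy so $\vecw(C\setminus F) = \vecw_\rho(C\setminus F) = \vecw_\rho(C) - |F|\rho < k\rho - |F|\rho = (k-|F|)\rho = (k-|F|)(1+\gamma)\lambda$, giving $\vecw(C\setminus F)/(k-|F|) < (1+\gamma)\lambda$. Since this particular $F$ is an admissible free edge set, $\val(C)$ is no larger than this ratio, yielding the ``in particular'' clause.

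I do not expect any major obstacle here: the entire argument is algebraic manipulation around the identity $\vecw_\rho(e)=\rho$ for heavy edges, and the only subtle point is being careful that part (2) uses \emph{all} $\rho$-heavy edges as $F$ (not the $F$ optimizing $\val(C)$), which is both why the bound on $|F|$ falls out of $\vecw_\rho(C) < k\rho$ and why $\vecw(C\setminus F)$ collapses to $\vecw_\rho(C\setminus F)$. The main care is simply to track that $\opt_{\vecw}$ is replaced throughout by $\lambda$, which loosens the conclusion of \Cref{lem: baby mapping} from ``$(1+\gamma)$-approximation'' into the explicit two-sided window $[k\rho/(1+\gamma), k\rho)$ needed to drive the range punisher.
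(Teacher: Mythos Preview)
Your proposal is correct and follows essentially the same approach as the paper. The paper packages the case split on whether $|F|\le k-1$ into an auxiliary notion of ``interesting'' cuts and an if-and-only-if lemma relating $\val_{\vecw}(C,H_{\vecw,\rho}\cap C)<\rho$ to $\vecw_\rho(C)<k\rho$, but the underlying computations (the identity $\vecw_\rho(C)=\vecw(C\setminus F)+\rho|F|$ for $F$ the heavy edges, and the telescoping lower bound) are identical to yours; your direct inline argument is arguably cleaner.
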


Given the above reduction, we can implement range punisher fast. We present its full proof in \Cref{sec:fast LP solver} and sketch the argument below.

\begin{theorem}
\label{thm: free cut range punisher}
There exists a fast range punisher for the minimum normalized free cut problem.
\end{theorem}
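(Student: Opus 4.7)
The plan is to reduce fast range punishing for normalized free cuts to the fast min-cut range punisher of Chekuri--Quanrud (\Cref{thm: CQ range punisher}), via the Range Mapping Theorem (\Cref{lem: full mapping}). Fix $\gamma = \epsilon$ and $\rho = (1+\gamma)\lambda$, and operate on the truncated weighted graph $(G, \vecw_\rho)$. A short sharpening of \Cref{lem: baby mapping} gives a clean equivalence: for any weight function $\vecw$, $\opt_{\vecw} \geq \rho$ (as a normalized free cut) if and only if the ordinary minimum cut value of $(G, \vecw_\rho)$ is at least $k\rho$. The $(\Rightarrow)$ direction is the contrapositive of part~2 of \Cref{lem: full mapping}; for $(\Leftarrow)$, any cut $C$ with $\val_{\vecw}(C) < \rho$ admits an $F$ certifying $\vecw_\rho(C) \leq \vecw(C\setminus F) + |F|\rho < (k-|F|)\rho + |F|\rho = k\rho$. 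Since $\vecw_\rho \geq \vecw_\lambda$ pointwise, the input guarantee $\lambda \leq \opt_{\vecwinit}$ translates via this equivalence (applied at threshold $\lambda$) into the CQ precondition $k\lambda \leq$ min cut of $(G, \vecwinit_\rho)$.

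The algorithm then simulates the CQ range punisher on $(G, \vecw_\rho)$ with threshold $\lambda_{CQ} = k\lambda$ and accuracy $\epsilon_{CQ} = \epsilon$. Each time CQ wishes to punish a cut $C$, we have $\vecw_\rho(C) < k\rho$ (CQ only punishes cuts strictly below its stopping threshold); we let $F$ be the set of $\rho$-heavy edges in $C$---the counting argument from \Cref{lem: baby mapping} yields $|F| \leq k-1$---and invoke our \textsc{PunishMin}, multiplying $\vecw(e)$ by $\exp(\epsilon c_{\min}/c_e)$ for each $e \in C \setminus F$ and synchronously updating $\vecw_\rho(e) \gets \min(\vecw(e), \rho)$. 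By part~2 of \Cref{lem: full mapping}, $(C, F)$ is a $(1+\epsilon)$-approximate minimum normalized free cut, as required by the \textsc{PunishMin} specification. The CQ halting condition---truncated OPT reaching $k\rho$---coincides via the equivalence with the desired stopping $\opt_{\vecw} \geq (1+\epsilon)\lambda$.

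For the running time, CQ yields $\widetilde{O}(|E| + K + \sum_e \log(\vecw_\rho(e)/\vecwinit_\rho(e)))$, and the cut count $K$ is the same in both views. After a single preprocessing pass truncating any edge whose initial weight exceeds $\rho$, we have $\vecwinit_\rho = \vecwinit$, and since $\vecw_\rho(e) \leq \vecw(e)$ throughout, the logarithmic term is dominated by $\sum_e \log(\vecw(e)/\vecwinit(e))$, so the bound transfers to the required form. Maintenance of $\vecw_\rho$ under edge updates adds only $O(|E|)$ extra work in total, since each edge crosses the threshold $\rho$ at most once.

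The main obstacle is checking that CQ's analysis absorbs two deviations from a vanilla min-cut punisher: our \textsc{PunishMin} does not touch free edges in $F$, and truncated weights freeze once they hit $\rho$. Both deviations are in our favor---skipping updates only reduces the amortized work charged by CQ's potential argument, and freezing already-heavy edges does not alter the set of minimum cuts of $(G, \vecw_\rho)$ that CQ's oracle must track (additional multiplication of an edge already pinned at $\rho$ would not affect $\vecw_\rho$ anyway). A small constant-factor reshuffling between the outer accuracy $\epsilon$, the truncation parameter $\gamma$, and CQ's internal approximation slack absorbs any approximation loss accumulated along the reduction, completing the proof of \Cref{thm: free cut range punisher}.
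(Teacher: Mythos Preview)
Your high-level approach matches the paper's: reduce to the Chekuri--Quanrud min-cut range punisher via weight truncation at $\rho = (1+\epsilon)\lambda$, use \Cref{lem: full mapping} to certify that each punished cut is an approximate normalized free cut, and charge the running time to the sum-of-log potential on truncated weights. Your equivalence ``$\opt_{\vecw} \geq \rho$ iff the truncated mincut is at least $k\rho$'' is a clean reformulation of the two halves of \Cref{lem: full mapping}, and your derivation of the CQ precondition from $\opt_{\vecwinit} \geq \lambda$ is correct.

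The gap is in your last two paragraphs, where you try to dismiss as routine exactly the complications the paper flags: ``we cannot invoke \Cref{thm: CQ range punisher} in a blackbox manner.'' Concretely, each returned cut $C$ arrives as an $\ot(1)$-bit canonical description, and per punish you must (i) identify the heavy-edge set $F = H_{\vecw,\rho} \cap C$, (ii) compute $c_{\min} = \min_{e \in C \setminus F} c(e)$, and (iii) lazily multiply weights on $C \setminus F$, all in $\ot(1)$ amortized time. Your claim that ``skipping updates only reduces the amortized work charged by CQ's potential argument'' addresses only the \emph{accounting} of updates, not their \emph{computation}: the difficulty is not bounding how many increments occur but determining, without enumerating $C$, which edges are free and what $c_{\min}$ is (note $c_{\min}$ depends on $F$, so the value CQ would use internally is wrong). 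Likewise, your remark that each edge crosses the $\rho$-threshold at most once handles only the maintenance of $\vecw_\rho$, not the per-punish identification of $F$. The paper resolves this by building a dedicated data structure (the Truncated Lazy MWU Increment, \Cref{thm:tlmi}) that keeps per-canonical-set priority queues over the non-heavy edges so as to extract $c_{\min}$ and detect newly-heavy edges in $\ot(1)$ amortized time; this is the substantive content of \Cref{sec:fast LP solver} and \Cref{sec:lazy MWU increment}, and it is not absorbed by ``constant-factor reshuffling.''
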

\begin{proof}(sketch)
We are given $\lambda$ and weighted graph $(G,\vecw): \vecw=\vecwinit$ such that $\opt_{\vecwinit} \geq \lambda$.  Our goal is to punish the normalized free cuts until the optimal value in $(G,\vecw)$ becomes at least $(1+\epsilon)\lambda$. We first invoke \Cref{thm: warmup} to get a $(1+\epsilon)$-approximate cut, and if the solution is already greater than $(1+\epsilon)^2\lambda$, we are immediately done (this means $\opt > (1+\epsilon)\lambda$).

Now, we know that $\opt \leq (1+\epsilon)^2 \lambda \leq (1+3\epsilon) \lambda$.
We invoke Lemma~\ref{lem: full mapping}(1) with $\gamma = 3\epsilon$.
The minimum cut in $(G,\vecw_{\rho})$ has size in the range $[k\rho/ (1+3\epsilon), k \rho)$. We invoke (one iteration of) Theorem~\ref{thm: CQ range punisher} with $\lambda' = k \rho (1+3/\epsilon)$ to obtain a cut $C$ whose size is less than $k \rho$ and therefore, by Lemma~\ref{lem: full mapping}(1), $\val(C) < (1+3\epsilon) \lambda$. This is a cut that our algorithm can punish (we ignore the detail of how we actually punish it -- we would need to do that implicitly since the cut itself may contain up to $m$ edges). We repeat this process until all cuts whose values are relevant have been punished, that is, we continue this process until the returned cut $C$ has size at least $k \rho$.

The running time of this algorithm is
$$\widetilde{O}\left(|E| + K + \sum_{e \in E}\log ( \frac{\vecw_{\rho}(e)}{\vecwinit_{\rho}(e)})\right) \leq \widetilde{O}\left(|E| + K + \sum_{e \in E}\log ( \frac{\vecw (e)}{\vecwinit (e)})\right)$$
Notice that we rely crucially on the property of our reduction using truncated weights.
\end{proof}

We remark that in the actual proof of Theorem~\ref{thm: free cut range punisher}, there are quite a few technical complications (e.g., how to find optimal free edges for a returned cut $C$?), and we cannot invoke Theorem~\ref{thm: CQ range punisher} in a blackbox manner. We refer to \Cref{sec:fast LP solver} for the details. %

\subsection{LP Rounding for $k$ECSS}\label{subsec: LP rounding}

%We provide a sketch of the LP rounding algorithm.
Most known techniques for $k$ECSS (e.g.~\cite{GabowGW98,laekhanukit2012rounding}) rely on iterative LP rounding, which is computationally expensive. %not algorithmic-friendly for near-linear time algorithms.
%Apparently, the only algorithm suitable for fast computation is the one %by
We achieve fast running time by making use of the 2-approximation algorithm of Khuller and Vishkin~\cite{khuller1993approximation}.

Roughly speaking, this algorithm creates a directed graph $H$ from the original graph $G$ and then compute on $H$ the minimum-weight $k$ disjoint arboresences. The latter can be found by Gabow's algorithms, in either
$\tilde{O}(|E||V| k)$ or $\tilde{O}(k|E| \sqrt{|V|} \log c_{\max})$
time.

To use their algorithm, we will construct $H$ based on the support
of the fractional solution $x$ computed by the LP solver. By the integrality of the arborescence polytope~\cite{schrijver2003combinatorial}, an integral
solution is as good as the fractional solution. However, the support of $x$ can be potentially large, which causes Gabow's algorithm to take longer time. Here our idea is a sparsification of the support, by extending the celebrated sparsification theorem of Benzcur and Karger~\cite{BenczurK15} to handle our problem, i.e., we prove the following (see~\Cref{sec: rounding} for the proofs):

\begin{theorem}
\label{thm: sparsification}
Let $G$ be a graph and $c_G$ its capacities. There exists a capacitated graph $(H, c_H)$ on the same set of vertices that can be computed in
$\ot(m)$
%near-linear time
such that (i) $|E(H)| = \ot(n k )$, and (ii) for every cut $S$ and $F \subseteq S: |F| \leq (k-1)$, we have $c_G(S~\setminus~F)~=~(1~\pm~\epsilon)~c_H(S~\setminus~F)$.
\end{theorem}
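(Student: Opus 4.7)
The plan is to extend the Benczur--Karger cut sparsifier to preserve the quantities $c_G(\delta(S)\setminus F)$ for all pairs $(S,F)$ with $F\subseteq\delta(S)$ and $|F|\le k-1$ simultaneously, via a sampling rate that is boosted by a factor of $\Theta(k)$ over the standard strength-based rate and by a correspondingly strengthened union bound. The key reduction is the identity $c_G(\delta(S)\setminus F)=c_{G\setminus F'}(\delta_{G\setminus F'}(S))$ for $F'=F\subseteq E$. Hence it suffices to produce a single random subgraph $H$ that is simultaneously a standard $(1\pm\epsilon)$ cut sparsifier of $G\setminus F'$ for \emph{every} $F'\subseteq E$ with $|F'|\le k-1$, of which there are at most $m^{k-1}\le n^{O(k)}$.

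For the construction, I would first compute approximate edge strengths $\lambda_e$ of $G$ in $\tilde O(m)$ time using the Benczur--Karger algorithm. Then I would sample each edge $e$ independently with probability $p_e=\min\{1,\; C k\log^2 n/(\epsilon^2\lambda_e)\}$ for a sufficiently large constant $C$, reweighting each sampled edge by $c_e/p_e$. Using the classical bound $\sum_e 1/\lambda_e\le n-1$, the expected size of $H$ is $\sum_e p_e=O(nk\log^2 n/\epsilon^2)=\tilde O(nk)$, and the whole construction runs in $\tilde O(m)$ time. A Chernoff bound at the end turns the expected bound into a high-probability bound.

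For correctness, I would fix $F'\subseteq E$ with $|F'|\le k-1$ and apply the standard Benczur--Karger concentration argument inside $G\setminus F'$. The cut-counting lemma bounds the number of cuts of value $\le \alpha\cdot \mathrm{mincut}$ by $n^{2\alpha}$, and Chernoff bounds combined with the $k$-boosted sampling probabilities yield per-cut failure probability $n^{-\Omega(\alpha k)}$; a union bound over the values of $\alpha$ gives overall failure probability $n^{-\Omega(k)}$ for the fixed $F'$. A further union bound over the $\le n^{O(k)}$ choices of $F'$ leaves total failure probability at most $1/\poly(n)$, provided the constant $C$ is chosen large enough.

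The main obstacle is justifying the concentration argument inside $G\setminus F'$ while sampling with probabilities derived from strengths computed in $G$. Deleting edges can strictly decrease strengths, so one cannot simply reuse $\lambda_e$ as-is; here I would invoke the Fung--Hariharan--Harvey--Panigrahi framework, in which sampling probabilities only need to dominate $\log n/(\epsilon^2\kappa_e)$ for any valid lower bound $\kappa_e$ on the parameter relevant to cuts of $G\setminus F'$. A nuclear-decomposition argument shows that removing $k-1$ edges reduces each edge's strength by at most a factor of $k$, so $\lambda_e/k$ is a valid lower bound, which is precisely the $k$-fold slack introduced by boosting $p_e$. Subtle points to verify carefully are (i) that the union bound over $F'$ is compatible with the independence used in Chernoff, which holds since the randomness of $H$ is fixed before $F'$ is chosen in the analysis, and (ii) that the weighted (capacitated) version of the cut-counting lemma still applies uniformly over the $n^{O(k)}$ deleted subgraphs, which follows from its standard proof via randomized contraction.
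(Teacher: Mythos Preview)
Your high-level plan---boost the Bencz\'ur--Karger sampling rate by $\Theta(k)$ and union-bound over the $\le m^{k-1}$ choices of $F'$---matches the paper. But the step you flag as the main obstacle is a real gap, and your proposed fix does not work. The claim that removing $k-1$ edges reduces each edge's strength by at most a factor of $k$ is false for capacitated graphs: take the triangle on $\{u,v,w\}$ with $c(uv)=c(vw)=100$ and $c(uw)=1$; every edge has strength $101$, yet deleting the single edge $uv$ leaves a path in which $uw$ has strength $1$, a drop by a factor of $101$. In general one high-capacity edge can collapse strengths arbitrarily, so no bound of the form $\kappa_e^{G\setminus F'}\ge \kappa_e^G/k$ holds, and you cannot transport the black-box Bencz\'ur--Karger/FHHP guarantee from $G$ to $G\setminus F'$ via strengths.

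The paper avoids this by opening the Bencz\'ur--Karger proof rather than using it as a black box. After the standard reduction to uniform instances (edge weights $\le 1$, minimum cut $\ge D=\Omega(k\log n)$)---computed once in $G$---the union bound $\sum_C\sum_{|S|\le k-1}\mu_{C\setminus S}$ is controlled directly through Karger's auxiliary deletion experiment: pre-deleting any $\le k-1$ edges creates at most $k-1$ additional connected components, hence at most a $2^{k}$ blow-up in the number of empty cuts; together with the ${m\choose k}$ choices of $S$ this gives $\mathbb E\bigl[{m\choose k}\,2^{R+k}\bigr]=O\bigl((2em/k)^k n^2 p^{D}\bigr)$, which vanishes for $D=\Omega(k\log n)$. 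This argument never compares strengths in $G$ and $G\setminus F'$; it only uses that deleting $k-1$ edges adds at most $k-1$ components, a purely combinatorial fact that is insensitive to capacities. Your sampling and size bound are fine; it is only the concentration step that needs to be redone along these lines.
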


Benzcur and Karger's theorem corresponds to this theorem when $k = 1$.  We believe that this theorem might have further applications, e.g., for providing a fast algorithm for the connectivity interdiction problem.
Our result implies the following (see Section~\ref{sec: rounding} for the proof):

\begin{theorem} \label{thm: fast rounding}
Assume that there exists an algorithm that finds a minimum-weight $k$-arborescences in an $m$-edge $n$-node graph in time  $T_k(m,n)$. Then there exists a $(2+\epsilon)$ approximation algorithm for $k$ECSS running in time $\tilde{O}(m/\epsilon^2 + T_k(kn/\epsilon^2,n))$
%\attention{the first term should be $m/\eps^2$?}.
\end{theorem}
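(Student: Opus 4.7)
The plan is to compose three ingredients developed earlier in the paper: the near-linear fractional LP solver of \Cref{thm:intro:fractional $k$ECSS}, the cut-sparsification result of \Cref{thm: sparsification}, and a Khuller--Vishkin-style rounding that exploits the integrality of the $k$-arborescence polytope. The algorithm proceeds in three stages. First, I would invoke \Cref{thm:intro:fractional $k$ECSS} to obtain, in time $\widetilde O(m/\epsilon^2)$, a fractional $k$ECSS solution $x \in [0,1]^E$ with $\sum_e c_e x_e \leq (1+\epsilon)\,\opt$. Second, viewing $x$ as an edge-capacity function on $G$, I invoke \Cref{thm: sparsification} in time $\widetilde O(m)$ to compute a sparse weighted graph $(H, c_H)$ on $V$ with $|E(H)| = \widetilde O(nk/\epsilon^2)$ such that $c_H(\delta_H(S)\setminus F) = (1\pm\epsilon)\,x(\delta_G(S)\setminus F)$ for every cut $S$ and every $F$ with $|F|\leq k-1$.

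The third stage is the rounding. I would construct a bidirected graph $H'$ by replacing each undirected edge $e \in E(H)$ with two oppositely oriented arcs, each inheriting the original cost $c_e$, and for an arbitrary root $r\in V$ invoke the assumed algorithm on $H'$ to compute a minimum-weight union of $k$ edge-disjoint out-$r$-arborescences in time $T_k(\widetilde O(nk/\epsilon^2), n)$. The algorithm returns the undirected support of these arborescences, which is $k$-edge-connected since every vertex has $k$ arc-disjoint $r\to u$ paths (and, by the symmetry of bidirection, $k$ arc-disjoint $u\to r$ paths).

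For the approximation guarantee, I would use the sparsification guarantee with $F=\emptyset$ to get $c_H(\delta_H(S)) \geq (1-\epsilon)\,x(\delta_G(S)) \geq (1-\epsilon)k$ for every $S$, so the rescaled capacities $\tfrac{1}{1-\epsilon}c_H$ on the arcs of $H'$ form a fractional certificate that every $r$-cut value is at least $k$. Combining Edmonds' theorem with the integrality of the $k$-arborescence polytope~\cite{schrijver2003combinatorial}, the minimum-weight integer $k$-arborescence in $H'$ has cost at most $\tfrac{2}{1-\epsilon}\sum_e c_e c_H(e)$, where the factor $2$ is the bidirecting overhead. Using the standard Benczur--Karger-style property that the sparsification of \Cref{thm: sparsification} also preserves the weighted total $\sum_e c_e c_H(e) = (1\pm\epsilon)\sum_e c_e x_e$, the returned cost is $(2+O(\epsilon))\,\opt$, and rescaling $\epsilon$ by a constant yields the claimed $(2+\epsilon)$ ratio. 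Summing the three running times gives $\widetilde O(m/\epsilon^2 + T_k(nk/\epsilon^2, n))$.

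The main obstacle is ensuring that the two $(1\pm\epsilon)$ slack terms, from the LP approximation and from the sparsification, compose cleanly with the $2$-factor from bidirecting into a single $(2+\epsilon)$ approximation. A secondary subtlety is that the sparsified capacities $c_H(e)$ may exceed $1$, so the box constraints of the $k$ECSS LP cannot be directly inherited by $H$; the entire cost argument must proceed through the bidirected $k$-arborescence LP using only the free-cut guarantee of \Cref{thm: sparsification} and the integrality of the arborescence polytope, rather than through the $k$ECSS LP on $H$ itself.
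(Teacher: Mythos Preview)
Your high-level strategy---solve the LP, sparsify the support, bidirect, run a $k$-arborescence algorithm---is exactly the paper's, but two steps in your cost chain are not justified as written.

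The main gap is the claim $\sum_e c_e\, c_H(e) = (1\pm\epsilon)\sum_e c_e x_e$. \Cref{thm: sparsification} only promises that \emph{free-cut values} are preserved; the weighted total cost is not a cut value, and cut-sparsification alone does not control it. While $\mathbb{E}[\sum_e c_e c_H(e)] = \sum_e c_e x_e$ by linearity, concentration fails in general because the summands $c_e x_e/p_e$ are unbounded when $c_e$ is large and $p_e$ is small. The paper handles precisely this in \Cref{lem:sparsex} (which underlies \Cref{lem:spase graph}) by sampling each edge with probability $r_e=\max(p_e,q_e)$, where $q_e \propto c_e x_e/\text{cost}(x)$ is an extra cost-based importance-sampling term; a separate Chernoff bound then gives the cost guarantee. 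Without this, your inequality is unsupported.

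The second gap is the box-constraint workaround. Your plan to ``proceed through the bidirected $k$-arborescence LP'' does not sidestep the issue: the polytope in \Cref{thm:arbor-int} is itself box-constrained, so the fact that the bidirected $\tfrac{1}{1-\epsilon}c_H$ meets the cut constraints does not by itself upper-bound the integer $k$-arborescence optimum (dropping the box yields a polytope whose integer points may use an arc multiply, and their undirected support need not be $k$-edge-connected). The fix is immediate from the very free-cut guarantee you cite: since $c_H/(1-\epsilon)$ satisfies all KC inequalities, \Cref{lem:KC for box} produces $y\le c_H/(1-\epsilon)$ with $y\in[0,1]$ feasible for the box-constrained $k$ECSS LP, and bidirecting $y$ gives a feasible point in the arborescence polytope of cost at most $\tfrac{2}{1-\epsilon}\sum_e c_e c_H(e)$. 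The paper packages both fixes into \Cref{lem:spase graph}, obtaining a sparse \emph{subgraph} $G'$ with $LP_{kECSS}(G')\le(1+O(\epsilon))\,LP_{kECSS}(G)$, after which the clean chain $\opt_{ar}(D[G'])=LP_{ar}(D[G'])\le 2\,LP_{kECSS}(G')$ from \Cref{thm: KV} applies directly.
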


Applying \Cref{thm: fast rounding} with the Gabow's algorithm (see~\Cref{thm:fast-karbor} in \Cref{sec: rounding}), we obtain  \Cref{thm:intro:integral $k$ECSS}.
%In particular, a near-linear time algorithm for computing a minimum-weight $k$-arborescences would imply the same for $(2+\epsilon)$-approximating $k$ECSS.

\section{Preliminaries}
\label{sec:prelim}

In this section, we review the multiplicative-weight update (MWU) framework for solving a (covering) LP relaxation of the form $\min \{c \cdot x: A x \geq 1, x \geq 0\}$, where $A$ is an $m$-by-$n$ matrix with non-negative entries and $c \in {\mathbb R}_{\geq 0}^n$. 
Our presentation abstracts away the detail of MWU, so readers should feel free to skip this section. 

Let $A_1, \ldots, A_m$ be the rows of matrix $A$.
Here is a concrete example:
\begin{itemize}
\item \textbf{Held-Karp Bound:} The Held-Karp bound  on input $(G,c)$ aims at solving the LP:\footnote{We refer the readers to~\cite{ChekuriQ17} for more discussion about this LP and Held-Karp bound.}
$$\min \{\sum_{e \in E(G)} c_e x_e: \sum_{e \in S} x_e \geq 2 \mbox{ for any cut $S \subseteq E$}\} $$ %
Matrix $A = A_G$  is a cut-edge incidence matrix of graph $G$ where each row $A_j$ corresponds to a cut $F_j \subseteq E(G)$, so there are exponentially many rows.
Each column corresponds to an edge $e \in E(G)$. There are exactly $|E(G)|$ columns.
The matrix is implicitly given as an input graph $G$.

\end{itemize}

We explain the MWU framework in terms of matrices. Some readers may find it more illustrative to work with concrete problems in mind.

\paragraph*{MWU Framework for Covering LPs:} In the MWU framework for solving covering linear programs, we are given as input an $m$-by-$n$ matrix $A$ and cost vectors $c$ associated with the columns.\footnote{There are several ways to explain such a framework. Chekuri and Quanrud~\cite{ChekuriQ17} follow the continuous setting of Young~\cite{Young14}. We instead follow the combinatorial interpretation of Garg and K\"{o}nemann~\cite{GargK07}.}
Let $\epsilon >0$ be a parameter; that is, we aim at computing a solution $x$ that is $(1+\epsilon)$ approximation of the optimal LP solution.
Denote by $\textsc{MinRow}(A,w)$ the value $\min_{j \in [m]} A_j w$.
We start with an initial weight vector $\textbf{w}^{(0)}_i = 1/c_i$ for $i \in [n]$.
On each day $t=1,\ldots, T$, we compute an approximately ``cheapest'' row $j^*$ such that $A_{j^*} \vecw^{(t-1)} \leq (1+\epsilon) \textsc{MinRow}(A, \vecw^{(t-1)})$, and update the weight ${\bf w}^{(t)}_i \leftarrow \textbf{w}^{(t-1)}_i \exp\left({\frac{\epsilon A_{j^*,i} c_{\min}}{c_i}}\right)$ where $c_{\min} = \min_{i \in [n]} \frac{c_i}{A_{j^*,i}}$.\footnote{In the MWU literature, this is often referred to as an oracle problem.}
After $T = O(n\log n/\epsilon^2)$ many days, the solution can be found by taking the best scaled vectors; in particular, observe that, for any day $t$, the scaled vector $\bar{\textbf{w}}^{(t)}= \textbf{w}^{(t)}/\left(\min_{j \in [m]} A_j \textbf{w}^{(t)}\right)$ is always feasible for the LP.
The algorithm returns $\bar{\textbf{w}}^{(t)}$ which has minimum cost.
The following theorem shows that at least one such solution is near-optimal.

\begin{theorem}
For $T= O(\frac{n\log n}{\epsilon^2})$, one of the solutions $\bar{\vecw}^{(t)}$ for $t \in [T]$ is a $(1+O(\epsilon))$~approximation of the optimal solution $\min \{ c\cdot x: A x \geq 1, x \geq 0\}$.
\label{thm:mwu}
\end{theorem}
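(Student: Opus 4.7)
The plan is to carry out the classical multiplicative weights analysis for covering LPs, in the style of Garg--K\"{o}nemann and Plotkin--Shmoys--Tardos. Set $\Phi^{(t)} := c \cdot \vecw^{(t)}$ as the potential, so that $\Phi^{(0)} = n$. Because $A\bar{\vecw}^{(t)}\ge \mathbf{1}$ by construction, every $\bar{\vecw}^{(t)}$ is LP-feasible and its cost equals $\Phi^{(t)}/\textsc{MinRow}(A,\vecw^{(t)})$. The goal therefore reduces to proving that, among the $T$ iterations, some index $t^*$ achieves $\Phi^{(t^*)}/\textsc{MinRow}(A,\vecw^{(t^*)}) \le (1+O(\epsilon))\,\opt$, where $\opt$ is the LP optimum.

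The key per-iteration inequality will come from the choice of $c_{\min}^{(t)} = \min_i c_i/A_{j^*,i}$, which guarantees that every exponent $\epsilon A_{j^*,i}c_{\min}^{(t)}/c_i$ lies in $[0,\epsilon]$. Applying $e^y \le 1 + (1+\epsilon)y$ there and using the $(1+\epsilon)$-approximate oracle yields
\[
\Phi^{(t)} \;\le\; \Phi^{(t-1)}\Bigl(1 + (1+\epsilon)^2\,\epsilon\,c_{\min}^{(t)}\,\textsc{MinRow}(A,\vecw^{(t-1)})/\Phi^{(t-1)}\Bigr).
\]
Let $\beta := \min_{s\le T} \Phi^{(s)}/\textsc{MinRow}(A,\vecw^{(s)})$ denote the best cost achieved; feasibility of $\bar{\vecw}^{(s-1)}$ implies $\textsc{MinRow}(A,\vecw^{(s-1)})/\Phi^{(s-1)} \le 1/\beta$. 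Telescoping the logarithm then gives $\ln(\Phi^{(T)}/n) \le (1+\epsilon)^2\epsilon \sum_t c_{\min}^{(t)}/\beta$. In parallel, I would derive a matching lower bound $\ln \Phi^{(T)} \ge \epsilon\sum_t c_{\min}^{(t)}/\opt$ via the standard primal--dual trick: the sequence $(j^*_t,c_{\min}^{(t)})$ induces an unnormalized dual $y_j := \sum_{t:j^*_t=j}c_{\min}^{(t)}$, whose rescaling to $A^\top y\le c$ has value $\sum_t c_{\min}^{(t)}/\max_i S_i$ with $S_i := (A^\top y)_i/c_i$; weak LP duality forces $\max_i S_i \ge \sum_t c_{\min}^{(t)}/\opt$, while on the other hand $\Phi^{(T)}\ge \max_i c_i\vecw_i^{(T)} = \max_i \exp(\epsilon S_i)$.

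Combining the two bounds gives $\epsilon\sum_t c_{\min}^{(t)}/\opt \le \ln n + (1+\epsilon)^2\epsilon\sum_t c_{\min}^{(t)}/\beta$. If $\beta > (1+C\epsilon)\opt$ for a sufficiently large absolute constant $C$, rearrangement yields $\sum_t c_{\min}^{(t)} \le O(\opt\,\ln n/\epsilon^2)$, forcing the algorithm to hit $\beta \le (1+O(\epsilon))\opt$ within the allotted number of iterations. The last step---and the main obstacle to write cleanly---is the accounting that matches this weight-progress bound to the iteration count $T = O(n\log n/\epsilon^2)$: one needs a per-iteration progress lemma of the form $c_{\min}^{(t)} = \Omega(\opt/n)$, which typically requires the standard normalization $\min_i c_i = 1$ and $\max_i c_i = \poly(n)$ achieved by preprocessing and cost-scaling. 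The $\log n$ factor in $T$ is inherited from $\ln \Phi^{(0)} = \ln n$, and the $\epsilon^{-2}$ is the familiar slack from the quadratic error term in $e^y \le 1 + (1+\epsilon)y$; everything else is routine bookkeeping.
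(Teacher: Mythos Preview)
Your potential-function argument matches the paper's proof almost step for step: same $\Phi^{(t)}=c^T\vecw^{(t)}$, same upper bound via $e^y\le 1+(1+\epsilon)y$ and telescoping, same lower bound $\Phi^{(T)}\ge\exp(\epsilon\max_i S_i)$ from the dual congestion $S_i=(A^\top y)_i/c_i$, and the same appeal to weak duality to compare $\beta$ against $\opt$.

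The gap is precisely the one you flag as ``the main obstacle,'' and your proposed resolution does not close it. The per-step lower bound $c_{\min}^{(t)}=\Omega(\opt/n)$ fails in general even with the normalization you suggest: with a $0$--$1$ matrix and $c_i\in[1,\poly(n)]$ one only gets $c_{\min}^{(t)}\ge 1$, while $\opt$ can itself be $\poly(n)$, so $c_{\min}^{(t)}/\opt$ is merely $1/\poly(n)$ and you end up with a polynomial iteration bound rather than $O(n\log n/\epsilon^2)$. The paper sidesteps this by measuring progress on the \emph{congestion} side instead of the \emph{step-size} side: at iteration $t$ the bottleneck column $i^*$ realizing $c_{\min}^{(t)}=c_{i^*}/A_{j^*,i^*}$ has its $S_{i^*}$ increase by exactly $\delta(t)A_{j^*,i^*}/c_{i^*}=1$, so $\sum_i S_i$ grows by at least $1$ per iteration and hence $\max_i S_i\ge T/n$ after $T$ steps. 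Taking $T\ge n\ln n/\epsilon^2$ gives $\max_i S_i\ge \ln n/\epsilon^2$, which absorbs the additive $(\ln n)/\epsilon$ term in your combined inequality and yields $\beta\le(1+O(\epsilon))\opt$ directly---no cost preprocessing required. Replace your last paragraph with this pigeonhole-on-congestion argument and the proof is complete.
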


Since we use slightly different language than the existing proofs in the literature, we provide a proof in the appendix.

\paragraph*{KC Inequalities:} Our LP is hard to work with mainly because of the mixed packing/covering constraints $x \in [0,1]^n$.
There is a relatively standard way to get rid of the mixed packing/covering constraints by adding Knapsack covering (KC) inequalities into the LP.
In particular, for each row (or constraint) $j \in [m]$, we introduce  constraints:

\[\left(\forall F \subseteq \supp(A_j),|F|\leq (k-1) \right): \sum_{i \in [n] \setminus F} A_{j,i} x_i \geq k- |F| \mbox{, or } \sum_{i \in [n] \setminus F} \frac{A_{j,i}}{(k-|F|)} x_i \geq 1 \]

Let $A^{\textsf{kc}}$ be the new matrix after adding KC inequalities, that is, imagine the row indices of $A^{\textsf{kc}}$ as $(j,F)$ where $j \in [m]$ and $F \subseteq \supp(A_j)$; we define $A^{\textsf{kc}}_{(j,F),i} = A_{j,i}/(k-|F|)$.
The actual number of rows in $A^{\textsf{kc}}$ can be as high as $m \cdot n^{O(k)}$, but our algorithm will not be working with this matrix explicitly.

The following lemma shows that we can now remove the packing constraints. We defer the proof to Appendix.

\begin{lemma}
\label{lem:KC for box}
Any solution to $\{x \in {\mathbb R}^n: A^{kc} x \geq 1, x \geq 0\}$ is feasible for $\{x \in {\mathbb R}^n: A x \geq k, x \in [0,1]\}$. Conversely, for any point $z$ in the latter polytope, there exists a point $z'$ in the former such that $z' \leq z$.
\end{lemma}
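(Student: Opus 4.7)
The plan is to prove both directions by direct manipulation of the KC inequalities. One subtlety I want to flag upfront: the $A^{\mathsf{kc}}$-polytope does not literally enforce $x \le 1$ (already for $k=2$ and a single row with $\supp(A_j)=\{1,2\}$, the point $(2,1)$ is feasible for $A^{\mathsf{kc}} x \ge 1$), so I would interpret the forward direction as saying that the coordinatewise truncation $\bar x_i := \min(x_i,1)$ of any feasible $x$ lies in the box polytope. This is the `moral' content of the lemma and is what is actually used in downstream rounding.

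For the forward direction, fix any $x$ with $A^{\mathsf{kc}} x \ge 1$ and $x \ge 0$, and define $\bar x_i := \min(x_i,1)$. Then $\bar x \in [0,1]^n$ is automatic, so the work is verifying $(A\bar x)_j \ge k$ for each row $j$. The key step is to choose the right KC inequality: let $F_j := \{i \in \supp(A_j) : x_i \ge 1\}$. If $|F_j| \ge k$, then since $A$ is $\{0,1\}$-valued and $\bar x_i = 1$ on $F_j$, I get $(A\bar x)_j \ge |F_j| \ge k$ immediately. Otherwise $|F_j| \le k-1$, and the KC inequality indexed by $(j,F_j)$ is valid and rearranges to $\sum_{i \notin F_j} A_{j,i}\, x_i \ge k - |F_j|$. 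Combining this with $\bar x_i = x_i$ for $i \notin F_j$ and $\sum_{i \in F_j} A_{j,i}\, \bar x_i = |F_j|$ yields $(A\bar x)_j \ge |F_j| + (k-|F_j|) = k$.

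For the backward direction, I would simply take $z' := z$; then $z' \le z$ and $z' \ge 0$ are free, and for every KC row $(j,F)$ with $|F| \le k-1$ I can write
\[
\sum_{i \notin F} A_{j,i}\, z_i \;=\; (Az)_j - \sum_{i \in F} A_{j,i}\, z_i \;\ge\; k - \sum_{i \in F} A_{j,i} \;=\; k - |F|,
\]
where the inequality uses $(Az)_j \ge k$ together with $z_i \le 1$, and the final equality uses $F \subseteq \supp(A_j)$ and that $A$ is $\{0,1\}$-valued. Dividing through by $k-|F|$ gives $A^{\mathsf{kc}}_{(j,F)} z \ge 1$, as required. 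The only real obstacle is picking the `right' $F_j$ in the forward step so that the truncated coordinates compensate exactly for the weight lost by truncation; once that matching choice is identified, both directions reduce to one-line algebraic manipulations using the $\{0,1\}$-structure of $A$.
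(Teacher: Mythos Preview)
Your proposal is correct and matches the paper's own proof essentially line for line: the paper also truncates to $x'_i=\min(x_i,1)$, sets $F=\{i\in\supp(A_j):x_i>1\}$ (you use $\ge$, which works equally well), splits on $|F|\ge k$ versus $|F|\le k-1$, and for the converse takes $z'=z$ and derives $\sum_{i\in\supp(A_j)\setminus F}z_i\ge k-|F|$ from $z\le 1$. Your flagged subtlety---that the forward direction should be read as ``the truncation of $x$ is feasible''---is exactly how the paper handles it as well.
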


\begin{corollary} For any cost vector $c \in {\mathbb R}^n_{\geq 0}$,
$$\min \{c^T x: A^{kc} x \geq 1, x \geq 0\} = \min \{c^T x: A x \geq k, x \in [0,1]\} $$
\end{corollary}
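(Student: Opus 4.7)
The plan is to derive the equality as a direct two-line consequence of \Cref{lem:KC for box}, splitting it into the two opposing inequalities and invoking one bullet of the lemma for each direction.

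For the $\geq$ direction, I would use the first part of \Cref{lem:KC for box}: every $x$ satisfying $A^{\textsf{kc}} x \geq 1$ and $x \geq 0$ also satisfies $Ax \geq k$ and $x \in [0,1]$. Thus the feasible region of the left-hand LP is contained in that of the right-hand LP, and minimizing $c^T x$ over a smaller feasible set can only produce a value that is at least as large. This yields $\min\{c^T x : A^{\textsf{kc}} x \geq 1,\, x \geq 0\} \geq \min\{c^T x : Ax \geq k,\, x \in [0,1]\}$.

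For the $\leq$ direction, I would take a minimizer $z^*$ of the right-hand LP and apply the second part of \Cref{lem:KC for box} to obtain a point $z'$ that is feasible for the left-hand LP with $z' \leq z^*$ coordinate-wise. Because $c \in \mathbb{R}^n_{\geq 0}$, this coordinate-wise domination gives $c^T z' \leq c^T z^*$, and hence $\min\{c^T x : A^{\textsf{kc}} x \geq 1,\, x \geq 0\} \leq \min\{c^T x : Ax \geq k,\, x \in [0,1]\}$. If one prefers not to assume that the minimum on the right is attained, the same argument goes through verbatim with $\inf$ and a minimizing sequence. Combining the two inequalities yields the claimed equality.

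There is essentially no obstacle specific to this corollary; all the substantive work already lives in \Cref{lem:KC for box}. The only subtle point worth flagging while writing it up is the explicit use of $c \geq 0$ to turn the pointwise inequality $z' \leq z^*$ into the cost inequality $c^T z' \leq c^T z^*$; this is exactly where the covering-LP structure (as opposed to that of a general LP) is exploited, and it is also the reason the lemma is stated in the stronger form ``there exists $z'$ with $z' \leq z$'' rather than merely ``the two LPs have the same value.''
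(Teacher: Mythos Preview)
Your proposal is correct and matches the paper's intent: the corollary is stated without proof, as an immediate consequence of \Cref{lem:KC for box}, and your two-inequality argument is precisely the intended derivation. Your explicit observation that $c \geq 0$ is needed to pass from $z' \leq z^*$ to $c^T z' \leq c^T z^*$ is the only point worth making explicit, and you have done so.
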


\section{Range Mapping Theorem} %

\label{sec: range map} 

The goal of this section is to prove \Cref{lem: full mapping}, a cornerstone of this paper. We emphasize that it works for \emph{any} weight function $\vecw$.
First, we introduce more notations for convenience. For any cut $C \in \mathcal{C}$, and any subset of edges $F \subseteq E$, we define $\val_{\vecw}(C,F) = \frac{\vecw(C \setminus F)}{k -|F|}$ if $F \subseteq C$ and $|F| < k$; otherwise, $\val_{\vecw}(C,F) = \infty$. Also, denote $\val_{\vecw}(C) = \min_{F \subseteq E}\val_{\vecw}(C,F)$. By definition, we have $\val_{\vecw}(C) = \min_{i \leq k-1} \val_{\vecw}(C,F_i)$ where $F_i$ is the set of heaviest $i$  edges in $C$ with respect to weight function $\vecw$.
We let $\mincut_{\vecw_\rho}$ be the value of a minimum cut with respect with weight $\vecw_\rho$.
When it is clear from context, we sometimes omit the subscript $\vecw$. For any positive number  $\rho$,  let $H_{\vecw,\rho} = \{ e \in E\colon  \vecw(e) \geq \rho\}$ be the set of $\rho$-heavy edges.\footnote{When it is clear from the context, for brevity, we might say that $e$ is a \emph{heavy edge} instead of \emph{$\rho$-heavy edge.}} Define the weight truncation $\vecw_{\rho}(e) = \min\{ \vecw(e), \rho\}$. %

\begin{theorem}[Restatement of \Cref{lem: full mapping}] \label{thm:approx mapping}
We are given a weighted graph $(G,\vecw)$,  $\lambda > 0$ be a parameter and $\rho = (1+\gamma)\lambda$. Then we have the following:
\begin{enumerate}
    \item \label{item:approx mapping1} If $\opt_{\vecw} \in [\lambda, (1+\gamma)\lambda)$, then $\mincut_{\vecw_{\rho}} \in [k\rho/(1+\gamma), k\rho)$, and  %
    \item \label{item:approx mapping2} if a cut $C$ satisfies $\vecw_{\rho}(C) < k\rho$, then $\val_{\vecw}(C, H_{\vecw,\rho} \cap C) < (1+\gamma)\lambda$.
\end{enumerate}
\end{theorem}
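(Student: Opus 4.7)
The plan is to split the proof into the two claims, handling part (2) first because its conclusion is used inside part (1). For part (2), I would fix a cut $C$ with $\vecw_\rho(C) < k\rho$ and set $F = H_{\vecw,\rho} \cap C$. The first step is to observe $|F|\le k-1$: every $e\in F$ contributes exactly $\rho$ to $\vecw_\rho$, so $|F|\rho = \vecw_\rho(F) \le \vecw_\rho(C) < k\rho$. Once $|F|\le k-1$, the quantity $\val_\vecw(C,F)$ is finite, and since every edge in $C\setminus F$ has true weight $<\rho$ (so truncation leaves it unchanged), $\vecw(C\setminus F) = \vecw_\rho(C\setminus F) = \vecw_\rho(C) - |F|\rho < (k-|F|)\rho = (k-|F|)(1+\gamma)\lambda$, and dividing by $k-|F|$ yields $\val_\vecw(C,F) < (1+\gamma)\lambda$, as claimed.

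For the upper bound half of part (1), $\mincut_{\vecw_\rho} < k\rho$, the approach mimics the proof of \Cref{lem: baby mapping}: pick a cut $C^\star$ achieving $\val_\vecw(C^\star) = \opt_\vecw$ with an optimal free-edge set $F^\star$, and note that the hypothesis $\opt_\vecw < (1+\gamma)\lambda = \rho$ gives $\vecw(C^\star\setminus F^\star) = (k-|F^\star|)\opt_\vecw < (k-|F^\star|)\rho$. Since truncation only decreases weights, $\vecw_\rho(C^\star\setminus F^\star) < (k-|F^\star|)\rho$, and $\vecw_\rho(F^\star) \le |F^\star|\rho$ trivially, so the two pieces add to strictly less than $k\rho$, witnessing a cut of truncated value below $k\rho$.

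For the lower bound half, $\mincut_{\vecw_\rho} \ge k\rho/(1+\gamma) = k\lambda$, I would argue by contradiction: suppose some cut $C'$ has $\vecw_\rho(C') < k\lambda$. I would first rule out the degenerate case $|F'|\ge k$ (where $F' = H_{\vecw,\rho}\cap C'$), since that alone already forces $\vecw_\rho(C') \ge k\rho > k\lambda$. So $|F'| \le k-1$, and then $\val_\vecw(C',F') = (\vecw_\rho(C') - |F'|\rho)/(k-|F'|)$. The key observation is that $\rho \ge \lambda$, so $(k-|F'|)\lambda + |F'|\rho \ge k\lambda$; combined with $\vecw_\rho(C') < k\lambda$, this rearranges to $\val_\vecw(C',F') < \lambda$, contradicting the hypothesis $\opt_\vecw \ge \lambda$.

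The whole proof is essentially bookkeeping around the identity $\vecw_\rho(C) = \vecw_\rho(C\setminus F) + |F|\rho$ for $F\subseteq H_{\vecw,\rho}\cap C$, together with the fact that $\vecw$ and $\vecw_\rho$ agree on non-heavy edges; there is no real combinatorial obstacle. The most delicate point — and where I would be careful — is in part (1) not to conflate the role of $\lambda$ with that of $\rho$: the upper bound $k\rho$ comes from using $\opt_\vecw < \rho$ as a weight bound on $C^\star\setminus F^\star$, while the lower bound $k\lambda$ comes from interpreting a small truncated cut $C'$ as a free cut of value below $\lambda$, and these two estimates exploit different slacks in the gap between $\lambda$ and $\rho$.
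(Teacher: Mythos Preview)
Your proposal is correct and follows essentially the same approach as the paper. The paper packages the key identity $\vecw_\rho(C) = \vecw(C\setminus F) + \rho|F|$ (for $F = H_{\vecw,\rho}\cap C$) into an auxiliary proposition and lemma about ``interesting'' cuts (those with $|F|\le k-1$), and for the lower bound in part (1) argues directly rather than by contradiction, but the underlying computations are exactly the ones you give.
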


Notice that the above theorem not only gives a mapping between solutions of the two problems but also that the heavy edges can be used as a set of free edges. 
We say that a cut $C$ is \textit{interesting} if it contains at most $k-1$ heavy edges, i.e., $|H_{\vecw,\rho} \cap C| < k$.

\begin{proposition} \label{pro:not-interesting-heavy}
If cut $C \subseteq E$ is not interesting (i.e., $|H_{\vecw,\rho} \cap C| \geq k$), then $\val_{\vecw}(C) \geq \rho$ and $\vecw_{\rho}(C) \geq k\rho$.
\end{proposition}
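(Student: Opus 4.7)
The plan is to prove the two inequalities separately, both by a direct counting argument exploiting that $C$ contains at least $k$ $\rho$-heavy edges.

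First I would handle the easier bound $\vecw_{\rho}(C) \geq k\rho$. By definition, each $\rho$-heavy edge $e \in H_{\vecw,\rho} \cap C$ contributes exactly $\vecw_{\rho}(e) = \min\{\vecw(e),\rho\} = \rho$ to the truncated weight, while every other edge contributes a nonnegative amount. Since $|H_{\vecw,\rho} \cap C| \geq k$, summing only over the heavy edges already yields $\vecw_{\rho}(C) \geq k\rho$.

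Next I would bound $\val_{\vecw}(C) \geq \rho$. By the definition given right before the proposition, $\val_{\vecw}(C) = \min_{F \subseteq C,\, |F| \leq k-1} \frac{\vecw(C \setminus F)}{k-|F|}$, so it suffices to show that for \emph{every} admissible choice of $F$, the ratio is at least $\rho$. Fix such an $F$ and write $H := H_{\vecw,\rho} \cap C$. Since $|H| \geq k$ and $|F| \leq k-1$, the set $H \setminus F$ has size at least $|H|-|F| \geq k - |F|$. Every edge in $H \setminus F$ has weight $\geq \rho$, and all remaining edges of $C \setminus F$ have nonnegative weight, so
\[
\vecw(C \setminus F) \geq \vecw(H \setminus F) \geq \rho\,|H \setminus F| \geq \rho(k-|F|),
\]
which gives $\vecw(C \setminus F)/(k-|F|) \geq \rho$. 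Taking the minimum over $F$ yields $\val_{\vecw}(C) \geq \rho$.

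There is essentially no obstacle here: the proposition is an immediate ``sanity check'' that the non-interesting cuts are exactly the ones we can safely ignore, both when computing $\opt_{\vecw}$ and when computing $\mincut_{\vecw_\rho}$. The only subtle point to keep consistent is that the definition allows $|F| = k-1$ (not $|F| = k$), which is why we need at least $k$ heavy edges (rather than $k-1$) to force a heavy edge to survive the removal of $F$; this is exactly how the threshold in the hypothesis is chosen.
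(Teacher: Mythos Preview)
Your proof is correct and essentially identical to the paper's. The only cosmetic difference is that the paper invokes the characterization $\val_{\vecw}(C) = \min_{i \leq k-1} \val_{\vecw}(C,F_i)$ (with $F_i$ the $i$ heaviest edges) and argues that $C\setminus F_i$ retains at least $k-i$ heavy edges, whereas you bound the ratio for an arbitrary admissible $F$; the underlying counting is the same.
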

\begin{proof}
The fact that $\vecw_{\rho}(C) \geq k\rho$ follows immediately from the definition of heavy edges.
Let $F_i$ be the set heaviest $i$ edges in $C$ with respect to $\vecw$.  Since $C$ contains at least $k$ heavy edges, we have that for all $i < k$, $C \setminus F_i$ contains at least $k-i$ heavy edges. Therefore, we have $\val_{\vecw}(C)  = \min_{i \leq k-1} \frac{\vecw(C \setminus F_i)}{k-i} \geq \min_{i \leq k-1} \frac{(k-i)\rho}{k-i} = \rho$.
\end{proof} 

\Cref{pro:not-interesting-heavy} says that if a cut is not interesting it must be expensive as a normalized free cut (i.e., high $\val_{\vecw}(C)$) and as a graph cut (i.e., high $w_{\rho}(C)$).
We next give a characterization that relates $\val_{\vecw}$ and the sizes of the cuts for interesting cuts.

\begin{lemma} \label{lem:interesting iff}
Let $C$ be an interesting cut.  Then $\val_{\vecw}(C) \leq \val_{\vecw}(C, H_{\vecw,\rho} \cap C) < \rho$ if and only if $\vecw_{\rho}(C) < k\rho.$
\end{lemma}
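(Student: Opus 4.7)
The plan is to exploit the fact that, since $C$ is interesting, $H := H_{\vecw,\rho} \cap C$ has size at most $k-1$, so $\val_{\vecw}(C,H) = \vecw(C \setminus H)/(k-|H|)$ is well-defined and finite. This immediately gives $\val_{\vecw}(C) \leq \val_{\vecw}(C,H)$, which is the easy half of the stated chain. It remains to prove the equivalence $\val_{\vecw}(C,H) < \rho \Longleftrightarrow \vecw_{\rho}(C) < k\rho$.

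First I would decompose $\vecw_{\rho}(C)$ along the partition $C = H \sqcup (C \setminus H)$. By the definition of heavy edges, every $e \in C \setminus H$ satisfies $\vecw(e) < \rho$, and therefore $\vecw_{\rho}(e) = \vecw(e)$; conversely, every $e \in H$ has $\vecw(e) \geq \rho$ and hence $\vecw_{\rho}(e) = \rho$. Summing yields the identity
\begin{equation*}
\vecw_{\rho}(C) \;=\; \vecw(C \setminus H) + |H|\,\rho.
\end{equation*}

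Given this identity, the equivalence reduces to a one-line manipulation: $\val_{\vecw}(C,H) < \rho$ is the statement $\vecw(C \setminus H) < (k-|H|)\rho$, which, after adding $|H|\rho$ to both sides, becomes exactly $\vecw_{\rho}(C) < k\rho$. Chaining this with the earlier bound $\val_{\vecw}(C) \leq \val_{\vecw}(C,H)$ produces the full ``if and only if'' statement in the lemma.

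There is essentially no obstacle here; the lemma is a structural unpacking of the definitions of $\val_{\vecw}$, $\vecw_{\rho}$, and the notion of ``interesting'' cut. The only subtle point worth flagging is that the hypothesis of interestingness is used precisely to ensure $k - |H| \geq 1$, so that the fraction defining $\val_{\vecw}(C,H)$ is finite and the algebraic rearrangement above is valid. This lemma will then feed into the proof of \Cref{thm:approx mapping}(2) by showing that when $\vecw_{\rho}(C) < k\rho$, the set of $\rho$-heavy edges in $C$ is already a certifying choice of free edges witnessing $\val_{\vecw}(C) < \rho \leq (1+\gamma)\lambda$.
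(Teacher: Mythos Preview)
Your proposal is correct and follows essentially the same approach as the paper: both proofs hinge on the identity $\vecw_{\rho}(C) = \vecw(C \setminus H) + |H|\rho$ (the paper's displayed equation~(\ref{eq:def of wrho})) and then reduce the equivalence to the one-line algebraic rearrangement $\vecw(C \setminus H) < (k-|H|)\rho \Longleftrightarrow \vecw_{\rho}(C) < k\rho$. The paper merely splits the two directions a bit more explicitly, but the content is identical.
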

\begin{proof}

$(\rightarrow)$
By definition of $\vecw_{\rho}$, we have
\begin{align} \label{eq:def of wrho}
  \vecw_{\rho}(C) = \vecw(C \setminus (H_{\vecw,\rho} \cap C)) + \rho |H_{\vecw,\rho} \cap C|.
\end{align}
If $\val_{\vecw}(C, H_{\vecw,\rho} \cap C  ) < \rho$, then $\vecw(C \setminus H_{\vecw,\rho} \cap C ) < \rho (k -  |H_{\vecw,\rho} \cap C|)$. By \Cref{eq:def of wrho}, we have $\vecw_{\rho}(C) < k\rho$.

$(\leftarrow)$ Denote $F = H_{\vecw,\rho} \cap C $. By definition of $\val$, we have
\begin{align*}
   \val_{\vecw}(C) \leq \val_{\vecw}(C,F) &= \frac{\vecw(C \setminus F)}{k - |F|}
                           \overset{(\ref{eq:def of wrho})}{=} \frac{\vecw_{\rho}(C) - \rho |F|}{k - |F|}     < \frac{k\rho - \rho |F|}{k - |F|} = \rho.
\end{align*}
\end{proof}

\begin{proof} [Proof of \Cref{thm:approx mapping}]
For the first part, we begin by proving that $\mincut_{\vecw_{\rho}} < k\rho$.   
Let $C^*$ be a cut such that $\val_{\vecw}(C^*) = \opt_{\vecw}$. By \Cref{pro:not-interesting-heavy}, $C^*$ must be  interesting.
Since $\val_{\vecw}(C^*) = \opt_{\vecw} < (1+\gamma)\lambda = \rho$,  \Cref{lem:interesting iff} implies that we have $\vecw_\rho(C^*) < k\rho$. Therefore, $\mincut_{\vecw_{\rho}} < k\rho$.

Next, we prove that $\mincut_{\vecw_{\rho}} \geq k\rho/ (1+\gamma)$. Let $C$ be a cut, and denote $F = H_{\vecw,\rho} \cap C $. If $C$ is not interesting, then \Cref{pro:not-interesting-heavy} implies that $\vecw_{\rho}(C) \geq k\rho \geq k\rho/(1+\gamma)$. If $C$ is interesting, by definition of $\vecw_{\rho}$, we have
\begin{align*}
  \vecw_{\rho}(C) = \vecw(C \setminus F) + \rho |F|
\geq \opt_{\vecw} (k - |F|) + \frac{\rho}{1+\gamma} |F|
\geq \frac{\rho k}{1+\gamma}.
\end{align*}
The last inequality follows since by assumption $\opt_{\vecw} \geq \rho/(1+\gamma)$.

For the second part of the theorem, as $\vecw_{\rho}(C) < k\rho$, \Cref{pro:not-interesting-heavy} implies that $C$ is interesting. By \Cref{lem:interesting iff}, $\val_{\vecw}(C, H_{\vecw,\rho}\cap C) < \rho = (1+\gamma)\lambda$.
\end{proof}

\section{Fast Approximate LP Solver} 
\label{sec:fast LP solver}
In this section, we construct the fast range punisher for the normalized free cut problem. Our algorithm cannot afford to maintain the actual MWU weights, so it will instead keep track of {\em lazy weights}.
From now on, we will use $\wreal$ to denote the actual MWU weights and $\vecw$ the weights that our data structure maintains.

\begin{theorem} [Fast Range Punisher] \label{lem:fast range punisher}
  Given graph $G$ initial weight function $\vecwinit$ and two real values $\lambda,  \eps>0$ such that $\lambda \leq \opt_{\vecwinit}$, there is a randomized algorithm that iteratively applies {\sc PunishMin}  until the optimal with respect to the final weight function $\wreal$ becomes at least $\opt_{\wreal} \geq (1+\epsilon)\lambda$, in time $\widetilde{O}(|E| + K+ \frac{1}{\eps} \sum_{e \in E}\log ( \cdot \frac{\wreal(e)}{\vecwinit(e)}))$, where $K$ is the number of cuts punished.
\end{theorem}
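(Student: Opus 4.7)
The plan is to implement the sketch of \Cref{sec:overview} dynamically, leveraging \Cref{thm:approx mapping} to reduce range-punishing of normalized free cuts on $(G, \vecw)$ to range-punishing of ordinary cuts on the truncated graph $(G, \vecw_\rho)$ for a suitably chosen threshold $\rho$. Starting from $\vecw = \vecwinit$, I would first invoke the static $(1+\eps)$-approximation of \Cref{cor:normalized mincut}: if its returned value exceeds $(1+\eps)^2 \lambda$ we know $\opt_{\vecw} > (1+\eps)\lambda$ and halt; otherwise we conclude $\opt_{\vecw} \leq (1+3\eps)\lambda$ and fix $\rho = (1+3\eps)\lambda$. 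The main loop then mirrors the Chekuri--Quanrud minimum-cut range punisher (\Cref{thm: CQ range punisher}) running on $(G, \vecw_\rho)$ against the target $k\rho$: in each iteration we locate a cut $C$ with $\vecw_\rho(C) < k\rho$, set $F := H_{\vecw,\rho} \cap C$ which satisfies $|F| < k$ by \Cref{pro:not-interesting-heavy}, apply a PunishMin update to the pair $(C, F)$ (which \Cref{thm:approx mapping}(2) certifies is a $(1+3\eps)$-approximate normalized free cut), and terminate once no such cut remains, at which point $\mincut_{\vecw_\rho} \geq k\rho$ and the contrapositive of \Cref{thm:approx mapping}(1) certifies $\opt_{\wreal} \geq (1+\eps)\lambda$.

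\textbf{Dynamic implementation.} The CQ range punisher cannot be used verbatim because it scales every edge of the returned cut while PunishMin must exclude the free edges $F$, and because we must feed it the truncated weights $\vecw_\rho$ while maintaining the real weights $\vecw$ needed to compute $F$. Both issues are addressed with a two-tier lazy scheme: each edge $e$ stores its current $\vecw(e)$, but the CQ structure sees only $\vecw_\rho(e) = \min\{\vecw(e), \rho\}$. The set $H_{\vecw,\rho}$ of $\rho$-heavy edges is maintained in a separate index (e.g.\ grouped by endpoints), so that for any cut $C$ reported by the CQ structure the (at most $k-1$) heavy edges of $C$ can be extracted in $\widetilde O(k)$ time to form $F$. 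The multiplicative updates of PunishMin on $C \setminus F$ are pushed through implicitly: we only materialize a weight change (and propagate it into the CQ structure) when the edge either first crosses the threshold $\rho$ (and migrates into the heavy index, after which further updates to $\vecw(e)$ are invisible to the min-cut structure) or crosses one of the geometric bins $(1+\eps)^i$ to which CQ's machinery is sensitive.

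\textbf{Main obstacle and time accounting.} The hardest step is coupling the lazy weights with the approximate min-cut structure so that the amortized guarantees of CQ survive; in particular, one must verify that while $\opt_{\vecw} < (1+\eps)\lambda$, the CQ structure continues to return a cut $C$ with $\vecw_\rho(C) < k\rho$ even as light edges are being reclassified as heavy during the epoch. This is precisely where the constant slackness baked into $\rho = (1+3\eps)\lambda$ (rather than $(1+\eps)\lambda$) is used: it hands the CQ punisher a constant-factor gap with which to absorb its own $(1+\eps)$-approximation error inside the truncated graph. Once this invariant is in place, the running time decomposes as $\widetilde O(|E|/\eps)$ for the warmup, $\widetilde O(K)$ amortized for reporting and punishing the $K$ cuts (heavy edges pulled from the index), $O(|E|)$ for light-to-heavy threshold crossings (each edge crosses $\rho$ at most once), and $\widetilde O(\tfrac{1}{\eps}\sum_{e\in E}\log(\wreal(e)/\vecwinit(e)))$ for the geometric-bin crossings that drive the CQ data structure, which sums to the claimed bound.
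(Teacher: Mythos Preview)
Your overall strategy matches the paper's: reduce to Chekuri--Quanrud's min-cut range punisher on the truncated graph $(G,\vecw_\rho)$, use \Cref{thm:approx mapping} to certify that returned cuts are good normalized free cuts, and maintain lazy weights so that the min-cut data structure tracks $\vecw_\rho$ while the MWU weight $\wreal$ is only implicitly represented. The termination and approximation arguments you give are correct.

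The gap is in how you handle the free-edge set $F$. You propose to \emph{extract} $F=H_{\vecw,\rho}\cap C$ explicitly for each punished cut, in $\widetilde O(k)$ time via an endpoint-indexed heavy-edge structure. Two problems: (i) an index ``grouped by endpoints'' does not give you the heavy edges crossing an arbitrary cut $C$ without touching $\Omega(|S|)$ vertices of one side; the CQ structure hands you $[[C]]$ as a union of $\widetilde O(1)$ canonical pieces, not a vertex set, so you would need per-canonical-piece heavy-edge lists instead; (ii) even granting $\widetilde O(k)$ per extraction, the total is $\widetilde O(Kk)$, not the $\widetilde O(K)$ you write in the time accounting, and $k$ can be $\Theta(n)$, so this overshoots the stated bound. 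The paper avoids computing $F$ altogether: the moment an edge $e$ becomes $\rho$-heavy it is \emph{deleted} from the lazy-update bookkeeping (the bipartite graph $\calB$ between canonical pieces and non-heavy edges, and the per-piece priority queues $Q_S$). Punishing $C$ then consists only of incrementing a reference counter $\textref(S)$ on each of the $\widetilde O(1)$ canonical pieces $S$ of $C$; since heavy edges are no longer present in any $N_\calB(S)$, they are automatically excluded, and the per-punish cost is $\widetilde O(1)$ independent of $k$. The same deletion trick lets you maintain $c_{\calB}(S)=\min_{e\in N_\calB(S)} c(e)$, from which $c_{\min}=\min_{S\in\calS} c_\calB(S)$ is read off in $\widetilde O(1)$ time; you will need this for the \textsc{PunishMin} exponent, and your sketch does not say how it is obtained.
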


The following theorem is almost standard:  the fast range punisher, together with a fast algorithm for approximating $\opt_w$ for any weight $\vecw$, implies a fast approximate LP solver (e.g., see  \cite{ChekuriQ17,fleischer2004fast}). For completeness, we provide the proof in the Appendix.

\begin{theorem} [Fast LP Solver] \label{thm:fast LP solver}
Given a fast range punisher as described in \Cref{lem:fast range punisher}, and a near-linear time algorithm for approximating $\opt_w$ for any weight function $w$, there is an algorithm  that output $(1+O(\eps))$-approximate solution to $k$ECSS LP in $\ot(m/\eps^2)$ time.
\end{theorem}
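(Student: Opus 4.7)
The plan is to run the standard MWU framework of \Cref{thm:mwu} on the knapsack-covering LP $\{x\geq 0: A^{\textsf{kc}} x \geq 1\}$, implemented in \emph{epochs} \`a la Chekuri--Quanrud~\cite{ChekuriQ17}: each epoch corresponds to a factor-$(1+\epsilon)$ increase of $\opt_{\vecw}$, and all weight updates within an epoch are performed in bulk by a single call to the fast range punisher of \Cref{lem:fast range punisher}. Because the edge-wise log-weight-growth term in that lemma telescopes across epochs, the entire simulation will run in the target $\tilde O(m/\epsilon^2)$ time.

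Concretely, I would initialize $\vecw(e)=1/c_e$ and, at the start of each epoch $i$, call the near-linear approximation routine to obtain a threshold $\lambda_i$ with $\lambda_i\leq \opt_{\vecw}\leq (1+\epsilon)\lambda_i$. I would then invoke the fast range punisher with input $(G,\vecw,\lambda_i,\epsilon)$, which by \Cref{lem:fast range punisher} repeatedly applies {\sc PunishMin} until $\opt_{\wreal}\geq (1+\epsilon)\lambda_i$, at which point the next epoch begins with a strictly larger threshold. Alongside this I would maintain, exactly as in \Cref{thm:mwu}, the best scaled iterate $\bar{\vecw}^{(t)}=\vecw^{(t)}/\minrow(A^{\textsf{kc}},\vecw^{(t)})$ observed so far, using the value returned by the near-linear approximation at each epoch boundary to form the denominator. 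The simulation halts after the MWU-prescribed budget of $T=\tilde O(n/\epsilon^2)$ total {\sc PunishMin} calls, and returns the cheapest $\bar{\vecw}^{(t)}$; by \Cref{thm:mwu} this is a $(1+O(\epsilon))$-approximate LP solution.

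For the running-time accounting I would first bound the number of epochs by $\tilde O(1/\epsilon)$: after rescaling so $c_{\min}=1$, the exact MWU weights never exceed $n^{O(1/\epsilon)}$, hence $\opt_{\wreal}/\opt_{\vecwinit}$ is polynomially bounded and geometric $(1+\epsilon)$-growth fits in $O(\epsilon^{-1}\log n)$ epochs. Per epoch the approximation routine costs $\tilde O(m)$, summing to $\tilde O(m/\epsilon)$. Summing the range-punisher bound of \Cref{lem:fast range punisher} over epochs, the $|E|$-term contributes $\tilde O(m/\epsilon)$, the log-ratio term telescopes to $\frac{1}{\epsilon}\sum_e \log(\wreal^{\text{final}}(e)/\vecwinit(e))=\tilde O(m/\epsilon^2)$ by the same weight bound, and the cut-count terms sum to $\sum_i K_i = T = \tilde O(n/\epsilon^2)$. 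Altogether this matches $\tilde O(m/\epsilon^2)$.

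The step I expect to require the most care is reconciling the approximate weights $\vecw$ maintained by the data structure with the exact MWU weights $\wreal$ that \Cref{thm:mwu} operates on, and verifying that every cut punished inside the range punisher is an $O(1+\epsilon)$-approximate minimum \emph{KC row}, not merely a near-minimum ordinary cut. The latter is precisely what the Range Mapping Theorem (\Cref{lem: full mapping}) delivers via the free-edge set $F=H_{\vecw,\rho}\cap C$: punishing $(C,F)$ when $\vecw_\rho(C)<k\rho$ corresponds to punishing a KC row of value $<(1+\gamma)\lambda_i$. The former is a standard pointwise-closeness argument following \cite{ChekuriQ17,fleischer2004fast}; both slacks can be absorbed into $\epsilon$ by a constant factor. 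The main technical work therefore lies in this bookkeeping across epoch boundaries and over the implicit, exponentially many KC rows.
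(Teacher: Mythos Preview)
Your proposal follows the same epoch-based scheme as the paper's own proof (Algorithm~\ref{alg:fast LP solver} in the appendix): one call to the range punisher per $(1+\epsilon)$-factor increase of $\opt$, congestion-/day-based termination, telescoping of the log-weight term, and checking the scaled iterate only at epoch boundaries. Your correctness remarks about the Range Mapping Theorem supplying a KC-row near-minimizer and about reconciling $\vecw$ with $\wreal$ are exactly what the paper argues.

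There is, however, an arithmetic slip in your epoch count that propagates. You note that the MWU weights are bounded by $n^{O(1/\epsilon)}$ times their initial values, but then call this ``polynomially bounded'' and conclude $O(\epsilon^{-1}\log n)$ epochs. In fact $\log_{1+\epsilon}\bigl(n^{O(1/\epsilon)}\bigr)=O(\epsilon^{-2}\log n)$, and the paper indeed proves $\ell=O(\epsilon^{-2}\log m)$. With the corrected count your $|E|$-per-epoch term still sums to $\tilde O(m/\epsilon^{2})$, so the final bound survives, \emph{provided} you drop the per-epoch call to the approximation oracle: since the range punisher already certifies $\opt_{\wreal}\geq(1+\epsilon)\lambda_i$ on exit, you can simply set $\lambda_{i+1}=(1+\epsilon)\lambda_i$ (this is what the paper does, calling the oracle only once at initialization). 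Keeping your per-epoch call with the $\tilde O(m/\epsilon)$ routine of \Cref{cor:normalized mincut} and the corrected $O(\epsilon^{-2}\log n)$ epochs would cost $\tilde O(m/\epsilon^{3})$. Finally, in the MWU bound $T=O(n\log n/\epsilon^{2})$ the ``$n$'' is the number of LP variables, here $|E|=m$; this still fits the target.
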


Notice that the above theorem implies our main result, \Cref{thm:intro:fractional $k$ECSS}.
The rest of this section is devoted to proving~\Cref{lem:fast range punisher}.
Following the high-level idea of~\cite{ChekuriQ17}, our data structure has two main components:
\begin{itemize}
    \item \textbf{Range cut-listing data structure}: This data structure maintains dynamic (truncated) weighted graph $(G,\vecw_{\rho})$ and  is able to find a (short description of) $(1+O(\epsilon))$-approximate cut whenever one exists, that is, it returns a cut of size between $\lambda$ and $(1+O(\epsilon))\lambda$ for some parameter $\lambda$. Since our weight function $\vecw$ changes over time, the data structure also has an interface that allows such changes to be implemented.
The data structure can be taken  and used directly in a blackbox manner, thanks to~\cite{ChekuriQ17}.

    \item \textbf{Lazy weight data structures}: Notice that a fast range punisher can only afford the running time of $\widetilde{O}\left(\sum_e \log \frac{\wreal(e)}{\vecwinit(e)}\right)$ for updating weights, while in the MWU framework, some edges would have to be updated much more often. We follow the idea of~\cite{ChekuriQ17} to maintain approximate (lazy) weights that do not get updated too often but are still sufficiently close to the real weights.
We remark that $\wreal$ only depends on the sequence of cuts, that {\sc PunishMin} actually punishes.
This lazy weight data structure is responsible for maintaining $\vecw$ that satisfies the following invariant:
\begin{invariant}  \label{inv:new approx weight}
We have $(1-\epsilon)\wreal \leq \vecw \leq \wreal$. %\attention{It holds even if an edge becomes heavy.}
 \end{invariant}
That is, we allow $\vecw$ to underestimate weights, but they cannot deviate more than by a factor of $(1 - \epsilon)$. In this way, our data structure only needs to update the weight implicitly and output necessary increments to the cut listing data structure whenever the invariant is violated.

\end{itemize}

In sum, our range punisher data structures deal with three weight functions $\vecw$ (lazy weights), $\vecw_{\rho}$ (truncated lazy weights, used by the range cut listing data structure) and $\wreal$ (actual MWU weights, maintained implicitly).

The rest of this section is organized as follows.
In~\Cref{sec:compact}--\Cref{sec:lazy weight},  we explain the components that will be used in our data structure, and in~\Cref{sec:implementationPunisher}, we prove Theorem~\ref{lem:fast range punisher} using these components.

\subsection{Compact representation of cuts}
\label{sec:compact}
\newcommand{\fset}{{\mathcal F}}

This part serves as a ``communication language'' for various components in our data structure.
Since a cut can have up to $\Omega(m)$ edges, the data structure cannot afford to describe it explicitly. We will use a compact representation of cuts~\cite{ChekuriQ17}, which allows us to describe any $(1+\epsilon)$-approximate solution in a given weighted graph using $\ot(1)$ bits; notice that, in the MWU framework, we only care about (punishing) near-optimal solutions, so it is sufficient for us that we are able to concisely describe such cuts.

Formally, we say that a family $\fset$ of  subsets of edges is $\epsilon$-canonical for $(G,\vecw)$ if (i) $|\fset| \leq \widetilde{O}(|E|)$, (ii) any $(1+\epsilon)$-approximate minimum cut of $(G,\vecw)$ is a disjoint union of at most $\ot(1)$ sets in $\fset$, (iii) any set $S \in \fset$ can be described concisely by $\ot(1)$ bits, and (iv) every edge in the graph belongs to $\ot(1)$ sets in $\fset$.
It follows that any $(1+\epsilon)$-approximate cut admits a short description.
Denote by $[[S]]$ a short description of cut $S \in \fset$, and for each $(1+\epsilon)$ approximate cut $C$, $[[C]]$ a short description of $C$.

\begin{lemma}[implicit in \cite{ChekuriQ17}] \label{lem:canonical cuts}
There exists a randomized data structure that, on input $(G,\vecw)$, can be initialized in near-linear time,  (w.h.p) constructs an $\epsilon$-canonical family $\fset \subseteq 2^{E(G)}$, and handles the following queries:
\begin{itemize}

    \item Given a description $[[C]]$ of a $(1+\epsilon)$-approximate cut, output a list of $\ot(1)$ subsets in $\fset$ such that $C$ is a disjoint union of those subsets in $\ot(1)$ time.

    \item Given a description of $[[S]]$, $S \in \fset$, output a list of edges in $S$ in $\widetilde{O}(|S|)$ time.

\end{itemize}
\end{lemma}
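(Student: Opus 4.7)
The plan is to follow the construction implicit in Chekuri–Quanrud, which is based on Karger's tree-packing theorem. In near-linear time one builds a Karger tree packing of $(G,\vecw)$: a collection of $\tau = \ot(1)$ spanning trees $T_1,\ldots,T_\tau$ with the (high-probability) property that every $(1+\epsilon)$-approximate minimum cut $C$ \emph{$\ell$-respects} some $T_j$, i.e.\ crosses at most $\ell = O(1)$ of its edges. Such a $C$ is then determined by the triple consisting of the tree index $j$, the $\ell$ crossed tree edges, and the side assignment of the $\ell+1$ resulting subtrees; this already gives an $\ot(1)$-bit short description $[[C]]$.

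To build $\fset$ itself, I would apply heavy-light decomposition to every $T_j$, yielding $O(\log n)$ heavy paths per tree, and then a dyadic segment-tree partition of each heavy path. A canonical subset $S_{j,P,I} \in \fset$, indexed by a tree $T_j$, one of its heavy paths $P$, and a dyadic interval $I$ along $P$, consists of the graph edges whose endpoints sit in prescribed positions relative to the subtrees hanging off $I$. This construction yields $|\fset| = \ot(n) \le \ot(m)$ and an $\ot(1)$-bit handle $(j,P,I)$ for each subset, which gives properties (i) and (iii). For property (ii), because $C$ $\ell$-respects some $T_j$ with $\ell = O(1)$ and each crossed tree edge contributes a contiguous range on its heavy path that decomposes into $O(\log n)$ dyadic intervals, a segment-tree-style peeling writes $C$ as a \emph{disjoint} union of $\ot(1)$ canonical subsets. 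The two query interfaces are then essentially immediate: decoding $[[C]]$ executes this peeling in $\ot(1)$ arithmetic, and each $S \in \fset$ is preprocessed (via sorted-by-heavy-path edge lists for the edges it is responsible for) so that its edges enumerate in output-sensitive time $\ot(|S|)$.

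The main obstacle, and the reason the heavy-light/dyadic machinery is required, is simultaneously guaranteeing property (iv) — every edge of $G$ lies in only $\ot(1)$ canonical sets — while also producing a genuinely disjoint (not merely covering) decomposition of a single approximate cut. The naive fundamental-cut family $\delta_G(V_{j,v})$ indexed by subtrees would place an edge $e=(u,v)$ into $\Omega(n)$ sets, one per tree vertex on the path from $u$ to $v$; the heavy-light plus dyadic organization instead caps this at $O(\log^2 n)$ per tree and hence $\ot(1)$ across all $\tau$ trees, yielding (iv). Disjointness in a single cut's decomposition is then obtained by the standard segment-tree argument of processing the $\ell$ crossed tree edges in a canonical order and subtracting already-listed pieces. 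Once this book-keeping is in place, the only randomized step is the initial tree packing, whose success probability is boosted to high probability by taking $\tau = \ot(1)$ independent samples in the standard way.
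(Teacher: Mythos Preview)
The paper does not actually prove this lemma: it is stated with the attribution ``implicit in \cite{ChekuriQ17}'' and used as a black box, so there is no in-paper proof to compare against. What you have written is a faithful reconstruction of the Chekuri--Quanrud construction that the citation points to: Karger tree packing so that every near-minimum cut $1$- or $2$-respects some tree in a family of $\ot(1)$ trees, followed by heavy--light decomposition and a dyadic (segment-tree) partition of each heavy path to produce the canonical family. Your accounting for properties (i)--(iv) is correct, and in particular your diagnosis of why (iv) forces the heavy--light/dyadic layer (rather than raw fundamental cuts) is exactly the point.

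Two small places deserve a bit more care if you flesh this out. First, the paper (see the later data-structure section) and Karger's theorem work specifically with $1$- and $2$-respecting cuts, not a generic $\ell = O(1)$; the $2$-respecting case is where the decomposition into canonical pieces needs an explicit argument, since removing two tree edges yields three subtrees and the resulting cut is a symmetric-difference combination of two fundamental cuts rather than a single subtree cut --- your ``segment-tree peeling'' sentence should spell out how disjointness is maintained there. Second, your claim $|\fset| = \ot(n)$ is fine for the single-interval canonical sets, but be sure the $2$-respecting decomposition does not silently require canonical sets indexed by \emph{pairs} of intervals (it does not in Chekuri--Quanrud, but this is worth one sentence). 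With those two points addressed, your sketch matches the intended construction.
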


\subsection{Range Cut-listing Data Structure}
\label{sec:cut listing}

The cut listing data structure is encapsulated in the following theorem.

\begin{theorem} [Range Cut-listing Data Structure \cite{ChekuriQ17}] \label{thm:cut listing}
The cut-listing data structure, denoted by $\mathcal{D}$, maintains dynamically changing weighted graph $(G,\what)$ and supports the following operations.
\begin{itemize}
\item $\mathcal{D}$.\textsc{Init}$(G,\vecwinit,\lambda, \epsilon)$ where $G$ is a graph, $\what$ is an initial weight function, and $\mincut_{\what} \geq \lambda$: initialize the data structure and  the weight $\what \leftarrow \vecwinit$ in $\ot(m)$ time.

     \item $\mathcal{D}$.\textsc{FindCut}$():$ output either a short description of a $(1+O(\epsilon))$-approximate mincut $[[C]]$ or $\emptyset$ (when $\mincut_{\what} > (1+\epsilon) \lambda$). The operation takes amortized $\ot(1)$ time.

    \item $\mathcal{D}$.\textsc{Increment$(\Delta)$} where $\Delta = \{(e,\delta_e)\}$ is the set of increments (defined by a pair of an edge $e \in E$ and a value $\delta_e \in \mathbb{R}_{\geq 0}$): For each $(e,\delta_e) \in \Delta$, $\what(e) \gets \what(e) + \delta_e$.   The operation takes $\ot(|\Delta|)$ time (note that $|\Delta|$ corresponds to the number of increments).
\end{itemize}
\end{theorem}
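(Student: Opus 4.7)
Since this theorem is attributed to Chekuri and Quanrud, the plan is to reconstruct their construction, which fuses Karger's random tree-packing framework with dynamic top-tree data structures. Concretely, I would first run Karger's (approximate) tree packing on $(G,\vecwinit)$ to obtain a collection $\mathcal{T}$ of $O(\log^{2} n)$ spanning trees with the property that, with high probability, every $(1+\epsilon)$-approximate minimum cut $2$-respects at least one tree $T \in \mathcal{T}$ (i.e.\ crosses at most two edges of $T$). This instantly yields the $\epsilon$-canonical family of \Cref{lem:canonical cuts}: each element of $\fset$ is either a singleton tree edge's fundamental cut, or the symmetric difference of two such fundamental cuts, and the short description $[[C]]$ is just the identity of the (one or two) tree edges together with a pointer to the tree. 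Because each graph edge lies in the fundamental cut of only the tree-path edges it crosses, property (iv) (small covering count) follows from the $O(\log^{2} n)$ trees and standard path-length bounds.

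For each tree $T \in \mathcal{T}$, the plan is to maintain a top-tree over $T$ that stores, at every cluster, the minimum-weighted $1$- and $2$-respecting cut whose defining edges lie within that cluster. Using the standard two-level decomposition of $2$-respecting cuts into ``interior'' (both edges inside a child cluster) and ``interface'' (one edge in each child) contributions, each cluster can be augmented with $O(\polylog n)$ summary information so that the minimum $2$-respecting cut value of $T$ is read off from the root in $O(1)$ time, and an edge weight update on a non-tree edge $e$ propagates only along the (at most $O(\log n)$) clusters on the tree paths that $e$ crosses. \textsc{FindCut} then queries each tree's top-tree root, picks the best (rounded to a power of $(1+\epsilon)$), and returns $\emptyset$ if the winning value exceeds $(1+\epsilon)\lambda$; otherwise it returns the short description $[[C]]$ pointing to the one or two tree edges of the winner.

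The \textsc{Increment} operation receives a list $\Delta$ of edge-weight increments and, for each $(e,\delta_e)\in\Delta$ and each $T\in\mathcal{T}$, walks the tree path between the endpoints of $e$ (if $e\notin T$) or updates the single tree edge (if $e\in T$) inside the top-tree, which costs $\ot(1)$ per tree, giving total cost $\ot(|\Delta|)$. Because the tree packing is computed only once at initialization from $\vecwinit$ and $\mincut_{\vecwinit}\geq \lambda$, and because \textsc{FindCut} is only required to be correct while $\mincut_{\what}\leq (1+\epsilon)\lambda$, the tree packing remains valid throughout the life of $\cD$: any $(1+\epsilon)$-approximate mincut stays within a constant factor of $\lambda$ and is therefore captured by $2$-respecting cuts in the same family of trees with high probability. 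Amortization is straightforward: the $\ot(1)$ amortized cost of \textsc{FindCut} absorbs the occasional top-tree rebalancing triggered by increments, charged against the $\ot(|\Delta|)$ term of the preceding \textsc{Increment} calls.

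The main obstacle I would expect is correctness of the $2$-respecting summarization under arbitrary weight increments: one must make sure that the top-tree's cluster aggregate correctly tracks $\min$ over \emph{pairs} of tree edges $(f_1,f_2)$, where the cross-term depends on the sum of weights of non-tree edges whose tree path crosses exactly one of $f_1,f_2$. Resolving this requires the standard trick of maintaining, at each cluster boundary, the vector of ``half-crossing'' weights and combining them via a constant-rank linear-algebraic merge, which is where the $\polylog n$ factors in the amortized bounds arise. Once this merge operator is specified and shown to be associative and efficiently composable, the three interfaces of \Cref{thm:cut listing} follow by routine bookkeeping, and $(1+O(\epsilon))$-approximation comes from snapping every minimum to the nearest power of $(1+\epsilon)$ so that updates change the stored summary only $\ot(\log(\what(e)/\vecwinit(e)))$ times per edge.
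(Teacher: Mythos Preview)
The paper does not prove this theorem; it is quoted verbatim as a black-box result of Chekuri and Quanrud, so there is no in-paper argument to compare against. Evaluating your reconstruction on its own merits, the high-level skeleton (Karger tree packing so that every near-minimum cut $2$-respects one of $\ot(1)$ fixed spanning trees, then per-tree dynamic structures) is the right one, and the observation that the packing can be frozen at \textsc{Init} because weights only increase while $\cD$ is alive is exactly the point that makes the range-restricted setting work.

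There is, however, a concrete gap in your instantiation of the canonical family $\fset$ and hence in the claimed $\ot(1)$/$\ot(|\Delta|)$ bounds. You define $\fset$ to be the fundamental cuts of single tree edges together with symmetric differences of pairs; but then $|\fset|=\Theta(n^2)$ per tree, and more fatally, a single non-tree edge $e=(u,v)$ lies in the fundamental cut of \emph{every} tree edge on the $u$--$v$ path in $T$, which can be $\Theta(n)$ edges. There are no ``standard path-length bounds'' that make this $\ot(1)$; property~(iv) of \Cref{lem:canonical cuts} simply fails for this choice. The actual construction in Chekuri--Quanrud routes this through a heavy-light decomposition of each $T$: a tree path touches only $O(\log n)$ heavy paths, and canonical sets are indexed by (heavy path, dyadic subinterval) so that each $1$- or $2$-respecting cut decomposes into $\ot(1)$ canonical pieces and each edge lands in $\ot(1)$ of them. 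Correspondingly, the dynamic minimum over $2$-respecting cuts is not maintained by a single top-tree with a ``constant-rank merge'' as you sketch; the cross-term $\sum_{e:\,f_1,f_2\in\text{path}(e)} \what(e)$ couples pairs of tree edges on possibly different heavy paths, and Chekuri--Quanrud handle it with a two-level (essentially $2$D) range structure over heavy paths rather than a top-tree cluster aggregate. Your outline would go through once you replace the fundamental-cut family by the heavy-path/dyadic family and replace the top-tree hand-wave by this $2$D decomposition; as written, the per-edge update cost and the canonical-family size are both off by a factor of $n$.
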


As outlined earlier, the cut listing data structure will be invoked with $\what = \vecw_{\rho}$.

\subsection{Truncated Lazy MWU Increment}
\label{sec:lazy weight}

The data structure is formally summarized by the definition below.

\begin{definition}[Truncated Lazy MWU Increment]
 A truncated lazy MWU increment denoted by $\mathcal{L}$ maintains the approximate weight function $\vecw$ explicitly, and exact weight $\wreal$ implicitly and  supports the following operations:\footnote{This is implicit in the sense that $w$ is divided into parts and they are internally stored in different memory segments. Whenever needed, the real weight can be constructed from the memory content in near-linear time.}

 \begin{itemize}
     \item   $\mathcal{L}.\textsc{Init}(G,\vecwinit,\rho)$ where $G$ is a graph, $\vecwinit$ is the initial weight function, $\rho \in \mathbb{R}_{>0}$: Intialize the data structure, and set $\vecw \leftarrow \vecwinit$.

     \item $\mathcal{L}.\textsc{Punish}([[C]])$ where $C$ is a cut: Internally punish the free cut $(C,F)$ for some $F$ (to be made precise later) and output a list of increment $\Delta = \{(e,\delta_e)\}$ so that for each $e \in E$, $\vecwinit(e)$ plus the total increment over $e$ is $\vecw_{\rho}(e)$.

     \item  $\mathcal{L}.\textsc{Flush}()$: Return the exact weight $\wreal$.
 \end{itemize}
\end{definition}

Remark that the output list of increments returned by {\sc Punish} is mainly for the purpose of syncing with the cut listing data structure (so it aims at maintaining $\vecw_{\rho}$ instead of $\vecw$).  Also, in the {\sc Punish} operation, the data structure must compute the set $F \subseteq C$ of free edges efficiently (these are the edges whose weights would not be increased).
 This is one of the reasons for which we cannot use the lazy update data structure in~\cite{ChekuriQ17} as a blackbox. Section~\ref{sec:dataStructures} will be devoted to proving the following theorem.

\begin{theorem} \label{thm:tlmi}
There exists a lazy MWU increment with the following time complexity: (i) init operation takes $\ot(m)$ time, (ii)  \textsc{Punish} takes $\ot(K) +\widetilde{O}\left(\sum_e \log \frac{\wreal(e)}{\vecwinit(e)}\right)$ time in total where $K$ is the number of calls to $\textsc{Punish}$ and outputs at most $\widetilde{O}\left(\sum_e \log \frac{\wreal(e)}{\vecwinit(e)}\right)$ increments, and (iii) flush takes $\ot(m)$ time.   Moreover, the Invariant~\ref{inv:new approx weight} is maintained throughout the execution.
\end{theorem}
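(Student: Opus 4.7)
The plan is to adapt the lazy-weight framework of Chekuri--Quanrud~\cite{ChekuriQ17} to the free-cut setting by exploiting the key observation that $\wreal$ is monotonically non-decreasing under \textsc{PunishMin}. By \Cref{lem: full mapping}(2), the natural choice for the free set on every Punish call is $F = H_{\vecw,\rho}\cap C$, and monotonicity implies that $H_{\vecw,\rho}$ only grows over time: once an edge's weight reaches $\rho$ it becomes permanently heavy, is placed in $F$ on every subsequent cut, receives no further punishment, and is therefore frozen. Consequently every edge undergoes at most $O((1/\epsilon)\log(\rho/\vecwinit(e)))$ substantive growth events before being capped, which is exactly the output budget $\tilde O(\sum_e \log(\wreal(e)/\vecwinit(e)))$ we are permitted to emit as increments to $\mathcal{D}$.

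The plan for the concrete representation is to store, per edge, an explicit lazy weight $\vecw(e)$ and to maintain $\wreal(e)$ implicitly via an accumulated punishment $\sigma(e)$ with $\wreal(e) = \vecw(e)\cdot\exp(\epsilon\sigma(e)/c_e)$, refreshing $\vecw(e)\gets\wreal(e)$ (and emitting the corresponding increment to $\vecw_\rho(e)$ into $\mathcal{D}$) only when $\epsilon\sigma(e)/c_e$ exceeds the fixed threshold $\log(1/(1-\epsilon))$, which preserves Invariant~\ref{inv:new approx weight}. To avoid touching every edge of $C$ on each Punish call, we do not store $\sigma(e)$ explicitly; instead, for every canonical set $S\in\fset$ provided by \Cref{lem:canonical cuts} we maintain a cumulative potential $\Phi(S)$, a partition of $S$ into its non-heavy and heavy parts, and the minimum cost $c_{\min}(S)$ among non-heavy edges of $S$. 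Because each decoded cut is a disjoint union of $\tilde O(1)$ canonical sets and each edge lies in $\tilde O(1)$ canonical sets, $\sigma(e)$ for a non-heavy edge $e$ can be reconstructed on demand in $\tilde O(1)$ time by combining the $\Phi(S)$-values of the sets containing $e$ with a per-edge offset recording the $\Phi$-values at the moment $e$ was last registered as non-heavy.

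Then $\mathcal{L}.\textsc{Punish}([[C]])$ proceeds by: (i) decoding $C$ into canonical sets $S_1,\ldots,S_q$ via \Cref{lem:canonical cuts} in $\tilde O(1)$ time; (ii) setting $c_{\min}=\min_j c_{\min}(S_j)$; (iii) incrementing $\Phi(S_j)\gets\Phi(S_j)+c_{\min}$ for each $j$; and (iv) processing all newly-due refresh events from a global event queue, where for each popped event we recompute $\wreal(e)$, set $\vecw(e)\gets\wreal(e)$, emit the corresponding increment to $\mathcal{D}$, and, if $\vecw(e)\geq\rho$, move $e$ to the heavy list of each of its $\tilde O(1)$ canonical sets. \textsc{Init} invokes \Cref{lem:canonical cuts} and seeds all per-set lists in $\tilde O(m)$ time; \textsc{Flush} materialises $\wreal(e)$ for every edge in $O(m)$ time.

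The main obstacle will be coordinating the per-set potentials $\Phi(S)$ with the per-edge refresh thresholds, since each edge lies in several canonical sets and its refresh is triggered by the \emph{joint} advance of all their potentials. The plan is to keep, for each canonical set $S$, a local priority queue of its non-heavy edges keyed by the value of $\Phi(S)$ at which the edge would be due \emph{assuming the other sets' potentials remain frozen}; upon every increment of $\Phi(S_j)$, we pop each local event whose key is reached and, if the edge's true threshold in terms of $\sum_{S\ni e}\Phi(S)/c_e$ is not yet met, re-insert it with a fresh conservative key computed from the current shortfall. Because $\Phi$ and $\wreal$ grow monotonically, each re-insertion consumes a strict quantum of shortfall, so a potential-function argument bounds the amortised number of queue operations by a constant times the total number of refreshes, which is $\tilde O(\sum_e\log(\wreal(e)/\vecwinit(e)))$; combined with the $\tilde O(1)$-per-call bookkeeping this yields the claimed time and output bounds.
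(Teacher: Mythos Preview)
Your proposal mirrors the paper's construction almost exactly: the canonical family $\fset$, the per-set accumulators $\Phi(S)$ (the paper calls them $\textref(S)$), the per-set minimum-cost values for computing $c_{\min}$, the per-set priority queues over non-heavy edges, and the permanent promotion of edges to ``heavy'' once their weight crosses~$\rho$. The paper additionally passes to additive coordinates, writing $\wreal(e)=\frac{1}{c(e)}\exp(\vreal(e)\cdot \epsilon/c(e))$ so that a punish becomes a uniform additive increment to $\vreal$, but this is cosmetic.

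There is, however, a real gap in your trigger mechanism. You key edge $e$ in $Q_S$ at the value of $\Phi(S)$ that would make $e$ due \emph{if the other potentials stayed frozen}, i.e.\ at $\Phi_0(S)+T_e$ where $T_e$ is the full refresh threshold. But $e$ lies in up to $q=\textdeg(\calB)=\tilde O(1)$ canonical sets, and if each of their potentials advances by $T_e-\delta$ then \emph{no} queue pops, yet $\sigma(e)=\sum_{S\ni e}(\Phi(S)-\Phi_0(S))$ has already reached $q(T_e-\delta)$, which overshoots $T_e$ by a factor close to~$q$. At that moment $\wreal(e)/\vecw(e)$ can be as large as $(1-\epsilon)^{-q}$, so Invariant~\ref{inv:new approx weight} is violated. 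Your re-insertion step cannot save this, because re-insertion only runs \emph{after} a pop, and the problem is precisely that no pop fires in time.

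The paper's remedy is to divide the per-set trigger by the degree: with $\eta'=\eta/q$ it keys $e$ in $Q_S$ at $\textlast(S,e)+\eta'/s(e)$. Then whenever the total slack $\sum_{S\ni e}\textdiff(S,e)$ exceeds $\eta/s(e)$, the averaging argument guarantees some single set's slack exceeds $\eta'/s(e)$, its key has been passed, and the pop fires. With this choice every pop is a genuine reset (no re-insertion logic is needed at all), and each reset witnesses that $\wreal(e)$ has grown by a factor at least $1+\Theta(\eta')$ since the previous reset, giving the claimed $\tilde O\bigl(\sum_e\log(\wreal(e)/\vecwinit(e))\bigr)$ bound on the total number of resets and emitted increments.
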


\subsection{A Fast Range Punisher for Normalized Free Cut Problem}

\label{sec:implementationPunisher}

\newcommand{\calT}{\mathcal{T}}
\newcommand{\calD}{\mathcal{D}}
\newcommand{\calL}{\mathcal{L}}

\newcommand{\lset}{\mathcal{L}}

Now we have all necessary ingredients to prove~\Cref{lem:fast range punisher}. The algorithm is very simple and described in~\Cref{alg:fast range punisher}. We initialize the cut-listing data structure $\dset$ so that it maintains the truncated weight $\vecw_{\rho}$ and the lazy weight data structure $\lset$.  
We iteratively use $\dset$ to find a cheap cut in $(G,\vecw_{\rho})$ until no such cut exists. Due to our mapping theorem, such a cut found can be used for our problem, and the data structure $\lset$ is responsible for punishing the weights (Line 8) and returns the list of edges to be updated (this is for the cut-listing $\dset$ to maintain its weight function $\vecw_{\rho}$).

\subsubsection*{Algorithm}
\begin{algorithm}[H]
\KwIn{$G,\vecwinit,\lambda , \epsilon$ such that $\opt_{\vecwinit}\geq \lambda$.}
\KwOut{a correct weight function $\vecw = \wreal$ such that $\opt_{\vecw} \geq (1+\eps)\lambda$.}
\BlankLine
$\vecw \gets \vecwinit$ and $\rho \gets (1+\eps)\lambda$\;
Let $\vecw_{\rho}$ be the truncated weight function of $\vecw$.\;
\lIf{ $\mincut_{\vecw_{\rho}} \geq k\rho$}{ \Return{$\vecw$.}} \label{line:early return}
  Let $\calD$ and $\calL$ be cut listing data structure, and truncated lazy MWU increment. \;
  $\calD.\textsc{Init}(G,\vecw_{\rho}, k\rho/(1+\eps), \epsilon)$\;
  $\calL.\textsc{Init}(G,\vecw,\rho,\eps)$\;
  \While{$\calD.\textsc{FindCut}()$ \normalfont{returns} $[[C]]$}
  {
    $\Delta \gets \calL.\textsc{Punish}([[C]])$\;
    $\calD.\textsc{Increment}(\Delta)$\;
  }
  $\vecw \gets \calL.\textsc{Flush}()$\;

%}
  \Return{$\vecw$.} \label{line:late return}
\caption{\textsc{FastRangePunisher}($G,\vecw,\lambda$)}
\label{alg:fast range punisher}
\end{algorithm}

\subsubsection*{Analysis}
By input assumption, we have $\opt_w \geq \lambda$. If $\vecw$ is returned at line \ref{line:early return}, then $\mincut_{\vecw_{\rho}} \geq k\rho$. By \Cref{thm:approx mapping}(\ref{item:approx mapping1}), $\opt_{\wreal} \geq \opt_{\vecw} \geq \rho= (1+\eps)\lambda$, and we are done (since minimum cut can be computed in near-linear time). Now, we assume that $\vecw$ is returned at the last line. The following three claims imply \Cref{lem:fast range punisher}.

\begin{claim} For every cut $[[C]]$ returned by the range cut listing data structure during the execution of \Cref{alg:fast range punisher}, we have that $(C,H_{\vecw,\rho} \cap C)$ is a $(1+O(\epsilon))$-approximation to $\opt_{\wreal}$ at the time $[[C]]$ is returned.
\end{claim}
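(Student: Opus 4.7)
The plan is to connect the cut $C$ returned by $\calD$, which is near-optimal in the lazy truncated graph $(G,\vecw_\rho)$, to a near-optimal normalized free cut in the real MWU graph $(G,\wreal)$, by chaining three ingredients: the cut-listing guarantee of \Cref{thm:cut listing}, the Range Mapping Theorem (\Cref{thm:approx mapping}), and the lazy-weight invariant $(1-\epsilon)\wreal \leq \vecw \leq \wreal$ maintained by \Cref{thm:tlmi}. Concretely, I would fix the moment at which $\calD.\textsc{FindCut}()$ returns $[[C]] \neq \emptyset$. Because $\calD$ was initialized on $(G,\vecw_\rho)$ with parameter $k\rho/(1+\epsilon)$ and because the \textsc{Increment} calls keep its internal weights synchronized with $\vecw_\rho$, \Cref{thm:cut listing} forces $\mincut_{\vecw_\rho} \leq k\rho$ together with $\vecw_\rho(C) \leq (1+O(\epsilon))\,\mincut_{\vecw_\rho}$. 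After a straightforward tightening of the internal approximation parameter of $\calD$ (see the obstacle paragraph below), this upgrades to the strict bound $\vecw_\rho(C) < k\rho$, placing $C$ squarely in the hypothesis of \Cref{thm:approx mapping}(\ref{item:approx mapping2}) applied to $(G,\vecw)$ with $\gamma = \epsilon$, which immediately yields $\val_{\vecw}(C,H_{\vecw,\rho}\cap C) < (1+\epsilon)\lambda$.

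Next, I will transfer this bound from the lazy weights $\vecw$ to the real MWU weights $\wreal$. Since $\val_{(\cdot)}(C,F)$ is linear in the edge weights and $\vecw \geq (1-\epsilon)\wreal$, dividing by $(1-\epsilon)$ gives $\val_{\wreal}(C,H_{\vecw,\rho}\cap C) \leq (1+\epsilon)\lambda/(1-\epsilon) = (1+O(\epsilon))\lambda$. For the matching lower bound on $\opt_{\wreal}$, note that \textsc{PunishMin} only scales weights up, so $\wreal \geq \vecwinit$ pointwise; combined with the monotonicity of $\opt$ in the weights (as a minimum of functions linear and non-decreasing in $\vecw$) and the input precondition $\opt_{\vecwinit} \geq \lambda$, this yields $\opt_{\wreal} \geq \lambda$. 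Chaining the two inequalities then gives $\val_{\wreal}(C,H_{\vecw,\rho}\cap C) \leq (1+O(\epsilon))\opt_{\wreal}$, which is exactly the stated claim.

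The hard part is the first step: \Cref{thm:approx mapping}(\ref{item:approx mapping2}) demands the strict inequality $\vecw_\rho(C) < k\rho$, whereas the cut-listing only guarantees $\vecw_\rho(C) \leq (1+O(\epsilon))k\rho$ a priori, and the bound in \Cref{lem:interesting iff} has essentially no multiplicative slack (a relaxation to $\vecw_\rho(C) < \alpha k\rho$ for $\alpha > 1$ blows up by a factor of $k$ in the conclusion, because the offending term $(\alpha k - |F|)/(k-|F|)$ is maximized at $|F| = k-1$). The remedy is to calibrate constants inside $\calD$: initialize it with threshold $k\rho/(1+c\epsilon)$ for a sufficiently large constant $c$ and use a cut-listing approximation parameter of order $\Theta(\epsilon)$, so that every returned cut truly satisfies $\vecw_\rho(C) < k\rho$. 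This recalibration is absorbed into the final $(1+O(\epsilon))$ approximation factor.
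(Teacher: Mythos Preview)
Your proposal is correct and follows essentially the same route as the paper: argue that $\calD$ tracks $\vecw_\rho$, obtain $\vecw_\rho(C) < k\rho$, invoke \Cref{thm:approx mapping}(\ref{item:approx mapping2}) to get $\val_{\vecw}(C,H_{\vecw,\rho}\cap C) < (1+\epsilon)\lambda$, transfer to $\wreal$ via \Cref{inv:new approx weight}, and finish with $\opt_{\wreal} \geq \opt_{\vecwinit} \geq \lambda$. You are in fact more careful than the paper on one point: the paper simply asserts ``by definition of $\textsc{FindCut}()$, $\vecw_\rho(C) < k\rho$,'' whereas you correctly flag that the stated guarantee of \Cref{thm:cut listing} only gives a $(1+O(\epsilon))$-approximate mincut and supply the standard fix of shrinking the internal approximation parameter so the returned cut lands strictly below $k\rho$.
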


We remark that it is important that our cut punished must be approximately optimal w.r.t. the actual MWU weight.

\begin{proof}
  By definition of $\calL.\textsc{Flush}()$ operation, we always have that the exact weight function and approximate weight function are identical at the beginning of the loop.  By definition of $\calL.\textsc{Punish}([[C]])$, the total increment plus the initial weight at the beginning of the loop for every edge $e$ is $\vecw_{\rho}(e)$ and \Cref{inv:new approx weight} holds. Therefore, by definition of $\calD.\textsc{Increment}(\Delta)$, the range cut-listing data structure maintains the weight function $\vecw_{\rho}$ internally. We now bound the approximation of each cut $[[C]]$ that $\calD.\textsc{FindCut}()$ returned. Let $F =  H_{\vecw,\rho}\cap C$. By definition of $\textsc{FindCut}()$, we have that $\vecw_{\rho}(C) < k\rho$. By \Cref{thm:approx mapping}(\ref{item:approx mapping2}), $\val_{\vecw}(C, F) < (1+\eps)\lambda$. By \Cref{inv:new approx weight}, we have that $\val_{\wreal}(C, \tilde F) < (1+O(\eps))\lambda$.  Since $\opt_{\wreal} \geq \opt_{\winit} \geq \lambda$, we have $(C, F)$ is a $(1+O(\eps))$-approximation to $\opt_{w}$.
\end{proof}

\begin{claim}
At the end of \Cref{alg:fast range punisher},  we have $\opt_{\wreal} \geq (1+\eps)\lambda.$
\end{claim}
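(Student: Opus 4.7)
My plan is to chain together three observations: the termination condition of the \textsc{while} loop, the monotonicity between the lazy weight $\vecw$ and the exact weight $\wreal$ supplied by \Cref{inv:new approx weight}, and the contrapositive of the Range Mapping Theorem (\Cref{thm:approx mapping}, item~\ref{item:approx mapping1}).

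First I would extract what the loop exit implies. The data structure $\calD$ was initialized with parameters $(G, \vecw_\rho, k\rho/(1+\eps), \epsilon)$, and the preceding claim has already established that $\calD$ internally maintains $(G, \vecw_\rho)$ throughout the loop (where $\vecw$ is the current lazy weight). By \Cref{thm:cut listing}, $\calD.\textsc{FindCut}()$ returns $\emptyset$ only when the minimum cut exceeds $(1+\epsilon)$ times the initialization threshold, so at termination
\[
\mincut_{\vecw_\rho} \;>\; (1+\epsilon)\cdot \frac{k\rho}{1+\epsilon} \;=\; k\rho.
\]

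Next I would transfer this lower bound from $\vecw$ to $\wreal$. By \Cref{inv:new approx weight}, $\vecw(e) \leq \wreal(e)$ for every edge $e$; since truncation at $\rho$ is monotone, $\vecw_\rho(e) \leq \wreal_\rho(e)$, which gives $\wreal_\rho(C) \geq \vecw_\rho(C)$ for every cut $C$ and therefore $\mincut_{\wreal_\rho} \geq \mincut_{\vecw_\rho} > k\rho$. Moreover, $\wreal$ is obtained from $\vecwinit$ by successive multiplications by factors $\exp(\cdot) \geq 1$, so $\wreal \geq \vecwinit$ pointwise and thus $\opt_{\wreal} \geq \opt_{\vecwinit} \geq \lambda$ by the hypothesis of \Cref{lem:fast range punisher}.

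Finally, I would invoke the contrapositive of \Cref{thm:approx mapping}(\ref{item:approx mapping1}) with $\gamma = \epsilon$ applied to the weight function $\wreal$: that statement says $\opt_{\wreal} \in [\lambda, (1+\epsilon)\lambda)$ would force $\mincut_{\wreal_\rho} < k\rho$, which contradicts what we just proved. Hence $\opt_{\wreal} \notin [\lambda, (1+\epsilon)\lambda)$, and combined with $\opt_{\wreal} \geq \lambda$ this forces $\opt_{\wreal} \geq (1+\epsilon)\lambda$. The only subtlety I anticipate is juggling the three weight functions $\vecw$, $\vecw_\rho$, and $\wreal$ and verifying that the inequality in \Cref{inv:new approx weight} transfers the mincut bound in the correct direction; happily, the lazy weight $\vecw$ under-approximates $\wreal$, which is precisely the direction that lets the bound go through.
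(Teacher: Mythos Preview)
Your proof is correct and follows essentially the same approach as the paper's. The only difference is the order of the two transfer steps: the paper applies the Range Mapping Theorem to the lazy weight $\vecw$ first (deducing $\opt_{\vecw} \geq (1+\eps)\lambda$ from $\mincut_{\vecw_\rho} \geq k\rho$) and then uses \Cref{inv:new approx weight} to conclude $\opt_{\wreal} = \val_{\wreal}(C^*,F^*) \geq \val_{\vecw}(C^*,F^*) \geq \opt_{\vecw}$, whereas you lift the mincut bound from $\vecw_\rho$ to $\wreal_\rho$ first via monotonicity of truncation and then apply the Range Mapping Theorem directly to $\wreal$.
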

\begin{proof}
  Consider the time when $\calD.\textsc{FindCut}()$ outputs $\emptyset$. The fact that this procedure terminates means that $\mincut_{\vecw_{\rho}} \geq k\rho$. Therefore,  \Cref{thm:approx mapping}(\ref{item:approx mapping1})  implies that  $\opt_{\vecw} \geq (1+\eps)\lambda$. Let $(C^*, F^*)$ be an optimal normalized free cut with respect to $\wreal$. We have

  \begin{align*}
    \opt_{\wreal} &= \val_{\wreal}(C^*,F^*) \\
    &\geq \val_{\vecw}(C^*,F^*) \\
    & \geq \opt_{\vecw} \\
    &\geq (1+\eps)\lambda
  \end{align*}

  \noindent where the first inequality follows from \Cref{inv:new approx weight}.
\end{proof}

\begin{claim}
  \Cref{alg:fast range punisher} terminates in $\ot(m+K+\frac{1}{\eps} \cdot \sum_{e \in E}\log ( \frac{\vecw(e)}{\vecwinit(e)}))$ time where $K$ is the number of \textsc{Punish} operations.
\end{claim}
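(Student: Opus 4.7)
The plan is to account for the running time of each line of \Cref{alg:fast range punisher} by appealing to the guarantees of the cut-listing data structure (\Cref{thm:cut listing}) and the truncated lazy MWU increment data structure (\Cref{thm:tlmi}), since the algorithm does nothing more than call into these two black boxes. I would organize the accounting into three phases: initialization/early return, the main while loop, and the final flush.

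First I would dispatch the initialization cost and the early-return branch. Lines~1--2 cost $O(m)$, and the minimum-cut test on line~\ref{line:early return} is a single call to any near-linear time minimum cut algorithm such as~\cite{Karger00mincut}, i.e.\ $\ot(m)$; if the algorithm returns there, the total cost is $\ot(m)$. Otherwise the initializations on lines~5 and~6 each cost $\ot(m)$ by the \textsc{Init} guarantees of \Cref{thm:cut listing} and \Cref{thm:tlmi}, and the closing \textsc{Flush} on line~10 costs $\ot(m)$ by \Cref{thm:tlmi}.

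Next I would aggregate the cost of the while loop. Let $K$ denote the total number of iterations, which equals the number of calls to \textsc{Punish}. By \Cref{thm:cut listing}, each \textsc{FindCut} costs amortized $\ot(1)$, contributing $\ot(K)$ in total. By \Cref{thm:tlmi}, all $K$ calls to \textsc{Punish} together cost $\ot(K) + \ot\bigl(\sum_{e}\log(\wreal(e)/\vecwinit(e))\bigr)$ and collectively emit at most $\ot\bigl(\sum_{e}\log(\wreal(e)/\vecwinit(e))\bigr)$ increments. Since each \textsc{Increment}$(\Delta)$ costs $\ot(|\Delta|)$ by \Cref{thm:cut listing}, summing over iterations the total cost of \textsc{Increment} calls is bounded by the total increment count, i.e.\ by $\ot\bigl(\sum_{e}\log(\wreal(e)/\vecwinit(e))\bigr)$ as well. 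Adding the three phases yields $\ot\bigl(m + K + \sum_{e}\log(\wreal(e)/\vecwinit(e))\bigr)$.

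The final step is to replace $\wreal$ by $\vecw$ using \Cref{inv:new approx weight}: since $\wreal(e) \leq \vecw(e)/(1-\eps)$, we have $\log(\wreal(e)/\vecwinit(e)) \leq \log(\vecw(e)/\vecwinit(e)) + O(\eps)$ per edge, so the accumulated slack is $O(m\eps)$, which is absorbed by the $\ot(m)$ term. The $1/\eps$ factor stated in the claim comfortably accommodates any $\polylog(1/\eps)$ overhead hidden inside the $\ot(\cdot)$ of \Cref{thm:cut listing} and \Cref{thm:tlmi}. The main obstacle is not conceptual but bookkeeping: one must be careful that the amortized guarantee on \textsc{FindCut} and the \emph{aggregate} guarantee on \textsc{Punish} compose correctly with the per-call cost of \textsc{Increment}, whose summed cost is controlled only through the total number of increments ever emitted by \textsc{Punish}. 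Once it is verified that each data structure is used in the regime stated by its theorem, the proof reduces to summing the stated bounds and invoking \Cref{inv:new approx weight}.
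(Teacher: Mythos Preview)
Your proof is correct and follows essentially the same approach as the paper: account separately for the cost of the $\calL$ operations via \Cref{thm:tlmi} and for the $\calD$ operations via \Cref{thm:cut listing}, using that the total number of increments emitted by $\calL.\textsc{Punish}$ bounds the aggregate cost of $\calD.\textsc{Increment}$. Two minor remarks: the final conversion from $\wreal$ to $\vecw$ via \Cref{inv:new approx weight} is unnecessary because after line~10 ($\vecw \gets \calL.\textsc{Flush}()$) the two weight functions coincide; and your reading of the $1/\eps$ factor as mere slack is not quite the intended one (it reflects the $1/\eta$ in the reset-count analysis of \Cref{thm:tlmi}), though since you have established the bound without the factor, the claim follows a fortiori.
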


\begin{proof}
We first bound the running time due to truncated lazy MWU increment. By \Cref{thm:tlmi}, the total running time due to $\calL$ (i.e.,  $\calL.\textsc{Init}, \calL.\textsc{Punish}, \calL.\textsc{Flush}$)  is  $\ot(m+K+\frac{1}{\eps} \cdot \sum_{e \in E}\log ( \frac{\vecw(e)}{\vecwinit(e)}))$ time where $K$ is the number of \textsc{Punish} operations.  We bound the running time due to cut-listing data structure. Observe that the number of cuts listed equals the number of calls of \textsc{Punish} operations, and the total number of edge increments in $\calD$ is $\widetilde{O}\left(\frac{1}{\eps} \cdot \sum_{e \in E}\log ( \frac{\vecw(e)}{\vecwinit(e)})\right)$. By \Cref{thm:cut listing}, the total running time due to $\calD$ (i.e, $\calD.\textsc{Init}, \calD.\textsc{FindCut}(), \calD.\textsc{Increment}(\Delta)$) is as desired.
\end{proof}
 
\section{LP Rounding for $k$ECSS (Proof of \Cref{thm: fast rounding})}\label{sec: rounding}

In this section, we show how to round the LP solution $x$ found by invoking \Cref{thm:intro:fractional $k$ECSS}. The main idea is use a sampling 
technique to sparsify the support of $x$. On the subgraph $G'\subseteq G$ based on this sparsified support, we apply the 2-approximation algorithm of Khuller and Vishkin  \cite{KhullerV94} to obtain a ($2+\epsilon$)-approximation solution.

Let $G$ be a graph with capacities $c$ (we omit capacities whenever it is clear from the context). 
Our algorithm performs the following steps. 

\paragraph*{Step 1: Sparsification.} 
We will be dealing with the following LP relaxation for $k$ECSS. 
\[\min \{\sum_{e \in E(G)} c(e) x_e:  \sum_{e \in C \setminus S} x_e \geq k - |S|, \quad \forall C \in \cC \ \forall S \in \{F : \left |F \right | \leq k-1 \wedge F \subseteq C\},  x \geq 0\} \] 
Denote by ${\sf LP}_{kECSS}(G)$ the optimal LP value on input $G$. 
We prove the following lemma in \Cref{sec:spase graph} that will allow us to sparsify our graph without changing the optimal fractional value by too much: 

\begin{lemma}
\label{lem:spase graph} 
Given an instance $(G,c)$,  and in $\ot(m/\eps^2)$ time,  we can compute a subgraph $G'$ having at most $\widetilde{O}(nk/\epsilon^2)$ edges such that ${\sf LP}_{kECSS}(G') = (1\pm O(\epsilon)) {\sf LP}_{kECSS}(G)$. 
\end{lemma}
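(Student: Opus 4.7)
The plan is to reduce Lemma~\ref{lem:spase graph} to Theorem~\ref{thm: sparsification} by arguing that cut preservation of the sparsifier translates to LP-value preservation. First, I would apply Theorem~\ref{thm: sparsification} to $(G,c_G)$ (treating the costs as capacities) with accuracy $\epsilon$. Making the $\epsilon$-dependence explicit (it is absorbed into $\tilde O(\cdot)$ in the theorem statement), this runs in $\tilde O(m/\epsilon^2)$ time and produces a capacitated graph $(H,c_H)$ on $V$ with $|E(H)|=\tilde O(nk/\epsilon^2)$, satisfying $c_G(\delta_G(S)\setminus F)=(1\pm\epsilon)\,c_H(\delta_G(S)\setminus F)$ for every cut $S\subseteq V$ and every $F\subseteq\delta_G(S)$ with $|F|\leq k-1$. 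I take $G':=H$ with cost function $c_H$, so the reduction is immediate at the level of edge count and running time; the content is to verify the LP guarantee.

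Second, I would show ${\sf LP}_{kECSS}(G')=(1\pm O(\epsilon))\,{\sf LP}_{kECSS}(G)$ via LP duality. The dual of ${\sf LP}_{kECSS}$ is the packing
\[
  \max\ \sum_{(C,F)}(k-|F|)\,y_{C,F}\quad\text{s.t.}\quad \sum_{(C,F):\,e\in C\setminus F} y_{C,F}\leq c(e)\ \forall e,\ y\geq 0,
\]
where $(C,F)$ ranges over cuts $C$ and subsets $F\subseteq C$ with $|F|\leq k-1$. For one direction, given an optimal dual $y^*$ for $(G,c_G)$, I would restrict each $(C,F)$ to $E(H)$ and aggregate weights to obtain a dual for $(G',c_H)$; feasibility carries over using that $c_H(e)\geq c_G(e)$ on sampled edges in the Benczur--Karger-style construction, and the objective changes by at most a $(1\pm O(\epsilon))$ factor by the cut-preservation property. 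For the opposite direction, I would take an optimal primal $x^\dagger$ for $(G',c_H)$, extend it by zero on $e\notin E(H)$, and rescale by $(1+O(\epsilon))$ to restore feasibility against the (richer) cut family of $G$; the resulting primal has cost at most $(1+O(\epsilon))\,{\sf LP}_{kECSS}(G')$ against $c_G$, again by the cut-preservation guarantee applied to the support of $x^\dagger$.

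The main obstacle I foresee is that Theorem~\ref{thm: sparsification} only guarantees an aggregate preservation of $c(C\setminus F)$, whereas LP dual feasibility is checked edge-wise. I would resolve this by using the finer structure of the sampling procedure behind Theorem~\ref{thm: sparsification}: each surviving edge's reweighting factor lies in a polylogarithmic range, which combined with the aggregate guarantee yields edge-wise slack up to a multiplicative $(1+O(\epsilon))$, absorbed into the final approximation factor. As a fallback, a primal rerouting argument (replacing LP mass on $e\notin E(H)$ by flow through nearby edges of $H$ with cost inflation $(1+O(\epsilon))$) gives the upper bound directly without invoking duality. Either approach respects the $\tilde O(m/\epsilon^2)$ time and $\tilde O(nk/\epsilon^2)$ sparsity bounds, establishing Lemma~\ref{lem:spase graph}.
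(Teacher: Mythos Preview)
Your proposal has a genuine conceptual gap: you are sparsifying the wrong object. Applying Theorem~\ref{thm: sparsification} to $(G,c_G)$ with the \emph{costs} as capacities preserves the quantities $c_G(\delta(S)\setminus F)$, but these cost-weighted cut values play no role in the feasibility of the $k$ECSS LP. The LP constraints are $\sum_{e\in C\setminus F} x_e \ge k-|F|$, which are purely combinatorial in the variables $x_e$ and do not involve $c$ at all; the costs appear only in the objective. Consequently, the cut-preservation guarantee you invoke says nothing about whether a feasible $x$ for $(H,c_H)$ remains (approximately) feasible for $G$, or vice versa. Your dual argument does not close this gap either: the dual objective $\sum_{(C,F)}(k-|F|)\,y_{C,F}$ is independent of cut costs, so there is no reason it should move by only a $(1\pm O(\epsilon))$ factor under your restriction; and you yourself identify that the edge-wise dual constraints are not controlled by aggregate cut preservation. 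The ``fallback'' rerouting argument is not an argument as stated---you would need to exhibit, for every unit of LP mass on $e\notin E(H)$, a low-cost replacement inside $E(H)$ that restores feasibility for \emph{every} constrained cut, and nothing in Theorem~\ref{thm: sparsification} furnishes such a replacement. Finally, the lemma asks for a subgraph $G'$ (with the original costs $c$), whereas your $G'=H$ carries a new cost function $c_H$.

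The paper's route is essentially the opposite of yours. It first invokes the near-linear LP solver (Theorem~\ref{thm:intro:fractional $k$ECSS}) on $G$ to obtain a $(1+\epsilon)$-approximate fractional solution $x$, and only then sparsifies---but it sparsifies the \emph{solution} $x$, not the costs. Concretely, it samples each edge $e$ with probability $r_e=\max\{p_e,q_e\}$ where $p_e\propto x_e/\tilde\kappa_e$ is the Bencz\'ur--Karger strength-based probability for the weighted graph $(G,x)$, and $q_e\propto c_e x_e/\mathrm{cost}(x)$ is a cost-aware term. Applying the extended compression theorem (Theorem~\ref{thm: sparsification}) to $(G,x)$ guarantees $\sum_{e\in C\setminus F} x'_e \ge (1-\epsilon)\sum_{e\in C\setminus F} x_e$ for all constrained cuts, i.e.\ approximate LP feasibility of the sampled vector $x'$; the $q_e$ term together with Chernoff gives $\sum_e c_e x'_e \le (1+\epsilon)\sum_e c_e x_e$. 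Setting $G'$ to be the support of $y=(1+O(\epsilon))x'$ with the \emph{original} costs yields ${\sf LP}_{kECSS}(G')\le c^T y \le (1+O(\epsilon)){\sf LP}_{kECSS}(G)$, while ${\sf LP}_{kECSS}(G')\ge{\sf LP}_{kECSS}(G)$ is automatic because $G'\subseteq G$. The point you are missing is that the sparsifier must be built on top of an LP solution, not on top of the cost function.
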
 

The first step is simply to apply this lemma to obtain $G'$ from $G$.

\paragraph*{Step 2: Reduction to $k$-arborescences.} Next, we reduce the $k$ECSS problem to the minimum-cost $k$-arborescence problem which, on capacitated directed graph $(H,c_H)$, can be described as  the following IP: 
\[\min \{\sum_{e \in E(H)} c_H(e) z_e: \sum_{e \in \delta^+(C)} z_e \geq k \mbox{ for $C \in \cset$}; z \in \{0,1\}^{E(H)}\} \] 
where $\cset$ is the set of all cuts $C$ such that $\{r\} \subseteq C \subsetneq V(G)$. 
Denote by $\opt_{ar}(H)$ and ${\sf LP }_{ar}(H)$ the optimal integral and fractional values\footnote{The relaxation is simply $\min \{\sum_{e \in E(H)} c_H(e) z_e: \sum_{e \in \delta^+(C)} z_e \geq k \mbox{ for $C \in \cset$}; z \in [0,1]^{E(H)}\} $} of the minimum-cost $k$-arborescence problem respectively.
We use the following integrality of its polytope: 

\begin{theorem}
[\cite{schrijver2003combinatorial}, Corollary 53.6a]
\label{thm: integrality of ar}  \label{thm:arbor-int}
The minimum-cost $k$-arborescence's polytope is integral, so we have that $\opt_{ar}(H) = {\sf LP}_{ar}(H)$ for every capacitated input graph $H$.  
\end{theorem}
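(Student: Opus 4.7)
The plan is to prove integrality by combining Edmonds' disjoint branching theorem with the standard uncrossing technique. Recall Edmonds' theorem: $H$ contains $k$ edge-disjoint $r$-arborescences if and only if $|\delta^+(C)| \geq k$ for every $C$ with $r \in C \subsetneq V(H)$. Consequently, the $0/1$ integer points of $P = \{z \in [0,1]^{E(H)} : z(\delta^+(C)) \geq k \text{ for all } C \in \cset\}$ correspond exactly to incidence vectors of edge sets containing $k$ edge-disjoint $r$-arborescences, so it suffices to show that every extreme point of $P$ is integral. Then, weighted by $c_H$, the minimum of a linear function over $P$ is attained at an integral extreme point, giving $\opt_{ar}(H) = {\sf LP}_{ar}(H)$.

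First I would fix an arbitrary extreme point $z^*$ of $P$ and let $\topds$ collect its tight constraints: the cuts $C \in \cset$ with $z^*(\delta^+(C)) = k$, together with the edges $e$ where $z^*_e \in \{0,1\}$. The main combinatorial tool is the submodularity of the directed out-cut function $f(C) = z^*(\delta^+(C))$, which yields, for any two crossing tight cuts $C_1, C_2 \in \topds$,
\[
z^*(\delta^+(C_1)) + z^*(\delta^+(C_2)) \;\geq\; z^*(\delta^+(C_1 \cap C_2)) + z^*(\delta^+(C_1 \cup C_2)) \;\geq\; 2k,
\]
forcing equality throughout, hence tightness of both $C_1 \cap C_2$ and $C_1 \cup C_2$. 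A routine bookkeeping on edge contributions also yields $\chi_{\delta^+(C_1)} + \chi_{\delta^+(C_2)} = \chi_{\delta^+(C_1\cap C_2)} + \chi_{\delta^+(C_1\cup C_2)}$, so the uncrossed pair spans the same subspace of tight-constraint vectors as the original.

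Iterating the uncrossing, I would obtain a laminar family $\mathcal{L}$ of tight cuts whose incidence vectors span the same subspace as all tight cuts at $z^*$. Restricted to the edges with $0 < z^*_e < 1$, the constraint matrix indexed by $\mathcal{L}$ becomes a network matrix (a laminar family sits on a rooted tree, and incidence vectors of out-cuts on such a tree form a network matrix), which is totally unimodular. Since the right-hand side of these tight constraints is integral, the unique solution to the tight system, namely $z^*$, must itself be integral. The main obstacle I anticipate is verifying the uncrossing step carefully in the directed setting, where one must track edges whose tails and heads straddle different pieces of $C_1 \cap C_2$, $C_1 \cup C_2$, $C_1 \setminus C_2$ and $C_2 \setminus C_1$; this is handled by the directed version of submodularity for $\delta^+$ and by using that the root $r$ lies in every relevant cut. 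An alternative route, which might be cleaner in a fully detailed write-up, is to invoke Edmonds' matroid union theorem directly: $k$ edge-disjoint $r$-arborescences form the common independent sets in the union of $k$ copies of the appropriate matroid on $E(H)$, and matroid union polytopes are known to be integral, which immediately yields the claim.
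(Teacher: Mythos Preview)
The paper does not prove this statement at all: it is quoted as Corollary~53.6a from Schrijver's textbook and used as a black box in the rounding step. So there is no ``paper's own proof'' to compare against; your proposal supplies an argument where the paper simply cites one.

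Your uncrossing route is the standard one and is essentially correct. Two small remarks. First, in the directed setting it is cleaner to work with the complementary formulation (sets $S\subseteq V\setminus\{r\}$ with $z(\delta^-(S))\ge k$), because then crossing tight sets have nonempty intersection and union not containing $r$, so both uncrossed sets automatically lie in the constraint family; in your $r$-containing formulation you would otherwise have to argue separately when $C_1\cup C_2=V$. Second, your alternative ``matroid union'' route is mis-labelled: a single $r$-arborescence is a common base of two matroids (a graphic matroid and a partition matroid on in-degrees), so the relevant tool is matroid \emph{intersection}, and the extension to $k$ disjoint arborescences is usually obtained either via Edmonds' disjoint branching theorem directly (giving TDI of the system) or via an intersection of suitably ``$k$-fold'' matroids, not via matroid union.
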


For any undirected graph $G$, denote by $D[G]$ the directed graph obtained by creating, for each (undirected) edge $uv$ in $G$, two  edges $(u \rightarrow v)$ and $(v \rightarrow u)$ in $D[G]$ whose capacities are just $c(uv)$.  
We will use the following theorem by Khuller and Vishkin (slightly modified) that relates the optimal values of the two optimization problems. 

\begin{theorem}
\label{thm: KV} 
For any graph $(H,c)$, the following properties hold: 
\begin{itemize}
    \item ${\sf LP}_{ar}(D[H]) \leq 2 {\sf LP}_{kECSS}(H)$, and 
    \item Any feasible solution for $k$-arborescences in $D[H]$ induces a feasible $k$ECSS solution in $H$ of at most the same cost. 
\end{itemize}
\end{theorem}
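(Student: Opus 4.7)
The plan is to prove both parts via the natural ``doubling/projection'' correspondence between undirected edges of $H$ and pairs of anti-parallel arcs of $D[H]$. For part~(i), I would take any feasible fractional $k$ECSS solution $x$ for $H$ and lift it to a directed assignment $z \in \mathbb{R}^{E(D[H])}_{\geq 0}$ by setting $z_{(u \to v)} = z_{(v \to u)} = x_{uv}$ for every undirected edge $uv \in E(H)$. The cost of $z$ is exactly $2\sum_e c(e)\, x_e$, because each undirected edge $uv$ contributes two arcs in $D[H]$, both of cost $c(uv)$. Feasibility for the $k$-arborescence LP is then immediate: for any cut $C \in \cset$ with $r \in C \subsetneq V$, the out-arcs $\delta^+(C)$ in $D[H]$ are in bijection with the undirected edges of $\delta_H(C)$, so
\[
\sum_{e \in \delta^+(C)} z_e \;=\; \sum_{uv \in \delta_H(C)} x_{uv} \;\geq\; k,
\]
using the basic cut constraint ($S = \emptyset$) of the $k$ECSS LP. Applying this to an optimal $x$ yields ${\sf LP}_{ar}(D[H]) \leq 2\,{\sf LP}_{kECSS}(H)$.

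For part~(ii), I would use the reverse projection: given a feasible integral $k$-arborescence $F \subseteq E(D[H])$, define $F' \subseteq E(H)$ by putting $uv \in F'$ whenever at least one of $(u \to v), (v \to u)$ lies in $F$. Each undirected edge of $F'$ is paid for by at least one arc of $F$ of the same cost, so $c_H(F') \leq c(F)$. To verify $k$-edge-connectivity of $F'$, fix any $\emptyset \neq C \subsetneq V$: if $r \in C$, the arborescence cut constraint gives $|F \cap \delta^+(C)| \geq k$, and these arcs project to $k$ distinct undirected edges of $F' \cap \delta_H(C)$; if $r \notin C$, apply the same argument to $\bar C = V \setminus C$, which contains $r$, and whose directed out-cut again injects into $\delta_H(C)$.

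There is no real obstacle here; the whole argument is a bookkeeping exercise once the undirected/directed dictionary is set up. The only conceptual point worth flagging is that the $k$-arborescence LP only enforces out-cut constraints for sets \emph{containing} the fixed root $r$, yet because every undirected cut $\delta_H(C)$ corresponds to two anti-parallel directed cuts (either from $C$ or from $\bar C$), choosing any single root $r$ suffices to dominate every undirected cut of $H$. This symmetry is exactly what forces the factor $2$ in part~(i) while allowing part~(ii) to be cost-preserving.
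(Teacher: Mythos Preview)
Your argument for part~(ii) is correct and essentially matches the paper's (in fact you are more explicit about the $r \in C$ versus $r \notin C$ case split, and about why the arcs in $\delta^+(C)$ project to \emph{distinct} undirected edges).

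For part~(i) there is a small but genuine gap. The $k$-arborescence LP in this paper carries box constraints $z \in [0,1]^{E(D[H])}$ (see the footnote defining ${\sf LP}_{ar}$), so verifying the out-cut inequalities alone does not establish feasibility of your lifted vector $z$. Your lifting $z_{(u\to v)} = z_{(v\to u)} = x_{uv}$ lands in $[0,1]$ only if $x_{uv} \leq 1$, but the $k$ECSS LP used in the rounding section is the KC-inequality relaxation with $x \geq 0$ and \emph{no} upper bound on $x$, so an arbitrary feasible~$x$ need not satisfy this. The paper closes this gap by invoking Lemma~\ref{lem:KC for box}: any feasible point of the KC relaxation can be truncated coordinatewise to $[0,1]$ and remain feasible for the original cut constraints, and since costs are nonnegative the optimal~$x$ may be taken to already satisfy $x_e \leq 1$. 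Once you insert this one observation, your proof of part~(i) coincides with the paper's.
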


Note that \Cref{thm: integrality of ar} and the algorithm by Khuller and Vishkin imply that the integrality gap of the $k$ECSS LP is at most $2$.
While this result is immediate, it seems to be a folklore. To the best of our knowledge, it was not explicitly stated anywhere in the literature.
This integrality gap allows us to obtain the first part of \Cref{thm:intro:integral $k$ECSS}.
We defer the proof of \Cref{thm: KV} to \Cref{sec: proof KV}.
Our final tool to obtain \Cref{thm: fast rounding} (and the second part of \Cref{thm:intro:integral $k$ECSS}) is Gabow's algorithm:

\begin{theorem} [\cite{gabow1995matroid}] \label{thm:fast-karbor}
 Given a graph $G=(V,E,c)$ with positive cost function $c$, a fixed root $r \in V$, and let $c_{\max}$ be the maximum cost on edges, there exists an algorithm that in $\tilde{O}(k m \sqrt{n} \log (n c_{\max}))$ time outputs the integral minimum-cost $k$-arborescence.
\end{theorem}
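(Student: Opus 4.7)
The plan is to reduce minimum-cost $k$-arborescence to weighted matroid intersection on the directed edge set $E$. A $k$-arborescence rooted at $r$ is precisely an edge set $F \subseteq E$ that is a common base of two matroids: (i) the partition matroid $M_1$ whose parts are the in-stars $\delta^{-}(v)$ for $v \neq r$, each with capacity $k$, enforcing that every non-root vertex has exactly $k$ incoming arcs; and (ii) the matroid $M_2$ obtained as the union of $k$ copies of the graphic matroid on the underlying undirected graph, whose bases are exactly the edge sets decomposing into $k$ edge-disjoint spanning trees. By Edmonds' disjoint arborescences theorem, a common base of $(M_1,M_2)$ is a union of $k$ edge-disjoint spanning arborescences rooted at $r$, and conversely. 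Thus minimum-cost $k$-arborescence is a weighted matroid intersection problem of rank $k(n-1)$.

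\textbf{Algorithmic skeleton.} After a preliminary $O(\log(nc_{\max}))$-factor feasibility check (a single max-flow that verifies $k$-edge-connectivity from $r$), I would invoke the classical augmenting-path algorithm for weighted matroid intersection: starting from $F = \emptyset$, repeatedly augment along a minimum-cost shortest path in the exchange graph, growing $|F|$ by one while maintaining that $F$ is a minimum-cost common independent set of its size. A direct implementation costs $\tilde{O}(m)$ per augmentation and $k(n-1)$ augmentations, yielding the baseline bound $\tilde{O}(kmn)$ that corresponds to \cite{KhullerV94}.

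\textbf{Achieving the $\sqrt{n}$ and $\log(nc_{\max})$ factors.} To drive the running time down to $\tilde{O}(km\sqrt{n}\log(nc_{\max}))$, I would layer two ideas on top of the basic scheme. First, \emph{cost scaling}: process the cost bits in $O(\log(nc_{\max}))$ rounds, each round warm-started from the rounded optimal solution of the previous one; the reduced costs within a round are bounded by a constant, so dual updates and reoptimization can be performed cheaply. Second, \emph{blocking augmentations in the Hopcroft--Karp style}: within each scaled phase, compute a maximal collection of vertex-disjoint shortest augmenting paths in one sweep and augment along all of them simultaneously. The standard potential-function argument shows that after $O(\sqrt{n})$ such blocking sweeps the shortest augmenting path length must exceed $\sqrt{n}$, so only $O(\sqrt{n})$ sweeps per scaling phase are needed. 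Using link-cut trees to maintain the fundamental circuits and cocircuits of $M_2$, and a simple bucketed structure for $M_1$, each sweep can be performed in $\tilde{O}(km)$ amortized time.

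\textbf{Main obstacle.} The chief difficulty is the dynamic maintenance of the exchange graph for the $k$-graphic matroid $M_2$ as arcs are swapped in and out during a blocking sweep. One swap can simultaneously alter fundamental circuits across several of the $k$ spanning-tree components, and these updates must be reflected in the exchange-graph oracle at amortized polylog cost, or the $\sqrt{n}$ gain from Hopcroft--Karp is lost. Designing and analyzing the nested link-cut-tree / matroid-sum data structure that supports these operations in polylog amortized time is the technical heart of the argument, and is precisely what lets the $O(n)$ factor in the naive bound collapse to $O(\sqrt{n})$.
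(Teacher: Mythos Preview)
The paper does not prove this theorem: it is quoted as a black-box result of Gabow~\cite{gabow1995matroid} and invoked only as a subroutine in the rounding step (Step~2 of \Cref{sec: rounding}). There is therefore no in-paper argument to compare your proposal against.

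That said, your high-level plan---reducing minimum-cost $k$-arborescence to weighted matroid intersection between the in-degree partition matroid and the $k$-fold graphic-matroid sum, and then accelerating the augmenting-path scheme with cost scaling plus Hopcroft--Karp-style blocking---is indeed the spirit of Gabow's actual algorithm. One point in your sketch deserves care: the generic blocking-augmentation phase-count argument yields $O(\sqrt{r})$ phases where $r$ is the rank of the intersection, here $r = k(n-1)$, so a direct instantiation gives $O(\sqrt{kn})$ rather than $O(\sqrt{n})$ phases per scale, and hence $\tilde{O}(m\sqrt{kn}\log(nc_{\max}))$ rather than the claimed $\tilde{O}(km\sqrt{n}\log(nc_{\max}))$. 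Pinning down the exact dependence on $k$ requires exploiting the specific structure of these two matroids (how the $k$ tree components share augmenting-path work, and the amortized cost of the graphic and partition oracles under dynamic swaps), which is precisely the technical content of Gabow's paper. Your ``main obstacle'' paragraph correctly flags this as the crux, but the analysis you outline does not yet deliver the stated bound.
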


\paragraph*{Algorithm of \Cref{thm: fast rounding}.} Now, using the graph $G'$ created in the first step, we create $D[G']$, and invoke Gabow's algorithm to compute an optimal $k$-arborescence in $D[G']$. Let $S \subseteq E(G)$ be the induced $k$ECSS solution.

The cost of $S$ is at most:
\begin{align*}
  \opt_{ar}(D[G']) &\leq {\sf LP}_{ar}(D[G']) \\
  &\leq 2 {\sf LP}_{kECSS}(G') \\
  &\leq 2 (1+O(\epsilon)){\sf LP}_{kECSS}(G)\\
  &\leq 2 (1+O(\epsilon)) \opt_{kECSS}(G)
\end{align*}
%\[\] 
The first inequality is due to \Cref{thm: integrality of ar}. The second one is due to~\Cref{thm: KV} (first bullet). The third one is due to~\Cref{lem:spase graph}. 

\paragraph*{Analysis} %
Step 1 takes $\ot(m/\epsilon^2)$ time, by \Cref{lem:spase graph}.
As the sparsified $G'$ has $m' = \ot(\frac{nk}{\epsilon^2})$ edges, for Step 2, by \Cref{thm:fast-karbor}, we can compute the arborescence in $O(k m' \sqrt{n} \log (nc_{\max})) = \ot(\frac{k^2 n^{1.5}}{\epsilon^2} \log c_{\max})$ time.
We show in \Cref{subsec:bound_cmax} how to remove the term $\log {c_{\max}}$ in our case.
In summary, the total running time is $\ot\left(  \frac{m}{\epsilon^2}+ \frac{k^2n^{1.5}}{\epsilon^2} \right)$.
Notice that the running time can be $\ot(\frac{m}{\epsilon^2} + T_k(kn/\epsilon^2,n))$ if we let the running time of \Cref{thm:fast-karbor} be $T_k(m,n)$, this complete the proof for \Cref{thm: fast rounding}.

\bibliography{references}

\newpage
 \appendix
 \section{Truncated Lazy MWU Increment (Proof of \Cref{thm:tlmi})}\label{sec:lazy MWU increment}

\label{sec:dataStructures}

\subsection{Additive Accuracy}

\newcommand{\calB}{\mathcal{B}}
\newcommand{\calF}{\mathcal{F}}
\newcommand{\cF}{\mathcal{F}}
\newcommand{\calS}{\mathcal{S}}

Notice that, at any time, we always have $\wreal(e)= \frac{1}{c(e)} \cdot
\exp \left( \vreal(e) s(e)\right)$ for some positive real numbers $\vreal(e)$
and $s(e) = \frac{\epsilon}{c(e)}$.
For the true vector $\vreal$, the update rule for $(C, F)$ becomes the
following: $\vreal(e) \leftarrow \vreal(e) + c_{\min}$ for all $e \in C
\setminus F$. This update causes all
edges in $C \setminus F$ to increase their $\vreal(e)$ by the same amount
of $c_{\min}$. By \Cref{thm:approx mapping}(2), it is enough to use
$F$ to be always $H_{\rho, \vecw} \cap C$, i.e., the set of heavy edges with respect to $\vecw$ inside
$C$.  From now on, we always use $H_{\vecw,\rho} \cap C$ as a free
edge set whenever we punish $C$.  

Instead of maintaining the approximate vector $\vecw$ for the real
vector $\wreal$, we instead work
with the additive form of the approximate vector $\vecv$ for the real vector $\vreal$,
and we bound the additive error:
\begin{align} \label{eq:additive error small}
  \forall e \in E, \vreal(e) - \eta/s(e) \leq \vecv(e) \leq \vreal(e) 
\end{align}
Next, we show that it is enough to work on $\vreal$ with additive errors. 
\begin{proposition}
  If \Cref{eq:additive error small} holds, then $\forall e \in E, \wreal(e)
  (1-\eta) \leq \vecw(e) \leq \wreal(e)$. 
\end{proposition}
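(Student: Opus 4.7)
The plan is to translate the additive bounds on $\vecv(e)$ through the exponential map that relates the additive form to the multiplicative form. By the setup, $\wreal(e) = \frac{1}{c(e)} \exp(\vreal(e) s(e))$ and, by analogy, the approximate weight satisfies $\vecw(e) = \frac{1}{c(e)} \exp(\vecv(e) s(e))$, where $s(e) > 0$. Since $\exp$ is monotone and $s(e) > 0$, the entire argument reduces to manipulating exponentials of the additive-accuracy bounds.

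First I would establish the upper bound $\vecw(e) \leq \wreal(e)$. This is immediate: from $\vecv(e) \leq \vreal(e)$ (the right half of \Cref{eq:additive error small}), monotonicity of $\exp$ yields $\exp(\vecv(e) s(e)) \leq \exp(\vreal(e) s(e))$, and dividing by $c(e) > 0$ gives the claim. Next, for the lower bound, I would multiply the left half of \Cref{eq:additive error small}, namely $\vreal(e) - \eta/s(e) \leq \vecv(e)$, by $s(e)$ and exponentiate to obtain
\begin{equation*}
\exp(\vreal(e) s(e)) \cdot e^{-\eta} \;\leq\; \exp(\vecv(e) s(e)).
\end{equation*}
Dividing by $c(e)$ then yields $\wreal(e) \cdot e^{-\eta} \leq \vecw(e)$.

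Finally, I would apply the elementary inequality $e^{-\eta} \geq 1 - \eta$ (valid for all $\eta \geq 0$, which is the relevant regime here since $\eta$ plays the role of a small accuracy parameter) to conclude $\wreal(e)(1-\eta) \leq \wreal(e) e^{-\eta} \leq \vecw(e)$. Combining both bounds gives the stated two-sided inequality. There is no substantive obstacle; the proof is essentially a line-by-line exponentiation of the additive bound, and the only tiny subtlety is the use of $e^{-\eta} \geq 1 - \eta$, which is standard.
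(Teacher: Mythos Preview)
Your proof is correct and essentially identical to the paper's: both exponentiate the additive bound $\vreal(e)-\eta/s(e)\le \vecv(e)\le \vreal(e)$ via the relation $\vecw(e)=\frac{1}{c(e)}\exp(\vecv(e)s(e))$ and then invoke $e^{-\eta}\ge 1-\eta$ for the lower bound. The paper compresses the lower-bound computation into a single displayed line, but the steps are the same as yours.
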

\begin{proof}
  Fix an arbitrary edge $e \in E$, we have $\vecw(e) \leq
  \wreal(e)$. Moreover,
$$ \vecw(e) \geq \frac{1}{c(e)} \exp( (\vreal(e) - \eta/s(e)) s(e)) =
\frac{1}{c(e)} \cdot \frac{\exp(\vreal(e)s(e))}{\exp(\eta)} \geq (1-\eta) \wreal(e).$$
\end{proof}
\subsection{Local Bookkeeping} \label{sec:local book}

\newcommand{\textdeg}{\operatorname{q}}
\newcommand{\textref}{\operatorname{ref}}
\newcommand{\textlast}{\operatorname{last}}
\newcommand{\textdiff}{\operatorname{diff}}
\newcommand{\textpriority}{\operatorname{priority}}

We describe the set of variables to maintain in order to support
$\textsc{Punish}$ operation efficiently. Let $\calF$ be a
$\eps$-canonical family of subsets of edges (as defined in
\Cref{lem:canonical cuts}). We call each subset of edges in $\calF$ as a canonical cut.  Let $\bar E = E \setminus H_{\vecw, \rho}
$ be the set of non-heavy edges where $H_{\rho, \vecw} = \{e \in E\colon \vecw(e) \geq \rho\}$ is the set of heavy edges. We define a bipartite graph $\calB =
(\calF, \bar E, E_{\calB})$ where the first vertex partition  is the set
of canonical cuts $\calF$ , the
second vertex partition is $\bar E$, and for each $S \in \calF$ and for
each $e \in \bar E$, we add an edge  $(S,e)$ to $E_{\calB}$ if and
only if $e \in S$.  Let $\textdeg(\calB) =$ the maximum degree of
nodes in $\bar E$ in graph $\calB$.  Since $\calF$ is
$\eps$-canonical, $\textdeg(\calB) = \ot(1)$. By \Cref{lem:canonical
  cuts}, given a description
$[[C]]$ of 1 or 2$-$repsecting cut, we can compute a list of at most
$\ot(1)$ canonical cuts in $\calF$ in $\ot(1)$ time.

We maintain the following variables:
\begin{enumerate}
\item For each canonical cut $S \in \calF$,
  \begin{enumerate} 
     \item we have a non-negative
  real number $\textref(S)$ representing the reference point for
  the total increase in $S$ so far.  %
  \item Also, we create a min priority queue
  $Q_{S}$ containing the set of neighbors $N_{\calB}(S)$ (which is the
  set of edges in $\bar E$ that $S$ contains). %
  \item Also, we define $c_{\calB}(S) = \min_{e \in N_{\calB}(S)}c(e)$ for the purpose of
  computing $c_{\min}$  which is the minimum capacity $c(e)$ for all
  edge $e$ in the cut (excluding  heavy edges) that we want to punish. %
  \end{enumerate}
\item For each edge $(S,e) \in E_{\calB}$, we  have a number
  $\textlast(S,e)$ representing the last update point for $e$ in $S$.
 \item For each edge $e \in E$, we maintain $\vecv(e)$.
\end{enumerate}

For each edge $(S,e) \in E_{\calB}$, we define $\textdiff(S,e) =
\textref(S) - \textlast(S,e) \geq 0$.  This difference represents the total
slack from the exact weight of $e$ on $S$ (we will ensure that the
slack is non-negative by being ``lazy''). When summing over all canonical cuts that contains $e$, we
ensure that $\sum_{S \ni  e}\textdiff(S,e) =
\vreal(e) - \vecv(e)$.  More formally, we maintain the following
invariants throughout the execution of the truncated lazy increment.

\begin{invariant}  \label{inv:lazy update}Let $\eta' = \eta/\textdeg(\bset)$. %
  \begin{enumerate}[(a)]
   \item \label{item:total diff not far} for all $e \in \bar E$, $ \frac{\eta}{s(e)}  \geq \sum_{S:
       (S,e) \in E_{\calB}}\textdiff(S,e) = \vreal(e) - \vecv(e)$,  
   \item \label{item:priority set correctly} for all $(S,e) \in E_{\calB}$, $Q_S.\textpriority(e) =
     \textlast(S,e) +  \frac{\eta'}{s(e)}$, and
   \item  \label{item:heavy edges exact}for all $e \in H_{\vecw,\rho}$, $\vreal(e) = \vecv(e)$.
   \end{enumerate}
\end{invariant}

Intuitively, the first invariant means for each $e \in \bar E$, the total difference over all
$S \ni e$ is bounded. The second invariant ensures that $\textref(S)
\leq Q_S.\textpriority(e)$ if and only if 
$\textdiff(S,e)$ is small, and we can apply extract min operations on $Q_S$ to detect all
edges whose priority exceeds the reference point efficiently.  The
third invariant means we restore the exact value for all heavy
edges.

\begin{proposition}
\Cref{inv:lazy update}\ref{item:total diff not far} implies \Cref{eq:additive error small}.
\end{proposition}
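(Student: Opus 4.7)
The plan is to unpack the bookkeeping identity $\vreal(e) - \vecv(e) = \sum_{S \ni e} \textdiff(S,e)$ that is baked into \Cref{inv:lazy update}\ref{item:total diff not far}, and then split on whether $e$ is heavy or non-heavy to read off the two inequalities $\vreal(e) - \eta/s(e) \leq \vecv(e) \leq \vreal(e)$ in \Cref{eq:additive error small}.

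First I would treat the case $e \in \bar E$, which is precisely where the cited invariant applies. Recall that $\textdiff(S,e) = \textref(S) - \textlast(S,e)$ and that the preceding discussion guarantees this quantity is non-negative (the data structure only advances $\textref(S)$ and never decreases $\textlast(S,e)$ below the corresponding reference snapshot). Summing over $S$ such that $(S,e) \in E_{\calB}$, the equality in \ref{item:total diff not far} therefore gives $\vreal(e) - \vecv(e) \geq 0$, which is the upper bound $\vecv(e) \leq \vreal(e)$. For the lower bound, the inequality $\frac{\eta}{s(e)} \geq \sum_{S:(S,e)\in E_{\calB}} \textdiff(S,e)$ combined with the same equality yields $\vreal(e) - \vecv(e) \leq \eta/s(e)$, which rearranges to $\vecv(e) \geq \vreal(e) - \eta/s(e)$.

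For the remaining case $e \in H_{\vecw,\rho}$, item \ref{item:total diff not far} is silent, so I would invoke \Cref{inv:lazy update}\ref{item:heavy edges exact}, which asserts $\vreal(e) = \vecv(e)$; both desired inequalities then hold trivially since $\eta/s(e) \geq 0$. Strictly speaking, the cited proposition names only item (a), so I would explicitly flag this as a mild abuse and point out that item (c) is needed to cover heavy edges; after that, the statement is immediate.

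The proof is essentially a one-line rearrangement of the bookkeeping identities, so I do not expect any genuine obstacle here. The real work already lives in the definitions themselves, namely in setting up $\textdiff(S,e) \geq 0$ and in the partition of $E$ into $\bar E \cup H_{\vecw,\rho}$; the conceptual effort of the section will go into showing that the subsequent \textsc{Punish} operation \emph{preserves} \Cref{inv:lazy update}, not into this elementary implication.
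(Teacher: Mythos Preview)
Your proposal is correct. The paper states this proposition without proof, treating it as immediate from the definitions; your argument fills in exactly the obvious details (non-negativity of $\textdiff$ gives $\vecv(e)\leq\vreal(e)$, the bound $\eta/s(e)$ gives the other direction, and item~\ref{item:heavy edges exact} handles heavy edges), and your remark that item~(c) is tacitly needed is a fair observation.
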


Also, this invariant allows us to ``reset'' $\vecv$ to be $\vreal$
efficiently. 

\begin{algorithm}[H]
  \BlankLine
  $\vecv(e) \gets \vecv(e) + \sum_{S: (S,e) \in E_{\calB}} \textdiff(S,e)$ \; %
  \For{\normalfont{each} $S: (S,e) \in E_{\calB}$}{
  $\textlast(S,e) \gets \textref(S)$ \;
   $Q_S.\textpriority(e) \gets \textlast(S,e) + \frac{\eta'}{s(e)}$ \;
 }
\caption{\textsc{Reset}($e$)}
\label{alg:reset}
\end{algorithm}

Since priority queue supports the change of priority in $O(\log m)$ time, we
have:
\begin{proposition}
  The procedure \textsc{Reset} can be implemented in time $O( \textdeg(\calB) \cdot \log m) = \ot(1)$.
\end{proposition}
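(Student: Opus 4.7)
The plan is to read off the running time directly from the pseudocode of \textsc{Reset} by charging each line to the appropriate data-structural primitive, and then to invoke the canonical-family property to conclude the $\ot(1)$ bound.

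First I would observe that every operation performed by \textsc{Reset}($e$) is indexed by the neighborhood of $e$ in the bipartite graph $\calB$, namely $\{S \in \calF : (S,e) \in E_{\calB}\}$. By the definition of $\textdeg(\calB)$ as the maximum degree of nodes in $\bar E$, this neighborhood has size at most $\textdeg(\calB)$. Thus the first line of \textsc{Reset} computes a sum of at most $\textdeg(\calB)$ precomputed terms (each $\textdiff(S,e)$ is $\textref(S)-\textlast(S,e)$, which is $O(1)$ to evaluate given that both values are stored), and the for-loop iterates at most $\textdeg(\calB)$ times.

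Next I would bound the cost of a single iteration. Updating $\textlast(S,e) \gets \textref(S)$ is $O(1)$. The nontrivial step is the priority update $Q_S.\textpriority(e) \gets \textlast(S,e) + \eta'/s(e)$, which is a \emph{change-key} operation on a min priority queue; in a standard binary heap (or comparable structure) this costs $O(\log|Q_S|) = O(\log m)$ since $|Q_S|\le m$. Multiplying, the total cost of \textsc{Reset} is $O(\textdeg(\calB)\cdot \log m)$.

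Finally I would invoke \Cref{lem:canonical cuts}, which guarantees that every edge belongs to $\ot(1)$ sets of the canonical family $\calF$; this is precisely the statement $\textdeg(\calB) = \ot(1)$ already noted in \Cref{sec:local book}. Substituting this into the bound $O(\textdeg(\calB)\cdot \log m)$ and absorbing the $\log m$ factor into the $\ot(\cdot)$ notation yields the claimed $\ot(1)$ running time. The only ``obstacle'' worth flagging is making sure the priority queue implementation really supports change-key in $O(\log m)$; since this is standard for binary heaps augmented with a pointer from each key to its current position, no further argument is needed.
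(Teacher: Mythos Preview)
Your proposal is correct and follows essentially the same approach as the paper: the paper's argument is a one-liner noting that the priority queue supports a change of priority in $O(\log m)$ time, from which the bound $O(\textdeg(\calB)\cdot \log m)$ follows immediately by counting the $\textdeg(\calB)$ iterations of the for-loop. Your write-up simply spells out the line-by-line accounting and the reason $\textdeg(\calB)=\ot(1)$ more explicitly, which is fine.
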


\subsection{Init}
Define $\vecv = v_0$ where $v_0$ is the additive form of $w_0$. We construct the bipartite graph $\calB = (\calF,
\bar E, E_{\calB})$ as defined in \Cref{sec:local book}. We use \Cref{lem:canonical cuts} to construct
$\calB$ in $\ot(m)$ time. For each $S \in \calF$, we create a min priority
queue $Q_S$ containing all the elements in $N_{\calB}(S)$ where for
each $e \in N_{\calB}(S)$, we set $Q_s.\textpriority(e) =
\eta'/s(e)$. We also define $\textref(S) = 0$ for all $S \in \calF$, and
$\textlast(S,e) = 0$ for all $(S,e) \in E_{\calB}$. By design, the
invariants are satisfied. The total running time of this step is
$\ot(m)$.  
\subsection{Punish}

Given a short description of 1 or 2-respecting cut $[[C]]$, we apply
\Cref{lem:canonical cuts} to obtain a set $\calS \subseteq \calF$ of $\ot(1)$ canonical
cuts whose disjoint union is $C$ in $\ot(1)$ time.  Recall that the
update increases $\vreal(e)$ by $c_{\min}$ for each $e \in C -
H_{\vecw,\rho}$ where $c_{\min} = \min_{e \in C - H_{\vecw,\rho}} c(e)$.
\begin{claim}
We can compute $c_{\min}$ in $\ot(1)$ time.
\end{claim}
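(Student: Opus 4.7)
The plan is to exploit the precomputed quantity $c_{\mathcal{B}}(S) = \min_{e \in N_{\calB}(S)} c(e)$ that the data structure maintains for every canonical cut $S \in \calF$. Since $[[C]]$ is a short description of a $(1+\eps)$-approximate normalized free cut, \Cref{lem:canonical cuts} guarantees (in $\ot(1)$ time) a family $\calS \subseteq \calF$ of $\ot(1)$ canonical cuts whose disjoint union equals $C$. Recall that the bipartite graph $\calB$ has its second vertex partition equal to the non-heavy edges $\bar E = E \setminus H_{\vecw, \rho}$, so $N_{\calB}(S) = S \setminus H_{\vecw,\rho}$ for every $S \in \calF$.

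Because the decomposition $C = \bigsqcup_{S \in \calS} S$ is disjoint, we obtain
\[
 C \setminus H_{\vecw, \rho} \;=\; \bigsqcup_{S \in \calS}\bigl(S \setminus H_{\vecw, \rho}\bigr)\;=\; \bigsqcup_{S \in \calS} N_{\calB}(S),
\]
and consequently
\[
 c_{\min}\;=\;\min_{e \in C \setminus H_{\vecw, \rho}} c(e)\;=\;\min_{S \in \calS}\, c_{\calB}(S).
\]
Therefore the algorithm for computing $c_{\min}$ is: invoke \Cref{lem:canonical cuts} to obtain $\calS$, look up the pre-stored value $c_{\calB}(S)$ for each $S \in \calS$, and return their minimum. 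Since $|\calS| = \ot(1)$ and each lookup is $O(1)$, the total running time is $\ot(1)$, as required.

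The only ingredient still to justify is that the numbers $c_{\calB}(S)$ can indeed be kept up to date cheaply. Here I would maintain, alongside the priority queue $Q_S$ already used to enforce \Cref{inv:lazy update}\ref{item:priority set correctly}, an auxiliary min-heap on $N_{\calB}(S)$ whose key is the static capacity $c(e)$; its current minimum is exactly $c_{\calB}(S)$. The only event that changes $N_{\calB}(S)$ is an edge $e$ transitioning from $\bar E$ into $H_{\vecw, \rho}$ (edges only get heavier, never lighter, over the course of punishing), at which point we delete $e$ from the heap of each $S \ni e$. Because $e$ lies in at most $\textdeg(\calB) = \ot(1)$ canonical cuts and each heap deletion costs $O(\log m)$, the extra work is absorbed into the bookkeeping terms already charged in \Cref{thm:tlmi}.

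The main conceptual obstacle is verifying that the disjoint decomposition from \Cref{lem:canonical cuts} respects the partition into heavy and non-heavy edges so that the identity $C \setminus H_{\vecw, \rho} = \bigsqcup_{S \in \calS} N_{\calB}(S)$ actually holds — this follows simply from the fact that $\bar E$ is the relevant second vertex partition of $\calB$, so heaviness is decided globally per edge and not per canonical cut. With that observation in place, the claim reduces to a constant number of precomputed table lookups.
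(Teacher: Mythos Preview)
Your proof is correct and follows essentially the same approach as the paper: both argue that $c_{\min}=\min_{S\in\calS} c_{\calB}(S)$ via the decomposition of $C$ into $\ot(1)$ canonical cuts and the identity $N_{\calB}(S)=S\setminus H_{\vecw,\rho}$, and then appeal to the stored values $c_{\calB}(S)$. Your additional paragraph on maintaining $c_{\calB}(S)$ under heavy-edge removals is not part of the paper's proof of this claim but matches what the paper does later in the third step of \textsc{Punish}; it is a reasonable elaboration rather than a deviation.
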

\begin{proof}
By definition of $c_{\calB}(S)$,   $\min_{S \in \calS}
c_{\calB}(S) = \min_{S \in \calS} \min_{e \in N_{\calB}(S)}c(e) =  \min_{e \in C -
  H_{\vecw,\rho}} c(e) = c_{\min}$. The claim follows because there are $\ot(1)$ canonical
cuts in $\calS$ and we maintain the value $c_{\calB}(S)$ for every $S \in \calF$.%
\end{proof}

In the first step, for each $S \in \calS$, we set $\textref(S) \gets
\textref(S) + c_{\min}$. This takes $\ot(1)$ time because $|\calS| =
\ot(1)$ and potentially causes a violation to \Cref{inv:lazy
  update}\ref{item:total diff not far}.

In the second step, we check and fix the invariant violation as follows.  For
each $S \in \calS$, let $W_S = \{ e \in S \setminus H_{\vecw,\rho} \colon \textref(S) > Q_S.\textpriority(e)\}$ be the set of all edges
in $S \setminus H_{\vecw,\rho}$ whose priority in $Q_S$ is smaller
than the reference point $\textref(S)$. For each $e \in W_S$, we call
the procedure $\textsc{Reset}(e)$. This take times $O(r \cdot q(\calB)
\log m) = \ot(r)$ where $r$ is the number of calls to $\textsc{Reset}$
procedure. There will be new heavy edges after this step, which means
we need to update $\calB$ to correct the set $\bar E$. 

In the third step, we identify new heavy edges from the set of edges
that we called \textsc{Reset} procedure in the second step, then we
remove each edge in the set from the associated priority queues and
from the graph $\calB$ as follows. Let $U = \bigcup_{S \in \calS}W_S$. Define $U_{H} = \{ e \in U \colon \vecw(e) \geq
\rho\}$.  For each $e \in U_H$, for all $D \in N_{\calB}(e)$, remove $e$ from the priority queue
$Q_D$ and update the value $c_{\calB}(D)$ (to get a new minimum after
removing $e$). Finally, delete all nodes in $U_H$ from $\calB$. The third step
takes $O( |U| + |U_H| \textdeg(\calB) \log m + |U_H| \textdeg(\calB))
= \ot(r)$ time. The running time follows because the $|U| = r$ and $|U_H| \leq |U|$.

Finally, we output $\Delta$ where $\Delta$ is constructed as
follows. For each $e \in U$, let $\vecw'(e)$ be the weight of $e$
before \textsc{Reset}$(e)$ is invoked. If $e \not \in U_H$, then we define $\delta_e =
\vecw(e) - \vecw'(e)$. Otherwise, we define $\delta_e = \rho -
\vecw'(e)$. Then, we add $(e,\delta_e)$ to $\Delta$.

\begin{lemma}
  If \Cref{inv:lazy update} holds before calling $\textsc{Punish}([[C]])$, then
  \Cref{inv:lazy update} holds afterwards.
\end{lemma}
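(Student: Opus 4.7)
The plan is to verify each of the three conditions (a), (b), (c) of \Cref{inv:lazy update} separately, tracking how the phases of $\textsc{Punish}$ affect the bookkeeping. Recall that \textsc{Punish} (i) computes $c_{\min}$, (ii) raises $\textref(S)$ by $c_{\min}$ for each $S$ in the canonical family $\calS$ whose disjoint union is $C$, (iii) for each such $S$ invokes $\textsc{Reset}(e)$ on every $e \in W_S = \{e \in S \setminus H_{\vecw,\rho}: \textref(S) > Q_S.\textpriority(e)\}$, and (iv) removes the newly heavy edges $U_H \subseteq U = \bigcup_{S \in \calS} W_S$ from $\calB$.

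Invariant (b) is immediate. For $(S,e) \in E_{\calB}$ on which $\textsc{Reset}(e)$ was not called, neither $\textlast(S,e)$ nor $Q_S.\textpriority(e)$ is modified, so (b) survives. For each $e$ on which $\textsc{Reset}$ is invoked, the procedure itself assigns $\textlast(S,e) \gets \textref(S)$ and $Q_S.\textpriority(e) \gets \textlast(S,e) + \eta'/s(e)$, directly matching (b). Edges removed from $E_{\calB}$ fall outside the quantifier. For (c), note that $U_H \subseteq U$, so every newly heavy edge has been reset; at the moment of that reset, the equality part of (a) (which held just before) implies $\vecv(e) + \sum_S \textdiff(S,e) = \vreal(e)$, and the assignment $\vecv(e) \gets \vecv(e) + \sum_S \textdiff(S,e)$ therefore makes $\vecv(e) = \vreal(e)$. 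Since such $e$ is now free, $\vreal(e)$ receives no further update this call, so the equality persists. Edges that were already heavy before \textsc{Punish} lie outside every $N_{\calB}(S)$, so they are untouched by phases (i)--(iii).

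Invariant (a) is the main obstacle and splits into the equality $\sum_S \textdiff(S,e) = \vreal(e) - \vecv(e)$ and the upper bound $\leq \eta/s(e)$. For the equality, the crucial structural fact is that $C = \bigsqcup_{S \in \calS} S$ is a \emph{disjoint} union, so each $e \in C \setminus H_{\vecw,\rho}$ lies in exactly one $S \in \calS$. Phase (ii) therefore increases $\sum_{S \ni e} \textref(S)$ by exactly $c_{\min}$, matching the implicit increase in $\vreal(e)$ by $c_{\min}$ in the MWU update rule; for $e \notin C \cup H_{\vecw,\rho}$ neither side changes. Each $\textsc{Reset}(e)$ call then simultaneously adds $\sum_S \textdiff(S,e)$ to $\vecv(e)$ and zeroes out every individual $\textdiff(S,e)$, leaving the difference $\vreal(e) - \vecv(e)$ invariant. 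For the upper bound I will invoke the stronger inductive property $\textdiff(S,e) \leq \eta'/s(e)$ that the threshold $Q_S.\textpriority(e) = \textlast(S,e) + \eta'/s(e)$ enforces: before \textsc{Punish} it holds, phase (ii) can violate it only for the unique $S \in \calS$ containing $e$, and phase (iii) detects exactly this violation and resets $e$ (sending all its diffs to $0$). Summing over the at most $\textdeg(\calB) = \eta/\eta'$ neighbors of $e$ in $\calB$ then yields $\sum_S \textdiff(S,e) \leq \eta/s(e)$. The subtlest point, and where I expect the most care to be required, is verifying that deleting newly heavy edges in phase (iv) does not disturb the equality in (a): this works precisely because $\textsc{Reset}$ was called beforehand, so the terms being dropped from the sum are all zero, and (c) guarantees that the deleted edge already satisfies $\vreal(e) = \vecv(e)$ at removal time.
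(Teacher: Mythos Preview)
Your proof is correct and follows essentially the same route as the paper's: both exploit the disjointness of the canonical pieces of $C$ to single out the unique $S_e \in \calS$ containing a given edge, and both rely on the priority threshold $\textlast(S,e)+\eta'/s(e)$ to trigger $\textsc{Reset}$ exactly when an individual diff becomes too large. The only difference is presentational---you carry the per-pair bound $\textdiff(S,e)\le \eta'/s(e)$ explicitly through the induction and then sum, whereas the paper argues the upper bound in (a) by contradiction via an averaging step (whose conclusion ``$S^*=S_e$'' in fact tacitly uses that same per-pair bound, so your formulation is arguably the cleaner one).
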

\begin{proof}
  In the first step, we have $ \bigcup_{S \ni \calS} N_{\calB}(S) = C
  \setminus H_{\vecw,\rho}$, and thus the violation to \Cref{inv:lazy update}\ref{item:total
  diff not far} can only happen due to some edge $e \in C
\setminus H_{\vecw,\rho}$.  Because the unions are over disjoint sets, for each
edge $e \in C \setminus H_{\vecw,\rho}$, there is a unique
canonical cut $S_e \in \calS$ such that $N_{\calB}(S_e) \ni e$.

\begin{claim} 
  If there is a violation to \Cref{inv:lazy update}\ref{item:total
  diff not far} due to an edge $e \in C \setminus H_{\vecw,\rho}$, then \textsc{Reset}$(e)$ is invoked in the second step.
\end{claim}
\begin{proof}
    Since  \Cref{inv:lazy update}\ref{item:total
  diff not far} is violated due to an edge $e$, we have $\sum_{S':
  (S',e) \in E_{\calB}} \textdiff(S',e) > \eta/s(e)$. By averaging argument,
there is a canonical cut $S^*$ such that $\textdiff(S^*,e) >
\frac{\eta}{s(e)} \cdot q(\calB)$. Since $\textdiff(S_e,e)$ is the
only term in the summation that is increased, we have $S^* = S_e$. Therefore, we have %
$$ \frac{\eta}{s(e)}\cdot q(\calB) < \textdiff(S_e,e) = \textref(S_e)
- \textlast(S_e,e) \overset{\ref{item:priority set correctly}}{=} \textref(S_e) - Q_{S_e}.\textpriority(e) +
\frac{\eta'}{s(e)}.$$  Therefore, $\textref(S_e) >
Q_{S_e}.\textpriority(e)$, and so $e \in W_S$ as defined in the
second step. Hence, \textsc{Reset}$(e)$ is invoked. 
\end{proof}

Since \textsc{Reset}$(e)$ is invoked for every violation, we have that \Cref{inv:lazy update}\ref{item:total
  diff not far} is maintained. By design, the second invariant is
trivially maintained whenever \textsc{Reset} is invoked, and also the last
invariant is automatically maintained by the third step. This completes the proof.
\end{proof}

\subsection{Flush}
For each $e \in \bar E$, we call the procedure
$\textsc{Reset}(e)$. Then, we output $\vecw$ which is the same as
$\wreal$.  The total running time is $O(q(\calB) |\bar{E}|)
= \ot( m )$.  
 
\subsection{Total Running Time}

The initialization takes $\ot(m)$. Let $K$ be the number of calls to
\textsc{Punish}$([[C]])$ and let $I$ be the number of calls to
\textsc{Reset}$(e)$ before calling \textsc{Flush}(). The total running time
due to the first step is $O(K\log^2 n) = \ot(K)$, and total running
time due to the second and third steps is $\ot(I)$. It remains to
bound $I$, the total number of calls to \textsc{Reset}$(e)$. Since
each \textsc{Reset}$(e)$ increases of weight $\wreal(e)$ by a factor of
$1+\eta'$, the total number of resets is

\begin{align*}
  O(\sum_{i \in [n]}  \log_{1+O(\eta')} \left(\frac{\wreal(e)}{\vecwinit(e)}\right)) &= O\left(\frac{q(\bset)}{\eta} \cdot  \sum_{i \in [n]} 
                                                 \log ( \frac{\wreal(e)}{\vecwinit(e)})  \right) \\
  &=  \ot \left(\frac{1}{\eta} \cdot  \sum_{i \in [n]} 
  \log (\frac{\wreal(e)}{\vecwinit(e)})  \right).
\end{align*}

 \section{Omitted Proofs in \Cref{sec: rounding}} \label{sec:omitted-rounding}
 
\subsection{Proof of~\Cref{lem:spase graph}} \label{sec:spase graph}

It suffices to prove the following lemma. 

\begin{lemma} \label{lem:sparsex}
Given a feasible solution $x$ to kECSS, and a non-negative cost function $  : E \rightarrow \mathbb{R}_{\geq 0}$, and $\epsilon > 0$, there is an algorithm that runs in $\ot(m)$ time, and w.h.p., outputs another feasible solution $y$ to kECSS such that
\begin{itemize}%
\item $\sum_{e \in E} c_ey_e \leq (1+\epsilon) \sum_{e \in E} c_ex_e$.
\item $\support(y) \subseteq \support(x)$.
\item $|\support (y)| = O\left(\frac{k n \log n}{\epsilon^2}\right)$.
\end{itemize}
\end{lemma}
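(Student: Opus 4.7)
The plan is to extend Benczúr--Karger (BK) cut sparsification so that it preserves the entire knapsack-cover (KC) system that defines $k$-ECSS feasibility, using the range mapping theorem of \Cref{sec: range map} to reduce the exponentially many KC constraints to cut constraints in a small family of truncated graphs. The construction is simple: sample each edge $e$ independently with probability $p_e \in [0,1]$ to be specified, and set $y_e := x_e/p_e$ if $e$ is sampled, and $y_e := 0$ otherwise, so that $\mathbb{E}[y_e] = x_e$. Linearity then gives $\mathbb{E}\bigl[\sum c_e y_e\bigr] = \sum c_e x_e$ and $\mathbb{E}\bigl[\sum_{e \in C \setminus F} y_e\bigr] = \sum_{e \in C\setminus F} x_e \ge k - |F|$ for every KC pair $(C,F)$, and a standard Chernoff--Hoeffding bound will concentrate both quantities within $(1\pm\eps)$ of their means.

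The main difficulty is a union bound over the exponentially many $(C, F)$ pairs. I would handle this by invoking \Cref{thm:approx mapping}: preserving $\val_y(C) \approx \val_x(C)$ for every cut $C$ is equivalent, up to a factor $(1+\eps)$, to preserving the min-cut values of the $O(\log n/\eps)$ truncated graphs $(G, x_{\rho_i})$ at geometrically spaced thresholds $\rho_i = (1+\eps)^i$. Within each truncated graph the required union bound becomes the classical BK one, controlled by Karger's bound of $n^{O(\alpha)}$ on the number of $\alpha$-approximate minimum cuts; enlarging the sampling constant by an $O(\log n / \eps)$ factor absorbs the additional union bound across thresholds and lets the exponential number of $(C,F)$ pairs be replaced by a polynomial family of ``worst-case'' pairs where $F$ is the set of $x$-heaviest edges of $C$.

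For the sampling probabilities I would use BK-style strength-based sampling on the fractional graph $(G, x)$: set $p_e := \min\bigl(1,\; C k \log n / (\eps^2 \lambda_e)\bigr)$, where $\lambda_e$ is the $x$-weighted edge strength of $e$ (the largest fractional min-cut of an $x$-weighted subgraph containing $e$) and $C$ is an absolute constant. Since each truncated weighting $x_{\rho_i}$ is dominated pointwise by $x$, one sampling distribution should suffice to sparsify every $(G, x_{\rho_i})$ simultaneously. The expected support size is then $\sum_e p_e \le (C k \log n/\eps^2)\sum_e 1/\lambda_e$, and a weighted Nash--Williams inequality, combined with a short preprocessing step that discards negligibly-weighted edges (those with $x_e \le \eps/(nk)$, whose total contribution to any KC constraint is $o(\eps)$), yields $\sum_e 1/\lambda_e = O(n)$ and hence $|\supp(y)| = \ot(nk/\eps^2)$.

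Finally, if $y$ falls short of feasibility by a $(1-\eps)$ factor I would scale $y$ by $1/(1-\eps)$ to restore exact feasibility at a $(1+O(\eps))$ loss in cost, and a Markov rejection step upgrades the support-size bound to high probability. The construction runs in $\ot(m/\eps^2)$ since edge strengths and the sampling itself are each computable in near-linear time. The hardest step will be verifying that a single strength-based distribution sparsifies every truncated graph $(G, x_{\rho_i})$ jointly without an extra $\log n$ factor in the support: this needs a careful Nash--Williams--style accounting that exploits the monotonicity of the truncation map $\rho \mapsto x_\rho$ to bound $x_{\rho_i}$-strengths uniformly in terms of $x$-strengths, which is what ultimately distinguishes this sparsifier from a naive per-threshold application of BK.
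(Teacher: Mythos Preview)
Your plan has a structural gap: sampling and truncation do not commute, so the range-mapping reduction cannot be transferred from $x$ to the sampled vector $y$. Concretely, a Bencz\'ur--Karger sparsifier of $(G,x_\rho)$ assigns a sampled edge $e$ the weight $x_\rho(e)/p_e=\min(x_e,\rho)/p_e$, whereas the object governing $\val_y(C)$ via \Cref{thm:approx mapping} is $y_\rho(e)=\min(x_e/p_e,\rho)$; these disagree whenever $p_e<1$ and $x_e<\rho$, and in particular an edge that is light under $x$ can become $\rho$-heavy under $y$, completely changing which edges are ``free'' in the cut. So even perfect cut preservation in every $(G,x_{\rho_i})$ tells you nothing about $y_{\rho_i}(C)$, and hence nothing about $\val_y(C)$. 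The range-mapping theorem reduces the \emph{computation} of $\min_C\val_w(C)$ for a single fixed $w$ to min-cut in $(G,w_\rho)$; it is not a constraint-preserving map between two different weight vectors $x$ and $y$, which is what a sparsification argument needs.

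The paper instead extends Bencz\'ur--Karger directly to constrained cuts. For any fixed $F$ with $|F|\le k-1$, the set $C\setminus F$ is an honest cut in the graph $G-F$, so one union-bounds over the $\binom{m}{k}$ subgraphs obtained by deleting up to $k-1$ edges and runs BK's connected-components counting argument in each: removing $k$ edges raises the component count by at most $k$, so BK's estimate $\mathbb{E}[2^R]=O(n^2 p^D)$ becomes $O\bigl(\binom{m}{k}2^{k}\,n^2 p^{D}\bigr)$, which still vanishes once $D$ is scaled from $\Theta(\log n)$ to $\Theta(k\log n)$. This is exactly where the factor $k$ in the support bound $O(kn\log n/\eps^2)$ originates. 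A smaller point: to get the cost bound with high probability the paper also mixes in a cost-proportional term $q_e=\min\{1,\delta c_e x_e/(\eps^2\,c^Tx)\}$ and samples with $r_e=\max(p_e,q_e)$, so that each summand $c_e x_e/r_e$ is bounded and Chernoff applies; your strength-only probabilities do not control these summands, so the Chernoff step you invoke for cost does not go through as written.
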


We devote the rest of this subsection to proving~\Cref{lem:sparsex}.

Let $x$ be a near-optimal $k$ECSS fractional solution obtained by Theorem~\ref{thm:intro:fractional $k$ECSS}. 
Compute the solution $y$ using Lemma~\ref{lem:sparsex}. 
Create a graph $G'$ by keeping only edges in the support of $y$.

Before proving the lemma, we first develop an extension to the sparsification theorem from the paper of Benczur and Karger.

We follow the definitions by \cite{BenczurK15, ChekuriQuanrud18}.
\begin{definition} [Edge stength]  Let $G = (V,E,w)$ be a weighted undirected graph.

\begin{itemize}%
\item $G$ is $k$-\textit{connected} if every cut in $G$ has weight at least $k$.
\item A $k$-\textit{strong component} is a maximal non-empty $k$-connected vertex-induced subgraph of $G$.
\item The \textit{strength} of an edge $e$, denoted as $\kappa_e$ is the maximum $k$ such that both endpoints of $e$  belong to some  $k$-strong component.
\end{itemize}
\end{definition}

\begin{lemma}[\cite{BenczurK15}]
$$ \sum_{e \in E} \frac{w_e}{\kappa_e} \leq n - 1 $$
\end{lemma}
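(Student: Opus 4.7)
The plan is to prove the bound by induction on $n = |V(G)|$, exploiting the laminar structure of maximal $k$-edge-connected induced subgraphs as $k$ varies. A first structural observation is that for every $k$, the maximal $k$-strong components partition the non-isolated vertices of $G$ (since the union of two $k$-edge-connected induced subgraphs sharing a vertex is again $k$-edge-connected), and these partitions refine as $k$ grows, yielding a laminar hierarchy of connectivity.

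I would then branch on the value of $k^\star = \min_e \kappa_e$. In the first case the maximal $k^\star$-strong components $H_1, \ldots, H_t$ form a nontrivial partition of $V(G)$. Every edge lies inside some $H_i$, because an edge between two such components would have strength strictly less than $k^\star$, contradicting the minimality of $k^\star$; moreover, $\kappa_e$ computed in $G$ coincides with $\kappa_e$ computed in $G[V(H_i)]$ for $e \in E(H_i)$, since any higher-connectivity component containing $e$ is forced into $H_i$ by maximality. Applying the induction hypothesis to each induced subgraph $G[V(H_i)]$ and summing yields $\sum_i (|V(H_i)| - 1) \le n - 1$.

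In the remaining case $G$ is its own unique maximal $k^\star$-strong component, so $G$ is globally $k^\star$-edge-connected. I would then consider the maximal $(k^\star+1)$-strong components $H_1, \ldots, H_t$, which properly sub-partition $V(G)$ (otherwise $G$ itself would be $(k^\star+1)$-edge-connected, contradicting $k^\star = \min_e \kappa_e$), and contract each $H_i$ to a single vertex to form a multigraph $G'$. By construction, the edges of $G'$ are exactly the strength-$k^\star$ edges of $G$, and $|V(G')| = t + \bigl(n - \sum_i |V(H_i)|\bigr)$. The induction hypothesis applied inside each $H_i$ contributes $\sum_i (|V(H_i)| - 1)$; it remains to show $w(G') \le k^\star \bigl(|V(G')| - 1\bigr)$, after which adding the two contributions telescopes to $n - 1$.

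The main obstacle is this last inequality. I would argue that $G'$ contains no $(k^\star+1)$-strong sub-component (any such would have been absorbed into one of the $H_i$ by maximality), so the fractional arboricity $\max_{H \subseteq G'} w(H)/(|V(H)|-1)$ is at most $k^\star$. Taking $H = G'$ then gives $w(G') \le k^\star(|V(G')|-1)$ directly. The rigorous justification of the arboricity bound is the delicate step and is where I would invoke Nash-Williams/Tutte's theorem on fractional edge-disjoint spanning tree packings (equivalently, LP-duality between tree packing and arboricity): the absence of high-strength subgraphs is exactly what forbids a dense subgraph in $G'$, and this translates to the required weight bound. Once this structural fact is in place, the induction closes cleanly.
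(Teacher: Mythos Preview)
The paper does not prove this lemma; it is quoted from Bencz\'ur--Karger without argument, so there is no in-paper proof to compare against. Your overall strategy---induction on $|V|$, using the laminar hierarchy of maximal $k$-strong components, contracting the next level, and accounting separately for the inter-component edges---is essentially the standard Bencz\'ur--Karger argument, and the telescoping arithmetic you describe is correct. (One cosmetic fix: for real weights replace ``$(k^\star+1)$-strong'' by ``$k'$-strong for the smallest strength value $k'>k^\star$''.)

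There is, however, a real gap at the step you yourself flag as delicate: the bound $w(G')\le k^\star\bigl(|V(G')|-1\bigr)$. You propose to obtain it from Nash--Williams/Tutte via ``absence of high-strength subgraphs forbids a dense subgraph.'' Neither theorem delivers that implication. Nash--Williams equates fractional arboricity with $\max_{H} w(H)/(|V(H)|-1)$, which is precisely the quantity you are trying to bound, so invoking it is circular; Tutte's tree-packing theorem converts \emph{lower} bounds on connectivity into \emph{lower} bounds on edge weight, the wrong direction here. What actually closes the step is a short separate induction: since no induced subgraph of $G'$ has minimum cut exceeding $k^\star$, in particular $G'$ itself has a cut $(A,B)$ of weight at most $k^\star$; the hypothesis is inherited by $G'[A]$ and $G'[B]$ (strength can only drop when passing to an induced subgraph), and recursing on both sides yields $w(G')\le k^\star(|A|-1)+k^\star(|B|-1)+k^\star=k^\star\bigl(|V(G')|-1\bigr)$. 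That is the missing ingredient, and it does not use spanning-tree packing at all.
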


\begin{lemma}[\cite{BenczurK15}] \label{lem:approx-strength}
In $\ot(m)$ time, we can compute approximate stength $\tilde \kappa_e$ for each edge $e \in E$ such that $\tilde \kappa_e \leq \kappa_e$ and   $\sum_{e \in E} \frac{w_e}{\tilde \kappa_e} = O(n) $
\end{lemma}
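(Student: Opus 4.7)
The plan is to compute each $\tilde\kappa_e$ via geometric bucketing on strength thresholds, using a Nagamochi--Ibaraki-style sparse certificate to identify approximately strong components quickly. For each level $i = 0, 1, \dots, \lceil \log_2 W \rceil$, I would compute the partition $\mathcal{P}_i$ of $V$ into maximal subsets $U$ such that the induced subgraph $G[U]$ is $2^i$-edge-connected (equivalently, $U$ lies inside some $2^i$-strong component). Then I define $\tilde\kappa_e := 2^{i^*}$, where $i^*$ is the largest index for which the two endpoints of $e$ share a part of $\mathcal{P}_{i^*}$. Since the partitions are nested (a $2^{i+1}$-strong component is in particular $2^i$-strong), this is well defined and can be assigned on a single downward sweep through the levels.

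For the lower-bound property $\tilde\kappa_e \leq \kappa_e$, observe that by construction, whenever both endpoints of $e$ lie in the same part $U$ of $\mathcal{P}_i$, the graph $G[U]$ is $2^i$-edge-connected, so both endpoints inhabit a common $2^i$-strong component and thus $\kappa_e \geq 2^i$. For the sum bound, the key observation is the matching \emph{upper} direction: if $\kappa_e \in [2^j, 2^{j+1})$ then the endpoints share a $2^j$-strong component, hence share a part of $\mathcal{P}_j$, giving $\tilde\kappa_e \geq 2^j \geq \kappa_e/2$. Combining with the exact bound $\sum_e w_e/\kappa_e \leq n-1$ from the previous lemma yields $\sum_e w_e/\tilde\kappa_e \leq 2(n-1) = O(n)$.

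For the running time, the core subroutine is: given threshold $k$, refine the current partition into its maximal $k$-edge-connected pieces. I would implement this by first computing a Nagamochi--Ibaraki sparse certificate in $O(m)$ time, which reduces each component to $O(nk)$ edges while preserving all cuts of weight $< k$, and then running an $\ot$-time recursive minimum-cut procedure (Karger) to detect any cut of weight $< k$ and split along it. Processing thresholds from largest to smallest and recursing only within the parts produced at the previous level, the total work across all levels telescopes to $\ot(m)$ because edge weights and part sizes decrease geometrically (and one may first scale so that $W = \poly(n)$ by absorbing edges with tiny weight). The main obstacle is making this per-level refinement truly near-linear in the \emph{current} subproblem size rather than in $m$; this is handled by charging the sparse-certificate and min-cut work to the edges whose estimated strength is finalized at that level, so that each edge is charged $\polylog(n)$ times overall.
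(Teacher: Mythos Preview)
The paper does not actually give a proof of this lemma; it simply cites it from Bencz\'ur--Karger~\cite{BenczurK15}. So there is no ``paper's own proof'' to compare against, and the right benchmark is the original BK argument.

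Your correctness arguments are fine. If you could really compute the exact $2^i$-strong-component partitions $\mathcal P_i$ at every geometric level, then $\tilde\kappa_e$ would be a factor-$2$ underestimate of $\kappa_e$, and the sum bound would follow immediately from the previous lemma. That is in fact \emph{stronger} than what BK prove: in their algorithm the estimates $\tilde\kappa_e$ are only guaranteed to satisfy $\tilde\kappa_e\le\kappa_e$ and the aggregate bound $\sum_e w_e/\tilde\kappa_e=O(n)$; individual ratios $\kappa_e/\tilde\kappa_e$ can be arbitrarily large. They achieve the sum bound by a different counting argument tied to the sparse-certificate structure, not by pointwise approximation.

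The gap in your proposal is the running time. Two issues:
\begin{itemize}
\item The Nagamochi--Ibaraki certificate bound ``$O(nk)$ edges'' is for unit weights. In the weighted setting the certificate has total \emph{weight} $\le k(n-1)$, which does not control the number of edges, so your ``reduce to $O(nk)$ edges then run Karger'' step is not justified as stated.
\item Even granting a good certificate, your telescoping/charging argument is too loose. At level $i$ you must examine all edges inside each current part to refine it, but only edges whose strength lies in $[2^i,2^{i+1})$ get finalized there; the rest are re-examined at the next level. This naturally gives $O(m\log W)$ work, and the recursive min-cut computations to find the exact $2^{i+1}$-connected pieces add further cost that you have not bounded. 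The sentence ``part sizes decrease geometrically'' is false in the direction you are processing (parts coarsen as the threshold drops and refine as it rises), so the telescoping claim needs a real argument.
\end{itemize}
Bencz\'ur--Karger avoid exactly this difficulty by \emph{not} computing the exact partitions: they run the sparse-certificate procedure at geometrically increasing thresholds, declare edges falling outside the certificate to have small approximate strength, contract what remains, and repeat. The sum bound then comes from the certificate's $k(n-1)$ weight cap rather than from any per-edge approximation guarantee. If you want a self-contained proof, that is the mechanism you should reproduce.
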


Given a cut $C$ and a subset $S\subseteq C$ of 
its edges, where $|S|\leq k-1$, we say $C\S$ 
is a \emph{constrained cut}. The next theorem states that all constrained cuts would have 
their weights closed to their original weights 
after the sampling.

\begin{theorem} [Extension to Compression Theorem \cite{BenczurK15}] \label{thm:compression-thm}
Given $G = (V,E,w)$, let $p : E \rightarrow [0,1]$ be a probability function over edges of $G$. We construct a random weighted graph  $H = (V,E_H,w')$ as follows.  For each edge $e \in E$, we independently add edge $e$ into $E_H$ with weight $w'_e = w_e/p_e $, with probability $p_e$.  For $\delta \geq \Omega( k d \log n)$, if $p_e \geq \min\{1, \delta \frac{w_e}{\kappa_e} \}$ for all $e \in E$,
then with high probability $\left(\text{over } 1 - \frac{1}{n^d}\right)$,  every constrained cut in $H$ has weight between $(1-\epsilon)$ and $(1+\epsilon)$ times its value in $G$.
\end{theorem}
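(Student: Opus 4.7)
The plan is to adapt Benczur and Karger's original proof of their compression theorem, which corresponds to the special case $k = 1$ (no free edges). The extra factor of $k$ in the required sampling rate $\delta = \Omega(k d \log n)$ reflects the need for a union bound over the richer family of constrained cuts $T = C \setminus S$ with $|S| \leq k-1$, rather than just cuts.

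For a fixed constrained cut $T$, the sampled weight $w_H(T) = \sum_{e \in T} X_e w_e / p_e$ is a sum of independent, bounded random variables with mean $w_G(T)$. Since $p_e \geq \delta w_e / \kappa_e$, every summand is at most $\max_{e \in T} \kappa_e / \delta$. A Chernoff bound then gives
\[
\Pr\bigl[|w_H(T) - w_G(T)| > \epsilon\, w_G(T)\bigr] \;\leq\; 2\exp\bigl(-\Omega(\epsilon^2 \delta\, w_G(T) / \max_{e \in T} \kappa_e)\bigr).
\]
Following Benczur--Karger, edges are partitioned into dyadic strength classes and the union bound is carried out within each maximal $2^\ell$-strong component, inside which every cut has weight at least $2^\ell$; this controls $\max_{e \in T} \kappa_e$ by the level and keeps the tail probability an exponential in $\delta$ up to polylogarithmic factors.

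The main obstacle is the counting step: one must bound the number of constrained cuts of weight at most $\alpha \cdot 2^\ell$ inside each strong component. A clean way is to observe that a constrained cut $(C, S)$ coincides with an ordinary cut in the subgraph $G \setminus S$. Enumerating $S$ over the $\binom{m}{k-1} = n^{O(k)}$ possibilities and applying Karger's $n^{O(\alpha)}$ cut-counting theorem inside each $G \setminus S$ yields a bound of $n^{O(k+\alpha)}$ on the number of such constrained cuts. The extra $n^{O(k)}$ factor is precisely what forces $\delta$ to scale linearly in $k$. Alternatively, one could try to generalize Karger's random contraction argument directly, treating edges of $S$ as ``tolerated'' contractions, but the enumeration route avoids subtleties about which contractions are admissible.

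Combining the Chernoff estimate with the counting bound, and summing geometrically over dyadic scales of $w_G(T)$ and over the $O(\log n)$ strength levels, yields a total failure probability of at most $n^{-d}$ provided $\delta = \Omega(k d \log n)$, with the $\epsilon$ dependence absorbed into the hidden constant. A subtle point in the execution is that $w_G(T)$ can be much smaller than $w_G(C)$ when $S$ removes the heaviest edges of $C$, so the per-cut Chernoff bound must degrade gracefully; handling this cleanly requires stratifying constrained cuts by both $|S|$ and by $w_G(T)$ within each strength level and checking that the number of cuts at each level and scale is still dominated by the $n^{O(k+\alpha)}$ bound above.
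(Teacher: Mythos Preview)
Your proposal is correct and follows essentially the same strategy as the paper: enumerate the $\binom{m}{k-1} = n^{O(k)}$ choices of the free-edge set $S$, view each constrained cut as an ordinary cut in $G \setminus S$, and rerun the Benczur--Karger machinery there, with the extra $n^{O(k)}$ union-bound factor absorbed by the factor-$k$ increase in $\delta$. The only cosmetic difference is that the paper phrases the counting step via Benczur--Karger's ``expected number of empty cuts'' coupling (removing $|S|\le k$ edges bumps the component count $R$ by at most $k$, giving a $\binom{m}{k}2^{k}$ overhead on $\mathbb{E}[2^{R}]$), whereas you invoke Karger's $n^{O(\alpha)}$ near-minimum-cut counting bound directly; these are two standard, equivalent presentations of the same argument.
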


\begin{proof} 

This theorem follows almost closely the proof of Benczur-Karger.
We sketch here the part where we need a minor modification.

The proof of Benczur-Karger roughly has two components. The first reduces the analysis for general case to the ``weighted sum'' of the ``uniform'' cases where the minimum cut is large, i.e. edge weights are at most $1$ and minimum cut at least $D=  \Omega\left(kd\log n\right)$.
This first component works exactly the same in our case.

Now in each uniform instance which is the second component of Benczur-Karger,  the probabilistic arguments can be made in the following way: For each cut $C$, since edges are sampled independently, we can use Chernoff bound to upper bound the probability that each cut $C$ deviates more than $(1+\epsilon)$ factor (after sampling).
Let $\mu_C$ denote this probability.
Therefore, the bad event that there is a cut deviating too much is upper bounded by $\sum_{C} \mu_C$.

Benczur-Karger analyzes this probability by constructing an auxiliary experiment: Imagine each edge is deleted with probability $p$, then the sum is exactly the expected number of ``empty cuts'' in the resulting graph.
They upper bound this by using the term ${\mathbb E}[2^R]$ where $R$ is the (random) number of connected components in the resulting graph.
They show (using a coupling argument) that ${\mathbb E}[2^R] = O\left(n^2 p^D\right)$, which vanishes whenever $D = \Omega\left(d\log n\right)$.
Here is where we need to slightly change the proof. The bad even that we need to bound is not just all the cuts $\left(\displaystyle \sum_{C} \mu_C\right)$, but also all the constraint cuts. Let $\mu_{C \setminus S}$ be the probability of the bad event that the constraint cut $C \setminus S$ is deviating too much. We want to bound
$$
    \sum_{C} \sum_{S \subseteq C, |S| \leq k-1} \mu_{C \setminus S}.
$$

We will create, by enumerating, $m \choose k$ different graphs $H$ so that each $H$ has at most $k$ edges removed from $G$. Note that all constrained cuts are now defined in these graphs $H$. In the original sampling, if an edge $G$ 
is removed, then we remove it similarly in all graphs $H$ (ignoring it is present in $H$ or not).

Given that there are $R$ connected components in $H$, there are $O(2^R)$ empty cuts.
We consider $m \choose k$ different graphs derived from $H$ by exhaustively remove
a subset $S \subseteq E$ of $k$ edges. Some edges in $S$ might already be removed in $H$, so some configurations will be identical.
We now count the empty cuts in these $m \choose k$ graphs. 
To upper bound $\sum_{C} \sum_{S \subseteq C, |S| \leq k-1} \mu_{C \setminus S}$, we just need to compute the total number of ``empty cuts'' in all these graphs $H$. 

In each $H$, there are at most $R + k$ connected components. Hence, each graph
has at most $2^{R + k}$ empty cuts. Sum up this number among all the graphs,
we get that $$
    \sum_{C} \sum_{S \subseteq C, |S| \leq k-1} \mu_{C \setminus S}
    \leq \mathbb E \left[{m \choose k} 2^{R+k}\right].
$$

Since ${\mathbb E}[2^R] = O(n^2 p^D)$, we get that
$$
\sum_{C} \sum_{S \subseteq C, |S| \leq k} \mu_{C \setminus S} = O\left(
{m \choose k} 2^k n^2 p^D\right) = O\left( {\left ( \frac{2em}{k} \right )}^k n^2 p^D\right),
$$

 which again vanishes if $D$ is large enough (at least $\Omega(k d \log n)$).

\end{proof}

We are now ready to prove \Cref{lem:sparsex}. In fact the same proof in \cite{ChekuriQuanrud18} can be applied once we have \Cref{thm:compression-thm}. 

\begin{proof}[Proof of \Cref{lem:sparsex}]
We first use \Cref{lem:approx-strength} to compute approximate edge strength $\tilde \kappa_e$ for each edge $e \in E$ so that $\tilde \kappa_e \leq \kappa$ and  $\sum_{e \in E} \frac{w_e}{\tilde \kappa_e} = O(n) $ in $\ot (m)$ time.
Let $\delta = \Theta(k d\log n)$ for some large constant $d$.  Let $\text{cost}(x) = \sum_{e \in E} c_e x_e\ $.  For each edge $e \in E$ let $p_e = \min \{1, \frac{ \delta x_e}{\epsilon^2 \tilde \kappa_e} \}$, and $q_e = \min \{1, \frac{\delta c_e x_e}{\epsilon^2 \text{cost}(x) } \} $, and define $r_e = \max (p_e,q_e)$.

We will focus on $x$ from the perspective of kECSS LP with knapsack constraints.

We construct a random graph $H = (V, E', x')$ using $r$ as a probability function over edges of $G$ and we $x$ as weight function of the graph as follows. For each edge $e \in E$, we independently sample edge $e$ into $E'$ with weight $x'_e = x_e/r_e $ with probability $r_e$.    Since
$$
    r_e =  \max (p_e,q_e) \geq p_e =  \min \{1, \frac{ \delta x_e}{\epsilon^2 \tilde \kappa_e} \} \geq  \min\{1, \delta \frac{x_e}{\kappa_e} \}
$$
for sufficiently large constant $d$, by \Cref{thm:compression-thm},
we get w.h.p.,
$$
    \forall C \in \mathcal{C} \forall S \in C, |S| \leq k-1, \sum_{e\in C \setminus S} x'_e \geq (1-\epsilon) \sum_{e\in C \setminus S} x_e \geq (1-\epsilon)(k - |S|).
$$

Observe that $$ \sum_{e \in E} r_e \leq \sum_{e \in E} p_e + \sum_{e \in E} q_e = O( \frac{n\delta}{\epsilon^2}   + \frac{\delta}{\epsilon^2}) = O(\frac{n\delta}{\epsilon^2} )$$

By Chernoff bound, we have  $$ P(\sum_{e\in E} c_e x'_e \geq (1+\epsilon) \sum_{e\in E} c_e x_e ) \leq \exp(-\Omega(\delta)) $$
and,
$$ P( |E'| \geq (1+\epsilon) O( \frac{n\delta}{\epsilon^2})) \leq \exp(-\delta/\epsilon^2) $$
By the union bound, we have the followings  w.h.p.
\begin{align*}
\sum_{e\in C \setminus S} x'_e \geq   (1-\epsilon) (k - |S|), \quad  \forall C \in \mathcal{C}, \forall S \in C, |S| \leq k-1, \\
  |\support(x')| \leq O( \frac{n\delta}{\epsilon^2})\quad \text{ and } 
\sum_{e\in E} c_e x'_e \leq (1+\epsilon)  \sum_{e\in E} c_e x_e
\end{align*}

Therefore, $y' = (1+\epsilon)x'$ is a feasible solution to kECSS. Also,  $ |\support(y')| \leq O( \frac{n\delta}{\epsilon^2})$, and $\sum_{e\in E} c_e y'_e \leq (1+\epsilon)^2 \sum_{e\in E} c_e x_e$. 
Finally, we can get $(1+\epsilon') \sum_{e\in E} c_e x_e$ by a proper  scaling factor for $\epsilon$.

\end{proof}

\subsection{Proof of \Cref{thm: KV}} \label{sec: proof KV}

For the first part of the theorem, 
let $x$ denote the optimal solution in the relaxed LP of $k$ECSS of graph $H$. 
We create a fractional solution $z$ in $D[H]$
as follows: for every edge $e \in E$ in $H$, 
if $e_1$ and $e_2$ are the two opposite directed edges in $D[H]$ derived from $e$, we set 
$z_{e_1}=z_{e_2}=x_e$. It is clear that 
$c(z) = 2c(x)$. We just need to argue that 
$z$ is feasible in the relaxed $k$-arboresences 
problem. Consider a cut $C \in \mathcal{C}$ 
(where $r \in C$ and $C \neq V$). As $x(C) \geq k$, $\sum_{e \in \delta^{+}(C)} z_e \geq k$. 
Furthermore, by Lemma~\ref{lem:KC for box}, $x$ satisfies 
the boxing constraint, that is, 
$0 \leq x_e \leq 1$ for all edges $e \in H$, 
implying that $0 \leq z_e \leq 1$ for all directed edges $e \in D[H]$. This shows that $z$ is feasible and the first part of the theorem is proved. 

For the second part, consider a feasible solution for $k$-arborescences in $D[H]$. If any of the two opposite directed edges is part of the $k$-arborescences, we include its corresponding 
undirected edge in $H$ as part of our induced solution. 
Clearly, the cost of the induced solution cannot be higher and it is a feasible $k$ECSS solution, since it guarantees that the cut value is at least $k$ for all cuts.

\subsection{Polynomially bounded costs}
\label{subsec:bound_cmax}

Since Gabow's algorithm for arborescences has the running time depending on $c_{\max}$, the maximum cost of the edges, we discuss here how to ensure that $c_{\max}$ is polynomially bounded.

Let $x$ be the LP solution obtained from our LP solver. Denote by $C^* = \sum_{e \in E} c_e x_e$, so we have that $C^*$ is between $\opt/2$ and $\opt$, where $\opt$ is the optimal integral value.

First, whenever we see an edge $e \in E$ with $c_e > 2 C^*$, we remove such an edge from the graph $G$.
For each remaining edge $e \in E$, we round the capacity $c_e$ up to the next multiple of $M = \lceil \epsilon C^*/|E| \rceil$.
So, after this rounding up, we have the capacities in $\{M, 2M, \ldots, C^*\}$, and we can then scale them down by a factor of $M$ so that the resulting capacities $c'_e$ are between $1$ and $O(|E|/\epsilon)$.
It is an easy exercise to verify that any $\alpha$-approximation algorithm for $(G,c')$ can be turned into an $\alpha(1+\epsilon)$-approximation algorithm for $(G,c)$.

 \section{Omitted Proofs}

\subsection{Polynomially Bounded Cost in Proof of \Cref{thm: warmup}} \label{sec:polynomially bounded cost}
Let us assume that the costs $c_e$ are integers (but they can be exponentially large in values).
Karger's sampling~\cite{Karger00mincut} gives a near-linear time algorithm to create a skeleton graph $H$ so that all  cuts in $H$ are approximately preserved, and the minimum cut value is $O(\log |E|)$.
It only requires an easy modification of Karger's arguments to show that we can create a skeleton $H$ such that all $k$-free minimum cuts are approximately preserved, and that the value of the minimum $k$-free cuts is $\Theta(k \log |E|)$.
We will run our static algorithm in graph $H$ instead.
As outlined in Karger's paper~\cite{DBLP:journals/mor/Karger99}, the assumption that we do not know the value of the optimal can be resolved by enumerating them in the geometric scales, and the sampling will guarantee that the running time would not blow up by more than a constant factor.

\subsection{Proof of Theorem~\ref{thm:mwu}}  \label{sec:mwu proof}

The proof is done via duality. The primal and dual solutions will be maintained and updated, until the point where one can argue that their values converge to each other; this implies that both the primal and dual solutions are approximately optimal.  
Recall the primal LP is the covering LP: 
\[\min \{c^T x: A x \geq 1, x \geq 0\} \]
The dual LP is the following packing LP: 
\[\max \{ y^T \ones: y^T A \leq c^T, y \geq 0  \} \]  
For the primal LP, we maintain vectors $\vecw^{(t)} \in {\mathbb R}^n$, where $\vecw^{(0)}_i = 1/c_i$ for each $i \in [n]$. The tentative primal solution on day $t$ is $\bar{\vecw}^{(t)} = \vecw^{(t)}/\minrow(A, \vecw^{(t)})$.
For the dual packing LP, we maintain vectors $\vecf^{(t)} \in {\mathbb R}^m$ where $\vecf^{(0)} = \mathbf{0}$. 
The tentative dual solution on day $t$ is defined as  $\bar{\vecf}^{(t)} = f^{(t)}/\cg(\vecf^{(t)})$, where $\cg(\vecf)$ is the maximum ratio of violated constraints by $\vecf$, that is, \begin{align} \label{def:congestion} \cg(\vecf) = \max_{i \in [n]} \frac{(\vecf^T A)_i}{c_i}.\end{align} Notice that 
$\bar{\vecf}^{(t)}$ is a feasible dual solution on each day.  

Now we explain the update rules on each day. Let $j(t)$ be the row that achieves $A_{j(t)} \vecw^{(t-1)} \leq (1+\epsilon) \minrow(A,\vecw^{(t-1)})$.  
\begin{itemize} 
\item Update $\vecf^{(t)}_{j(t)} \leftarrow \vecf^{(t-1)}_{j(t)} + \delta(t)$ where $\delta(t) = \min_{i \in [n]} \frac{c_i}{A_{j(t), i}}$ is the ``increment'' on day $t$. 

\item Update $\vecw^{(t)}_i \leftarrow \vecw^{(t-1)}_i \exp\left(\epsilon \cdot \frac{\delta(t) A_{j(t),i}}{c_i} \right)$ for each $i \in [n]$.  
\end{itemize}

Denote the primal value at time $t$ by $P(t)= c^T \bar{\vecw}^{(t)}$ and the dual by $D(t) =  ||\bar{\vecf}^{(t)}||_{1}$; so we have $P(t) \geq D(t)$ for all $t$. 
\begin{theorem}
\label{thm:potential}  
Let $t^*$ be the day $t$ for which $P(t)$ is minimized and $N = \Omega(\frac{n}{\epsilon^2} \ln n)$ be the total number of days. 
Then we have that $P(t^*) \leq (1+O(\epsilon)) D(N)$. In particular, $\bar{\vecw}^{(t^*)}$ and $\bar{\vecf}^{(N)}$ are near-optimal primal and dual solutions.   
\end{theorem}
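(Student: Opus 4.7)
}
The strategy is the standard potential-function argument for MWU, instantiated with the specific update rules described. Define the potential
\[ \Phi(t) \;=\; c^T \vecw^{(t)} \;=\; \sum_{i=1}^n c_i \vecw_i^{(t)}, \]
which starts at $\Phi(0)=n$ since $\vecw_i^{(0)}=1/c_i$. The plan is to sandwich $\Phi(N)$ between an upper bound involving $P(t^*)$ and a lower bound involving $D(N)$, and then to exploit the choice $N=\Omega(n\ln n/\varepsilon^2)$ to absorb lower-order terms.

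For the upper bound, I will compute the one-step multiplicative increase of $\Phi$. Using $\exp(\varepsilon z)\le 1+\varepsilon z(1+\varepsilon)$ for $z\in[0,1]$ together with the fact that $\delta(t)A_{j(t),i}/c_i\le 1$ by the very definition of $\delta(t)$, one gets
\[
\Phi(t)\;\le\;\Phi(t-1)\;+\;\varepsilon(1+\varepsilon)\,\delta(t)\,A_{j(t)}\vecw^{(t-1)}.
\]
The oracle guarantee $A_{j(t)}\vecw^{(t-1)}\le (1+\varepsilon)\,\minrow(A,\vecw^{(t-1)})$ combined with the identity $\minrow(A,\vecw^{(t-1)})=\Phi(t-1)/P(t-1)$ yields $\Phi(t)\le \Phi(t-1)\bigl(1+\varepsilon(1+\varepsilon)^2\delta(t)/P(t-1)\bigr)$. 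Using $P(t-1)\ge P(t^*)$ and $1+x\le e^x$, telescoping over $t=1,\dots,N$ produces
\[
\Phi(N)\;\le\;n\cdot\exp\!\left(\frac{\varepsilon(1+\varepsilon)^2}{P(t^*)}\,\sum_{t=1}^N\delta(t)\right).
\]
The key observation to close the loop is that $\sum_t \delta(t)=\|\vecf^{(N)}\|_1=D(N)\cdot\cg(\vecf^{(N)})$.

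For the lower bound, the explicit form of the update gives $c_i\vecw_i^{(N)}=\exp\bigl(\varepsilon (\vecf^{(N)T}A)_i/c_i\bigr)$, so taking $i^*$ to be the maximizer in the definition \eqref{def:congestion} of congestion yields $\Phi(N)\ge \exp(\varepsilon\,\cg(\vecf^{(N)}))$. Combining the two bounds and taking logarithms, then dividing by $\varepsilon\,\cg(\vecf^{(N)})$, gives
\[
1 \;\le\; \frac{\ln n}{\varepsilon\,\cg(\vecf^{(N)})} \;+\; (1+\varepsilon)^2\,\frac{D(N)}{P(t^*)},
\]
which rearranges to $P(t^*)\le (1+O(\varepsilon))\,D(N)$ provided that $\cg(\vecf^{(N)})\ge \ln n/\varepsilon^2$.

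The last item is where the choice $N=\Omega(n\ln n/\varepsilon^2)$ enters. Since in each step $\delta(t)A_{j(t),i^*(t)}/c_{i^*(t)}=1$ for the coordinate $i^*(t)$ achieving the inner minimum, the aggregate $\sum_i (\vecf^{(N)T}A)_i/c_i$ is at least $N$; averaging over the $n$ coordinates gives $\cg(\vecf^{(N)})\ge N/n$, which exceeds $\ln n/\varepsilon^2$ for the chosen $N$. The feasibility of $\bar\vecw^{(t^*)}$ is immediate from the definition of $\bar\vecw^{(t)}$ as the scaling by $\minrow$, and feasibility of $\bar\vecf^{(N)}$ follows by construction from the congestion normalization, so the final bound $P(t^*)\le (1+O(\varepsilon))D(N)$ certifies that both solutions are $(1+O(\varepsilon))$-approximate by weak LP duality. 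The main delicate step is the one-step potential inequality, where the $(1+\varepsilon)^2$ factor must be tracked carefully through the oracle slack, the exponential linearization, and the use of $P(t-1)\ge P(t^*)$; everything else is routine bookkeeping.
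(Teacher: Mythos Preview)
Your proposal is correct and follows essentially the same approach as the paper: the same potential $\Phi(t)=c^T\vecw^{(t)}$, the same one-step upper bound via $e^{\gamma}\le 1+\gamma+\gamma^2$ combined with the oracle slack and the identity $\minrow(A,\vecw^{(t-1)})=\Phi(t-1)/P(t-1)$, the same lower bound $\Phi(N)\ge \exp(\varepsilon\,\cg(\vecf^{(N)}))$ from the closed-form weights, and the same averaging argument to show $\cg(\vecf^{(N)})\ge N/n$. The only cosmetic differences are that you substitute $P(t-1)\ge P(t^*)$ before telescoping rather than after, and you track $(1+\varepsilon)^2$ where the paper writes $(1+3\varepsilon)$; both are $1+O(\varepsilon)$.
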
 

Our proof relies on the estimates of a potential function defined as $\Phi^{(t)} = c^T \vecw^{(t)}= \sum_{i \in [n]} c_i \vecw^{(t)}_i$. 

\begin{lemma} 
We have, on each day $t$, $$\exp(\epsilon \cdot \cg(\vecf^{(t)}))  \leq \Phi^{(t)} \leq n \cdot \exp  \left( \epsilon (1+3\epsilon) \sum_{0 < t' \leq t} \frac{\delta(t')}{P(t'-1)} \right).$$
\end{lemma}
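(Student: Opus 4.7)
The plan is to track the evolution of the potential $\Phi^{(t)} = c^T \vecw^{(t)} = \sum_{i \in [n]} c_i \vecw^{(t)}_i$ step by step and sandwich it. The key observation to exploit is that the update rule $\vecw^{(t)}_i = \vecw^{(t-1)}_i \exp(\epsilon \delta(t) A_{j(t),i}/c_i)$ together with $\vecw^{(0)}_i = 1/c_i$ unfolds into the closed form
\begin{equation*}
 c_i\, \vecw^{(t)}_i \;=\; \exp\Bigl(\epsilon \cdot \tfrac{(\vecf^{(t)\,T} A)_i}{c_i}\Bigr),
\end{equation*}
since $\vecf^{(t)}_{j(t)}$ is incremented by $\delta(t)$ on day $t$ (and nowhere else). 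In particular, $\Phi^{(0)} = n$, which seeds the induction.

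For the lower bound I would simply drop all but the largest term in the sum defining $\Phi^{(t)}$: taking the maximum over $i \in [n]$ of the closed form above yields
\begin{equation*}
\Phi^{(t)} \;\geq\; \max_{i \in [n]} \exp\Bigl(\epsilon \cdot \tfrac{(\vecf^{(t)\,T} A)_i}{c_i}\Bigr) \;=\; \exp\bigl(\epsilon \cdot \cg(\vecf^{(t)})\bigr),
\end{equation*}
using the definition of $\cg$ in \eqref{def:congestion}. This handles the left-hand inequality.

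For the upper bound I would go by induction on $t$. The main analytic step is the inequality $e^{\epsilon x} \leq 1 + \epsilon(1+\epsilon) x$ valid for $x \in [0,1]$. The choice $\delta(t) = \min_i c_i/A_{j(t),i}$ ensures that $\delta(t) A_{j(t),i}/c_i \in [0,1]$ for every $i$, so this inequality applies term-by-term. Expanding $\Phi^{(t)}$ via the update rule gives
\begin{equation*}
\Phi^{(t)} \;\leq\; \Phi^{(t-1)} \;+\; \epsilon(1+\epsilon)\,\delta(t)\, A_{j(t)} \vecw^{(t-1)}.
\end{equation*}
Now I would invoke the oracle guarantee $A_{j(t)} \vecw^{(t-1)} \leq (1+\epsilon)\,\minrow(A, \vecw^{(t-1)})$ and the identity $\minrow(A, \vecw^{(t-1)}) = \Phi^{(t-1)}/P(t-1)$ (immediate from $P(t-1) = c^T \vecw^{(t-1)}/\minrow(A,\vecw^{(t-1)})$). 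Substituting, using $(1+\epsilon)^2 \leq 1+3\epsilon$, and then bounding $1+y \leq e^y$, yields
\begin{equation*}
\Phi^{(t)} \;\leq\; \Phi^{(t-1)} \exp\Bigl(\epsilon(1+3\epsilon)\, \tfrac{\delta(t)}{P(t-1)}\Bigr).
\end{equation*}
Telescoping from $t=1$ to $t$, with $\Phi^{(0)} = n$, produces the claimed upper bound.

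The only place where I expect any care is the uniform bound $\delta(t) A_{j(t),i}/c_i \leq 1$ that licences the linearization of the exponential; everything else is a routine telescoping. If $\epsilon$ is not assumed sufficiently small, the constant $3$ in $(1+\epsilon)^2 \leq 1+3\epsilon$ should be tracked more carefully, but since the theorem statement allows the factor $1+3\epsilon$ this is harmless.
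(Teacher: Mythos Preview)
Your proposal is correct and essentially identical to the paper's proof: both obtain the lower bound from the closed form $c_i\vecw^{(t)}_i=\exp(\epsilon(\vecf^{(t)\,T}A)_i/c_i)$ and the upper bound by linearizing $e^{\epsilon x}\le 1+\epsilon(1+\epsilon)x$ (the paper phrases this as $e^\gamma\le 1+\gamma+\gamma^2$), applying the oracle guarantee together with $\minrow(A,\vecw^{(t-1)})=\Phi^{(t-1)}/P(t-1)$, and telescoping. The only cosmetic difference is that the paper passes through $\Phi^{(t)}\le\Phi^{(t-1)}\exp(\cdots)$ before substituting, whereas you substitute first and then exponentiate via $1+y\le e^y$.
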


\begin{proof} 
First we show the lower bound of $\Phi^{(t)}$. Fix column $i \in [n]$ such that $\frac{ ((\vecf^{(t)})^T A)_i  }{c_i} = \cg(\vecf^{(t)})$. Notice that the value of $c_i \vecw^{(t)}_i$ is equal to: 
\[\exp\left( \frac{\epsilon}{c_i} \cdot \sum_{t' \leq t} \delta(t') A_{j(t'), i} \right). \]
 The term $\delta(t') A_{j(t), i}$ is exactly the increase in $((\vecf^{(t)})^T A)_i$ at time $t$, so we have that 
$$c_i \vecw^{(t)}_i \geq  \exp\left( \frac{\epsilon}{c_i} \cdot ((\vecf^{(t)})^T A)_i  \right) = \exp(\epsilon \cdot \cg(\vecf^{(t)})),  $$ as desired. 

Next, we prove the upper bound on the potential function. 
Observe that\footnote{In particular, we use the inequality $e^\gamma \leq 1 + \gamma + \gamma^2$ for $\gamma \in [0,1)$ and the fact that the ratio $\delta(t) A_{j(t),i}/c_i$ is at most $1$.} $\vecw^{(t)}_i \leq \vecw_i^{(t-1)} (1+ \epsilon (1+\epsilon)\cdot \frac{\delta(t) A_{j(t),i } }{c_i})$. 
This formula shows the increase of potential at time $t$ to be at most 
$$\Phi^{(t)} \leq \Phi^{(t-1)} + \sum_{i \in [n]} \epsilon (1+\epsilon) \cdot \delta(t) A_{j(t), i}\vecw^{(t-1)}_i \leq \Phi^{(t-1)} \exp \left( \frac{\epsilon(1+\epsilon) \delta(t)}{\Phi^{(t-1)}} \cdot \sum_{i \in [n]}  A_{j(t), i} \vecw^{(t-1)}_i\right)$$ 
Notice that $\sum_{i \in [n]} A_{j(t),i} \vecw^{(t-1)}_i = (A_{j(t)} \vecw^{(t-1)})$ is at most $(1+\epsilon) \minrow(A, \vecw^{(t-1)})$ by the choice of the update rules. The term reduces further to: 
\[\Phi^{(t)} \leq \Phi^{(t-1)} \exp \left(\frac{\epsilon(1+\epsilon)^2 \delta(t)}{P(t-1)} \right)\leq \Phi^{(t-1)} \exp\left( \frac{\epsilon(1+3 \epsilon) \delta(t)}{P(t-1)}. \right)\]
By applying the fact that $\Phi^{(0)} = n$ and the above fact iteratively, we get the desired bound. %
\end{proof} 

Finally, we argue that the lemma implies Theorem~\ref{thm:potential}. Consider the last day $N$. Taking logarithms on both sides gives us: 
\[\cg(\vecf^{(N)} ) \leq \frac{\ln n}{\epsilon} + (1+3\epsilon) \sum_{0 < t' \leq N} \frac{\delta(t')}{P(t'-1)} \leq \frac{\ln n}{\epsilon} + (1+3\epsilon) \frac{||\vecf^{(N)}||_1}{P(t^*)}\]  
The second inequality uses the fact that $||f^{(N)}||_1 = \sum_{t'} \delta(t')$ and that $P(t^*) \leq P(t)$ for all $t$.

\begin{claim} 
$\cg(\vecf^{(N)}) \geq N/n$, so this implies that $\cg(\vecf^{(N)}) \geq \ln n/ \epsilon^2$ when $N \geq n \ln n/ \epsilon^2$.  
\end{claim}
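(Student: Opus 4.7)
The plan is to unfold the definitions of $\cg$ and $\vecf^{(N)}$ and then apply a simple averaging argument whose punchline uses the specific choice of the step size $\delta(t)$. Since $\vecf^{(N)}$ is built by $N$ incremental updates of the form $\vecf^{(t)} \leftarrow \vecf^{(t-1)} + \delta(t)\, e_{j(t)}$ starting from $\mathbf{0}$, the $i$-th entry of $(\vecf^{(N)})^T A$ is exactly $\sum_{t=1}^N \delta(t)\, A_{j(t),i}$, and the congestion is the maximum over $i$ of these quantities, each normalized by $c_i$.

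First, I would invoke the elementary bound $\max_i X_i \ge \tfrac{1}{n}\sum_i X_i$ with $X_i = \tfrac{1}{c_i}\sum_t \delta(t)\, A_{j(t),i}$, and swap the order of summation to rewrite
$$\cg(\vecf^{(N)}) \;\ge\; \frac{1}{n}\sum_{t=1}^N \delta(t) \sum_{i\in[n]} \frac{A_{j(t),i}}{c_i}.$$

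The key step is to bound the inner sum from below using the definition of the step size. Since $\delta(t)=\min_{i\in[n]} \tfrac{c_i}{A_{j(t),i}}$ (with the convention $c_i/0=\infty$), taking reciprocals gives $\max_{i\in[n]} \tfrac{A_{j(t),i}}{c_i} = 1/\delta(t)$; because every term of $\sum_i \tfrac{A_{j(t),i}}{c_i}$ is nonnegative, this sum is bounded below by its own maximum, namely $1/\delta(t)$. Plugging this in, the $\delta(t)$ factors cancel and we obtain $\cg(\vecf^{(N)}) \ge N/n$. The second half of the claim then follows by substituting $N \ge n\ln n/\epsilon^2$.

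There is no real obstacle here; the only mildly subtle point is the convention about $A_{j(t),i}=0$ entries (which contribute nothing to either side and are therefore harmless) and recognizing that the relevant lower bound on the nonnegative sum $\sum_i A_{j(t),i}/c_i$ is simply its maximum, which is precisely $1/\delta(t)$ by construction of the step size. Once that observation is made, the claim falls out in two lines.
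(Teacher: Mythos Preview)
Your proof is correct and essentially identical to the paper's. The paper phrases it as ``the sum $\sum_{i}\frac{((\vecf^{(t)})^T A)_i}{c_i}$ increases by at least one on each day'' by singling out the bottleneck column $i^*$ where $\delta(t)A_{j(t),i^*}=c_{i^*}$, and then bounds $\cg$ below by the average; unfolding the increment and swapping summations as you do yields the very same inequality $\delta(t)\sum_i A_{j(t),i}/c_i \ge 1$.
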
 
\begin{proof}
We will argue that $\sum_{i \in [n]} \frac{\vecf^{(t)}A}{c_i}$ increases by at least one on each day. Since this sum is at most $n \cg(\vecf^{(t)})$, we have the desired result. 
To see the increase, let $i$ be the column that defines $\delta(t)$, that is $i = \arg \min_{i \in [n]} c_i/ A_{j(t), i}$. Notice that $((\vecf^{(t+1)})^T A)_i = ((\vecf^{(t)})^T A)_i+  \delta(t) A_{j(t), i} \geq  ((\vecf^{(t)})^T A)_i+ c_{i}$. This shows an increase of one in the above sum. 
\end{proof} 

Plugging in this term, we have that: 
\[\cg(\vecf^{(N)}) \leq \epsilon \cg(\vecf^{(N)}) + (1+3\epsilon) \frac{||\vecf^{(N)}||_1}{P(t^*)} \]
This implies that $P(t^*) \leq (1+6 \epsilon) D(N)$. 

\subsection{Proof of \Cref{lem:KC for box}}

Let $x$ be a feasible solution $A^{kc} x \geq 1$. Consider $x'_i = \min (x_i,1)$ for each $i \in [n]$. 
We claim that $x'$ satisfies $A x' \geq \kappa$. Consider the constraint $A_j x' \geq \kappa_j$. Let $F =\{i \in \supp(A_j): x_i > 1\}$. If $|F| \geq \kappa_j$, it would imply that $A_j x' \geq \kappa_j$ and we are done. Otherwise, we have $|F| \leq \kappa_j-1$, and the KC constraints guarantee that 
\[\sum_{i \in \supp(A_j)} x'_i = \sum_{i \in \supp(A_j) \setminus F} x_i + |F| \geq \kappa_j  \]

Conversely, let $x$ be a feasible solution $A x \geq \kappa, x \in [0,1]^n$. 
Consider any KC constraint: For any $j \in [m]$ and $F \subseteq \supp(A_j), |F| \leq \kappa_j-1$
\[\sum_{i \in \supp(A_j)\setminus F} x_i = \sum_{i \in \supp(A_j)} x_i - \sum_{i \in F} x_i \geq \kappa_j -|F|\] 
This implies that $x$ itself is feasible for $A^{kc} x \geq 1$.

\subsection{Proof of \Cref{thm:fast LP solver}} \label{sec:fast lp solver full}

By \Cref{lem:KC for box}, it is enough to solve kECSS LP with KC inequalities.

\subsubsection{Interpretation of MWU Framework}  \label{sec:interpretation}

We interpret the analysis in \Cref{sec:mwu proof} in the language of graphs. An interesting feature is that the dual variables are only used in the analysis; it is not used in the implementation at all.

We use $\wreal$ to be the weights that the primal LP maintains. Let $\{(C^{(t)},F^{(t)}, c_{\min}^{(t)})\}_{t \leq T}$ a sequence of normalized free cuts $(C^{(t)},F^{(t)})$ and the value $c_{\min}^{(t)} = \min_{e \in C^{(t)}\setminus F^{(t)}} c(e)$ obtained by the MWU algorithm up to day $T$. For each edge $e$, we define congestion $\cg(e) = \frac{1}{c(e)} \cdot \sum_{t \leq T \colon e \in C^{(t)} \setminus F^{(t)}} c_{\min}^{(t)}$.  The congestion of the graph is denoted as $\cg(G) = \max_{e \in E} \cg(e)$. Note that $\cg(G)$ is precisely the same as $\cg$ in \Cref{def:congestion} when we restrict the LP instance to kECSS LP. Furthermore,  by definition, we have

\begin{align} \label{eq:we and cong}
\forall e,  \wreal(e) \leq \frac{1}{c(e)} \cdot \exp(\eps \cg(G)) 
\end{align}
 
Since the running time of the Range Punisher depends on the change of weights, we need to ensure that the total change (the sum-of-log (SOL) terms) is at most near-linear. We bound the SOL term using a slightly different stopping criteria: Observe that the  analysis rely crucially on the fact that congestion $\cg(G) \geq \frac{1}{\eps^2} \ln m$. We could also use $\cg(G) \geq \frac{1}{\eps^2} \ln m$ as a stopping condition (instead of running up to $O(\frac{1}{\eps^2} m \log m)$ days), and the stopping condition implies the number of days is at most $O(\frac{1}{\eps^2} m \log m)$.

We can infer $\cg(G)$ from the weight function $\wreal$ by the following.  Let $ \phireal(e) :=  \frac{1}{\eps} \cdot \ln (c(e) \cdot \wreal(e))$ for all $e \in E$. By definition of $\cg(e)$, we have $\wreal(e) = \frac{1}{c(e)} \cdot \exp (\eps \cg(e))$, and so $\phireal(e) = \cg(e)$. Therefore, we have
\begin{align}\label{eq:relation to congestion}
\norm{\phireal}_{\infty} = \cg(G).
\end{align}

\subsubsection{Algorithm}

For the implementation, recall that we denote  $\wreal$ to be the real weights on MWU framework, and $\vecw$ to be the approximate weight that the data structure maintains. 

We describe extra bookkeeping from \textsc{RangePunisher} to construct to the final solution. First, it outputs a pair of weight function $(\wreal,\vecw^{\textsol})$ where $\wreal$ is the weights at the end of $\textsc{RangePunisher}$ and $\vecw^{\textsol}= \frac{\vecwinit}{\val_{\vecwinit}(C,F)}$ where $\vecwinit$ is the initial weight function for \textsc{RangePunisher}, and $(C,F)$ is the first normalized mincut obtained during the range punisher.

 Since the range punisher maintains approximate weights, we next explain how to detect the stopping condition using approximate weights. We want to stop as soon as $\norm{\phireal}_{\infty} > \frac{1}{\eps^2}  \cdot \ln m$. Since we can only keep the approximate weights, we can only detect the approximate value with $O(1/\eps)$-additive error as follows.  First, it keeps track of $\phi(e) :=  \frac{1}{\eps} \cdot \ln (c(e) \cdot \wapx(e))$ for all $e \in E$, and early stop as soon as $\norm{\phi}_{\infty} > \frac{1}{\eps} \cdot \ln m$. Since $\vecw$ is $(1+\eps)$-approximation to the real weight $\wreal$, it implies that with respect to weight right before the stopping day, $\norm{\phireal}_{\infty} \leq \frac{1}{\eps} \cdot \ln m + O(\eps^{-1}) = O( \frac{1}{\eps} \ln m)$.

The algorithm for LP solver is described in \Cref{alg:fast LP solver}.

\begin{algorithm}[H]
\KwIn{An undirected graph $G = (V,E)$, a cost function $c$, $\eps \in (0,1)$}
\KwOut{A fractional solution $\wsol$.}
\BlankLine

$\forall e \in E, \wreal(e) \gets \frac{1}{c(e)}$ \;
Let $\tilde \lambda$ be an $(1+\eps)$-approximation to $\opt_{\wreal}$\;
$\lambda \gets \frac{\tilde \lambda }{1+\eps}$\;
$\wbest \gets \frac{\wreal}{\tilde \lambda}$ \; 
\Repeat{$\exists$ \normalfont{a} day such that $\norm{\phi}_{\infty} > \frac{1}{\eps^2} \cdot \ln m$ (and early terminate)}
{
  $(\wreal, \wsol) \gets \textsc{RangePunish}(G,\wreal,\lambda)$\; 
  $\lambda \gets \lambda (1+\eps)$\;
 \lIf{$c^T \wbest > c^T\wsol$}{ $\wbest \gets \wsol$.}
}
%
%
%
%
%
%
%
% 
% 
% 
% 
% 
% 
% 
% 
% 

%}
  \Return{$\vecw^{\textbest}$.}
\caption{\textsc{kECSSLPSolver}($G,c,\eps$)}
\label{alg:fast LP solver}
\end{algorithm}

\subsubsection*{Correctness}
We first show that \Cref{alg:fast LP solver} punish a sequence of $(1+O(\eps))$-approximate normalized free cuts with respect to $\wreal$ where the weight update rule is defined in the \textsc{PunishMin} operations. Initially, $\wreal(e) = \frac{1}{c(e)}$ for all $e \in E$. By definition, $\opt_{\wreal} \in [\tilde \lambda/(1+\eps), \tilde \lambda)$ and thus $\opt_{\wreal} \in [\lambda, (1+\eps)\lambda)$. For each iteration where $\opt_{\wreal} \in [\lambda, (1+\eps)\lambda)$, the range punisher (\Cref{lem:fast range punisher}) keeps punishing $(1+O(\eps))$-approximate normalized free cuts until $\opt_{\wreal} \geq (1+\eps)\lambda$.

By discussion in \Cref{sec:interpretation}, and \Cref{thm:potential}, there must be a day $t^*$ such that in some range such that $\frac{w ^{(t^*)}}{\val_{w^{(t^*)}}(C^{(t^*)},F^{(t^*)})}$ is $(1+O(\eps))$-approximation to the LP solution where $w^{(t^*)}$ is $\wreal$ at day $t^*$.  Since each normalized cut value is within $(1+\eps)$ factor from any other cut inside the same range, we can easily show that the first cut in the range is $(1+\eps)$-competitive with \textit{any} cut in the range. Therefore, \Cref{alg:fast LP solver} outputs $(1+O(\eps))$-approximate solution to kECSS LP. 

\subsubsection*{Running Time} 
 By \Cref{cor:normalized mincut}, the running time for computing the value $\tilde \lambda$ is $\ot(\frac{1}{\eps}\cdot m)$. 
 By \Cref{lem:fast range punisher}, the total running time is $$ \widetilde{O}(m \ell + K+ \frac{1}{\eps} \cdot \sum_{e \in E}\log (\frac{\wreal(e)}{\vecwinit(e)})), $$ where $\ell $ is the number of iterations, and $K$ is the total number of normalized free cuts punished (including all iterations),  $\wreal$ is the final weight at the end of the algorithm, and $\vecwinit(e) = 1/c(e)$ for all $e$.

 Since we early stop as soon as $\norm{\phi}_{\infty} > \frac{1}{\eps^2} \cdot \ln m$,  it means that the day right before we stop we have $\norm{\phireal}_{\infty} = O(\frac{1}{\eps^2} \cdot \ln m)$. By the stopping condition,

 \begin{align} \label{eq:stopping cg}
 \cg(G) \overset{(\ref{eq:relation to congestion})}{=} \norm{\phireal}_{\infty} = O(\frac{1}{\eps^2} \cdot    \ln m). 
 \end{align}

 The following three claims finish the proof.
 
 \begin{claim}
    $\ell = O(\frac{1}{\eps^2} \log m)$.
  \end{claim}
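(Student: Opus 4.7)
The plan is to bound $\ell$ by comparing the geometric growth of $\lambda$ across iterations to the largest value $\opt_{\wreal}$ can reach before the early-stopping rule fires. Since the outer loop updates $\lambda \gets (1+\eps)\lambda$ at the end of each iteration and starts from $\lambda_1 = \tilde\lambda/(1+\eps)$, at the beginning of the $i$-th iteration we have $\lambda_i = \tilde\lambda(1+\eps)^{i-2}$. By the guarantee of \textsc{RangePunish} in \Cref{lem:fast range punisher}, upon successful completion of the $i$-th call (i.e., without the early termination firing) the weight function $\wreal$ satisfies $\opt_{\wreal} \geq (1+\eps)\lambda_i$. Hence after $\ell - 1$ full iterations we obtain the lower bound $\opt_{\wreal} \geq \tilde\lambda(1+\eps)^{\ell - 2}$.

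Next I would use the early-stopping rule to upper bound $\opt_{\wreal}$ at termination. Since the loop halts as soon as $\norm{\phi}_\infty > \log m/\eps^2$, and $\phi$ tracks $\phireal$ up to an $O(1/\eps)$ additive error (by \Cref{inv:new approx weight} applied at the end of a range-punisher call), identity (\ref{eq:relation to congestion}) yields $\cg(G) = \norm{\phireal}_\infty = O(\log m/\eps^2)$ at termination. Substituting this into (\ref{eq:we and cong}) gives the pointwise bound $\wreal(e) \leq \exp(\eps \cdot \cg(G))/c(e) = m^{O(1/\eps)}/c(e) = m^{O(1/\eps)} \cdot \vecwinit(e)$ for every edge $e$.

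To convert the pointwise weight bound into a bound on $\opt_{\wreal}$, I would fix the optimal pair $(C^*,F^*)$ attaining $\opt_{\vecwinit}$ and use
\[
\opt_{\wreal} \leq \val_{\wreal}(C^*,F^*) \leq m^{O(1/\eps)}\cdot \val_{\vecwinit}(C^*,F^*) = m^{O(1/\eps)} \cdot \opt_{\vecwinit} \leq m^{O(1/\eps)} \cdot \tilde\lambda,
\]
where the last step uses the definition of $\tilde\lambda$ as a $(1+\eps)$-approximation of $\opt_{\vecwinit}$.

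Combining the two bounds yields $(1+\eps)^{\ell - 2} \leq m^{O(1/\eps)}$, and taking logarithms with $\ln(1+\eps) \geq \eps/2$ for $\eps \in (0,1)$ gives $\ell = O(\log m / \eps^2)$ as claimed. The only mild subtlety is making sure the early-termination case is handled: if the stopping condition triggers inside the $\ell$-th call, the lower bound $\opt_{\wreal} \geq (1+\eps)\lambda_{\ell-1}$ from the previously completed iteration still applies, so the same calculation goes through with $\ell$ shifted by one.
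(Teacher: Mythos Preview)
Your proposal is correct and follows essentially the same approach as the paper: both bound the per-edge weight growth by $m^{O(1/\eps)}$ via the congestion bound \eqref{eq:we and cong} and \eqref{eq:stopping cg}, translate this into a bound on how far $\opt_{\wreal}$ (equivalently, the cut values) can grow relative to its initial value, and then use the geometric growth of $\lambda$ by factor $(1+\eps)$ per iteration to conclude $\ell = O(\frac{1}{\eps^2}\log m)$. The only cosmetic difference is that the paper anchors at the value $\lambda_0$ of the first punished cut while you anchor at $\tilde\lambda$; these differ by at most a $(1+O(\eps))$ factor and yield the same bound.
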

  \begin{proof}
    Initially, we have $\opt_{\wreal} \in [\lambda, (1+\eps)\lambda)$.  By \Cref{eq:we and cong}, we have $\wreal(e) \leq \frac{1}{c(e)} \cdot \exp(\eps \cg(G)) \overset{(\ref{eq:stopping cg})}{=} O(\frac{1}{c(e)} \cdot m^{O(\frac{1}{\eps})})$ for all $e \in E$.  Let $(C^{(0)}, F^{(0)})$ be the first normalized free cut that we punish. Let $\lambda_0$ be the value of that cut. We have that each edge is increase by at most a factor of $m^{O(\frac{1}{\eps})}$, and thus the cut at day right before the stopping happens must be smaller than $\lambda_0 \cdot m^{O(\frac{1}{\eps})}$. Therefore, the number of ranges is $\log_{1+\eps}(m^{O(\frac{1}{\eps})}) = O(\frac{1}{\eps^2} \log m)$.  
  \end{proof}

  \begin{claim}
   $K = O( \frac{1}{\eps^2} m\log m)$.
 \end{claim}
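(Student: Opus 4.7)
The plan is to bound $K$ via a simple double-counting argument using the congestion function $\cg$ defined in \Cref{sec:interpretation}. Recall that for each punished normalized free cut $(C^{(t)},F^{(t)})$ with minimum capacity $c_{\min}^{(t)} = \min_{e \in C^{(t)} \setminus F^{(t)}} c(e)$, the weight update corresponds (in additive form) to adding $c_{\min}^{(t)}$ to $\vreal(e)$ for every $e \in C^{(t)} \setminus F^{(t)}$. By the definition $\cg(e) = \tfrac{1}{c(e)} \sum_{t: e \in C^{(t)} \setminus F^{(t)}} c_{\min}^{(t)}$, each punishment $t$ contributes $c_{\min}^{(t)}/c(e) \leq 1$ to $\cg(e)$ for every $e \in C^{(t)} \setminus F^{(t)}$.

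The key observation is that for every $t$, by definition of $c_{\min}^{(t)}$, there exists at least one edge $e^{(t)} \in C^{(t)} \setminus F^{(t)}$ with $c(e^{(t)}) = c_{\min}^{(t)}$, and for this edge the contribution to $\cg(e^{(t)})$ is exactly $1$. Summing these contributions over all $K$ punishments, each punishment accounts for an increase of at least $1$ in $\sum_{e \in E} \cg(e)$, so
\[
K \;\leq\; \sum_{e \in E} \cg(e) \;\leq\; m \cdot \cg(G).
\]

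It remains to invoke the early-stopping guarantee. The algorithm terminates as soon as $\norm{\phi}_{\infty} > \tfrac{1}{\eps^2} \ln m$, and because $\vecw$ is a $(1+\eps)$-approximation to $\wreal$ (by \Cref{inv:new approx weight}), the true potential $\norm{\phireal}_{\infty}$ exceeds the approximate one by at most an $O(1/\eps)$ additive term. Hence at the stopping day we have $\norm{\phireal}_{\infty} = O(\tfrac{1}{\eps^2}\log m)$, which by \eqref{eq:relation to congestion} gives $\cg(G) = O(\tfrac{1}{\eps^2}\log m)$, exactly \eqref{eq:stopping cg}. Substituting into the bound above yields $K = O(\tfrac{1}{\eps^2}\, m \log m)$, as required.

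The argument is essentially routine once the definitions in \Cref{sec:interpretation} are unpacked; the only mild subtlety is ensuring that the stopping criterion stated in terms of the lazy potential $\phi$ translates into a bound on $\cg(G) = \norm{\phireal}_{\infty}$, but this is handled by \Cref{inv:new approx weight} as above. No additional machinery is needed.
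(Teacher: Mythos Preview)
Your proposal is correct and follows essentially the same argument as the paper: both identify the bottleneck edge $e^{(t)}$ with $c(e^{(t)}) = c_{\min}^{(t)}$ whose congestion increases by exactly $1$ at each punishment, deduce $K \leq m \cdot \cg(G)$, and then invoke the stopping-criterion bound \eqref{eq:stopping cg}. Your write-up is slightly more explicit about the double-counting via $\sum_{e}\cg(e)$ and about why the lazy potential $\phi$ controls $\phireal$, but the core idea is identical.
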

 \begin{proof}
  Observe that for each normalized free cut $(C,F)$ that we punish there exists a bottleneck edge $e \in C \setminus F$ whose $c(e)$ is minimum. By the weight update rule, the congestion is this edge is increased by exactly $1$.  Therefore, the number of normalized free cuts is at most $O(m\cdot \cg(G)) \overset{(\ref{eq:stopping cg})}{=} O(\frac{1}{\eps^2} m\log m)$.  
 \end{proof}

   \begin{claim}
   For each $e$, $\log (\frac{\wreal(e)}{\vecwinit(e)})) = O(\frac{1}{\eps} \log m)$. 
 \end{claim}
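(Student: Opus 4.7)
The plan is to derive this bound as an immediate combination of two facts already established in the preceding discussion: the relationship between $\wreal(e)$ and the congestion $\cg(G)$ (equation~\eqref{eq:we and cong}), and the bound on $\cg(G)$ implied by the early-stopping criterion (equation~\eqref{eq:stopping cg}).

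First, I would recall that $\vecwinit(e) = 1/c(e)$ by the initialization in \Cref{alg:fast LP solver}, and that equation~\eqref{eq:we and cong} gives
\[
\wreal(e) \leq \frac{1}{c(e)} \exp\bigl(\epsilon \cdot \cg(G)\bigr).
\]
Dividing by $\vecwinit(e) = 1/c(e)$ and taking the logarithm yields
\[
\log\!\left(\frac{\wreal(e)}{\vecwinit(e)}\right) \leq \epsilon \cdot \cg(G).
\]

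Next, I would invoke the stopping-criterion estimate $\cg(G) = O(\epsilon^{-2} \log m)$ from~\eqref{eq:stopping cg}, which was justified via $\cg(G) = \|\phireal\|_\infty$ and the early-termination rule $\|\phi\|_\infty > \epsilon^{-2}\ln m$ combined with the additive $O(1/\epsilon)$ slack between $\phi$ and $\phireal$. Substituting into the inequality above gives
\[
\log\!\left(\frac{\wreal(e)}{\vecwinit(e)}\right) \leq \epsilon \cdot O\!\left(\frac{1}{\epsilon^2}\log m\right) = O\!\left(\frac{1}{\epsilon}\log m\right),
\]
which is exactly the claimed bound. There is no genuine obstacle here; the step that merits the most care is simply confirming that~\eqref{eq:we and cong} holds at the final day of the algorithm (including the day on which early termination is triggered), which follows because the invariant $\wreal(e) = \tfrac{1}{c(e)}\exp(\epsilon \cg(e))$ is maintained by the update rule of \textsc{PunishMin} throughout the execution, independently of whether the loop terminates normally or via the early-stopping check.
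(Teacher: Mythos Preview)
Your proposal is correct and follows essentially the same approach as the paper's proof: recall $\vecwinit(e)=1/c(e)$, apply \eqref{eq:we and cong} to get $\log(\wreal(e)/\vecwinit(e))\leq \eps\,\cg(G)$, and then invoke \eqref{eq:stopping cg} to conclude. The paper compresses this into a single displayed chain of inequalities, but the logic is identical.
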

 \begin{proof}
   Recall that the initial weight $\vecw^{\text{init}}(e) = 1/c(e)$ for all $e$. Therefore,
   $$ \forall e \in E,   \log (\frac{\wreal(e)}{\vecwinit(e)}))  \overset{(\ref{eq:we and cong})}{\leq}  \eps \cg(G) \overset{(\ref{eq:stopping cg})}{\leq} O(\frac{1}{\eps} \cdot \log m).$$
 \end{proof}

\end{document}